\pdfoutput=1
\documentclass[11pt]{article}
\usepackage[margin = 1 in]{geometry}
\usepackage{mathtools}
\usepackage{amsthm,amsmath}
\usepackage{amssymb}
\usepackage{stmaryrd}
\usepackage{algorithm}
\usepackage[noend]{algpseudocode}
\usepackage{caption}
\usepackage{relsize}
\usepackage{lineno}
\usepackage{graphicx}
\newtheorem{theorem}{Theorem}
\newtheorem{lemma}{Lemma}
\newtheorem*{remark}{Remark}
\newtheorem{definition}{Definition}
\newtheorem{cor}{Corollary}
\newtheorem{claim}{Claim}
\newtheorem{case}{Case}
\newtheorem{assumption}{Assumption}

\usepackage{caption}
\captionsetup[table]{position=bottom}

\newcommand{\direach}{$S \times V$-direachability~}
\newcommand{\sigmalimit}{\sigma^*}
\usepackage{hyperref}

\title{Faster Multi-Source Directed Reachability via Shortcuts and Matrix Multiplication}
\author{Michael Elkin \\ Ben-Gurion University of the Negev, Israel \\ \texttt{elkinm@bgu.ac.il }  \and Chhaya Trehan \\University of Bristol, UK\\ \texttt{chhaya.trehan@bristol.ac.uk} }
 \begin{document}
\maketitle
\begin{abstract}
Given an $n$-vertex $m$-edge directed graph $G = (V,E)$ and a designated source vertex $s \in V$, let $\mathcal V_s \subseteq V$ be a subset of vertices 
reachable from $s$ in $G$. Given a subset $S \subseteq V$ of $|S| = n^{\sigma}$, for some $0 < \sigma \le 1$, designated sources,
the $S \times V$-\emph{direachability problem} is to compute the sets $\mathcal V_s$ for every $s \in S$.
Known naive algorithms for this problem either run a BFS/DFS exploration separately from every source, and as a result require $O(m \cdot n^{\sigma})$ time, 
or compute the transitive closure $TC(G)$ of the graph $G$ in $\tilde O(n^{\omega})$ time, where $\omega < 2.371552\ldots$ is the matrix multiplication exponent.
Hence, the current state-of-the-art bound for the problem on graphs with $m = \Theta(n^{\mu})$ edges in $\tilde O(n^{\min \{\mu + \sigma, \omega \}})$,
which may be as large as $\tilde O(n^{\min \{ 2 + \sigma, \omega\}})$.

Our first contribution is an algorithm with running time $\tilde O(n^{1 + \tiny{\frac{2}{3}} \omega(\sigma)})$ for this problem on general graphs, where $\omega(\sigma)$ is the
rectangular matrix multiplication exponent (for multiplying an
$n^{\sigma} \times n$  matrix by an $n \times n$ matrix). Using current state-of-the-art estimates on $\omega(\sigma)$, our exponent is better than
$\min \{2 + \sigma, \omega \}$ for $\tilde \sigma \le \sigma \le 0.53$, where $1/3 < \tilde \sigma < 0.3336$ is a universal constant.
For graphs with $m = \theta(n^{\mu})$ edges, for $1 \le \mu \le 2$, our algorithm has running time $\tilde O(n^{\tiny{\frac{1 + \mu + 2 \cdot \omega(\sigma)}{3}}})$. 
For every $\tilde \sigma \le \sigma < 1$, there exists a non-empty interval
$I_{\sigma} \subseteq [1,2]$, so that our running time is better than the state-of-the-art one whenever the input graph has $m = \Theta(n^{\mu})$ edges with $\mu \in I_{\sigma}$.

Our second contribution is a sequence of algorithms $\mathcal A_0, \mathcal A_1, \mathcal A_2, \ldots$ for the $S \times V$-direachability problem, where $\mathcal A_0$
is our algorithms that we discussed above. We argue that under a certain assumption that we introduce,
 for every $\tilde \sigma \le \sigma < 1$, there exists a sufficiently large index $k = k(\sigma)$ so that $\mathcal A_k$ improves upon the current
state-of-the-art bounds for $S \times V$-direachability with $|S| = n^{\sigma}$, in the densest regime $\mu =2$. (Unconditionally, $\mathcal A_0$ satisfies this for 
$\tilde \sigma \le \sigma \le 0.53$.)
We show that to prove this assumption, it is sufficient to devise an algorithm that computes a rectangular max-min matrix product roughly as efficiently as ordinary 
$(+, \cdot)$ matrix product.

Our algorithms heavily exploit recent constructions of directed shortcuts~\cite{KoganParterDiShortcuts1, KoganParterDiShortcuts2}.

\end{abstract}

\tableofcontents

\section{Introduction}\label{sec:Intro}
\subsection{Our Unconditional Results}\label{sec:UncondResults}
Consider an $n$-vertex directed graph $G = (V,E)$ and a pair of vertices $u, v \in V$.
We say that $v$ \emph{is reachable from} $u$ (denoted $u \leadsto v$) iff there exists a directed path  in $G$ that starts at $u$ and ends in $v$.
Given a subset $S \subseteq V$ of $n^{\sigma}$ designated vertices called \emph{sources}, for some $0 \le \sigma \le 1$, the $S \times V$-\emph{reachability matrix} $R =R(S, V)$ is a rectangular Boolean matrix of dimension $n^{\sigma} \times n$, in which for every pair $(s,v) \in S \times V$, the respective entry $R[s,v]$ is equal to $1$ iff $v$ is \emph{reachable} from $s$ in $G$.
The $S \times V$-\emph{direachability problem} is, given a directed graph $G = (V,E)$, and a subset $S \subseteq V$ of sources, to compute the reachability matrix $R(S,V)$.

To the best of our knowledge, despite the fundamental nature of this problem, it was not explicitly studied so far.
There are two existing obvious solutions for it.
The first one runs a BFS (or DFS) from every source $s \in S$, and as a result requires $O(m \cdot n^{\sigma})$ time.
We call this algorithm \emph{BFS-based}.
For dense graphs this running time may be as large as $O(n^{2 + \sigma})$.
The second solution (which we call \emph{TC-based}) uses fast square matrix multiplication to compute 
the transitive closure $TC(G)$ in $\tilde O(n^{\omega})$ time, where $\omega \le 2.371552\ldots$ is the \emph{matrix multiplication exponent}.
The upper bound $\omega \le 2.371552\ldots$ is due to~\cite{VassilevskaBestRectangular} (see also~\cite{gall2023faster, LeGallBestRectangular, LeGallBestSquare,COPPERSMITH}).\footnote{The expression $O(n^{\omega})$ is the running time of the state-of-the-art algorithm for computing a matrix product of two square matrices of dimensions $n \times n$.}

In this paper we employ recent breakthroughs due to Kogan and Parter~\cite{KoganParterDiShortcuts1, KoganParterDiShortcuts2} concerning directed shortcuts, 
and devise an algorithm that outperforms the two existing solutions (BFS-based and TC-based) for computing $S \times V$-reachability matrix 
whenever $\tilde \sigma \le \sigma \le 0.53$.
(Recall that $\sigma = \log_{n} |S|$. Here $1/3 < \tilde \sigma < 0.3336$ is a universal constant.)
The running time of our algorithm is $\tilde O(n^{1 + \tiny{ \frac{2}{3} }\omega(\sigma)})$, where $\omega(\sigma)$
is \emph{rectangular matrix multiplication exponent}, i.e., $O(n^{\omega(\sigma)})$ is the running time of the state-of-the-art algorithm for
 multiplying a rectangular matrix with dimensions $n^{\sigma} \times n$ by a square matrix of dimensions $n \times n$.
See Table~\ref{tab:tableLeGall} for the values of this exponent (due to~\cite{VassilevskaBestRectangular}),
and Table~\ref{tab:tableComparisonDense} for the comparison of running time of our algorithm 
with that of previous algorithms for $S \times V$-direachability problem in the range 
$\tilde \sigma \le \sigma \le 0.53$ (in which we improve the previous state-of-the-art). 
The largest gap between our exponent and the previous state-of-the-art is for $0.37 \le \sigma \le 0.38$. 
Specifically, our exponent for $\sigma = 0.38$ is $2.33751$, while the state-of-the-art is $2.371552$.

Moreover, we show that our algorithm improves the state-of-the-art solutions for $S \times V$-direachability problem 
in a much wider range of $\sigma$ on sparser graphs. 
Specifically, we show that for every $\tilde \sigma \le \sigma < 1$, there exists a non-empty interval $I_{\sigma} \subset [1,2]$ of values $\mu$,
so that for all input graphs with $m = \Theta(n^{\mu})$ edges, our algorithm outperforms both the BFS-based and TC-based
solutions. The running time of our algorithm on such graphs is $\tilde O(n^{ \tiny{\frac{1 + \mu + 2\cdot \omega(\sigma)}{3}}})$.

We provide some sample values of these intervals in Table~\ref{tab:table1}. 
See also Table~\ref{tab:tableComparisonMu} for the comparison between our new bounds for $S \times V$-direachability on sparser graphs
 and the state-of-the-art ones.
For example, for $\sigma = 0.5$ (when $|S| = \sqrt n$), the interval is $I_{\sigma} = (1.793, 2]$, i.e., our algorithm improves
the state-of-the-art solutions as long as $\mu > 1.793$.
Specifically, when $\mu = 1.9$, our algorithm requires $\tilde O(n^{2.3287\ldots})$ time for computing reachability from $\sqrt n$ sources,
 while the state-of-the art solution is the TC-based one, and it requires $\tilde O(n^{2.371552\ldots})$ time.
 Another example is when $\sigma = 0.6$.
 The interval $I_{\sigma} = I_{0.6}$ is $(1.693, 1.93)$, and for $\mu = 1.75$ our solution requires $\tilde O(n^{2.312\ldots})$ time,
 while the state-of-the-art BFS-based solution requires $O(n^{2.35})$ time.
 \subsection{Our Conditional Results}\label{sec:CondResults}
 We also devise a recursive variant of our algorithm that further improves our bounds \emph{conditioned} 
 on a certain algorithmic assumption. Specifically, we define the \emph{directed reachability} problem
 with parameters $n^{\sigma}$ and $n^{\delta}$ denoted $DR(n^{\sigma}, n^{\delta})$, for a pair of parameters $0 \le \sigma, \delta \le 1$.
 The parameter $\sigma$ is $\log_{n} |S|$ as above.
 The parameter $\delta$ affects the \emph{hop-bound} $D = n^{\delta}$. 
 The problem $DR(n^{\sigma}, n^{\delta})$ accepts as 
 input an $n$-vertex graph $G = (V,E)$ and a set $S$ of $n^{\sigma}$ sources, 
 and asks to compute, for every $s \in S$, the set $\mathcal V_s$
 of vertices reachable from $s$ in at most $n^{\delta}$ hops.
 For a problem $\Pi$, we denote by $\mathcal T(\Pi)$ its time complexity.
 In particular $\mathcal T(DR(n^{\sigma}, n^{\delta}) )$ stands for time complexity of $DR(n^{\sigma}, n^{\delta})$.
 Observe that $DR(n^{\sigma}, n^{\delta})$ can be easily solved by $n^{\delta}$ rectangular Boolean matrix products of an $n^{\sigma} \times n$ matrix
 by an $ n \times n$ square matrix. 
 Let $A = A(G)$ be the adjacency matrix of the input graph $G$.
 In the first iteration, the rectangular matrix $B$ is just the matrix $A$ restricted to $n^{\sigma}$ rows
 that correspond to sources $S$, 
 while the square matrix $A'$ is the adjacency matrix $A$ augmented with a self-loop for every vertex.
 On the next iteration $B$ is replaced by the Boolean matrix product $B \star A'$, etc.
 This process continues for $n^{\delta}$ iterations.
 Thus, $\mathcal T (DR(n^{\sigma}, n^{\delta})) = O(n^{\omega(\sigma) + \delta })$.
 Now consider the following \emph{paths directed reachability problem}, $PDR(n^{\sigma}, n^{\delta})$, 
 in which instead of $n^{\sigma}$ designated source vertices
  we are given $n^{\sigma}$ dipaths $\mathcal P = \{p_1, p_2, \ldots, p_{n^{\sigma}} \}$.
 For every path $p \in \mathcal P$, and every vertex $u \in V$, we want to compute if $u$ is reachable 
 from some vertex $v \in V(p)$ in at most $n^{\delta}$ hops, and if it is the case, 
 we want to output the last vertex $v_p(u)$ on $p$ from which $u$ is reachable within at most $n^{\delta}$ hops.
 
 Interestingly, the BFS-based algorithm for $DR(n^{\sigma}, n^{\delta})$ problem applies (with slight modifications) to the 
 $PDR(n^{\sigma}, n^{\delta})$ problem, providing a solution with running time $O(n^{\mu + \sigma})$ (like for the $DR(n^{\sigma}, n^{\delta})$ problem)~\cite{KoganParterDiShortcuts2}.
 On the other hand, it does not seem to be the case for the algorithm with running time $O(n^{\omega(\sigma) + \delta})$ time for 
 $DR(n^{\sigma}, n^{\delta})$ that we described above.
Our assumption, which we call \emph{paths direachability assumption}, 
is that one nevertheless can achieve running time $\tilde O(n^{\omega(\sigma) + \delta})$ for  $PDR(n^{\sigma}, n^{\delta})$ problem as well.
 
 We show that under this assumption a variant of our algorithm improves the state-of-the-art $S \times V$-reachability algorithms for dense graphs ($\mu = 2$) for all values of $ \tilde \sigma \le \sigma <1$ (and not just in the range $\tilde \sigma \le  \sigma \le 0.53$, which we show unconditionally). Recall that $1/3 < \tilde \sigma < 0.3336$ is a universal constant.
 
 We analyse a sequence of algorithms $\mathcal A_0, \mathcal A_1, \mathcal A_2,\ldots$.
 For each algorithm $\mathcal A_i$, $i = 0,1,2,\ldots$,
 we consider the threshold value $\sigma_i <1$ upto which this algorithm outperforms
 the existing state-of-the-art solutions.
 We show that $0.53  \le \sigma_0 < \sigma_1 < \ldots <\sigma_i < \sigma_{i + 1} < \ldots \le 1$, 
 and that $\lim_{i \to \infty} \sigma_i =1$.
 In other words, for any exponent $\sigma <1$ of the number of sources,
 there exists a (sufficiently large) index $i$ such that the algorithm $\mathcal A_i$ outperforms
 existing state-of-the-art solutions for the $S \times V$-direachability problem with $|S| = n^{\sigma}$.
 %
 
 We also identify some additional assumptions under which the paths direachability assumption holds.
 In particular, consider the \emph{max-min} matrix product.
 Given two matrices $B$ and $A$ of appropriate dimensions (so that matrix product $B \cdot A$ is defined),
 max-min product $C = B \ovee A$ is defined as follows:
 for indices $i,j$ so that $i$ is an index of a row of $B$ and $j$ is an index of a column of $A$, we have
 $C[i,j] = \max_{k} \min \{B[i,k], A[k,j] \}$, where $k$ ranges over all possible indices of columns of $B$.
 
 We show that  $PDR(n^{\sigma}, n^{\delta})$ problem reduces to $n^{\delta}$ invocations of the max-min matrix products 
 for matrices of dimensions $n^{\sigma} \times n$ and $n \times n$.
 Thus, assuming that such a product can be computed in $\tilde O(n^{\omega(\sigma)})$ time implies
 paths direachability assumption.
 We refer to this assumption as \emph{max-min product} assumption.
 Max-min matrix product was studied in~\cite{minMaxProductDP} and~\cite{GrandoniAllPairs}.
 However, current state-of-the-art bounds for it are far from the desired bound of $\tilde O(n^{\omega(\sigma)})$.

 Admittedly, it is not clear how likely are these assumptions.
 We find it to be an intriguing direction for future research.
Also, we believe that the analysis of our algorithms under these assumptions is of independent interest, and that it sheds light on the complexity of multi-source reachability problem.
Furthermore, getting \emph{sufficiently} close to the running time of $\tilde O(n^{\omega(\sigma) + \delta})$ for paths direachability assumption or 
 to $\tilde O(n^{\omega(\sigma)})$ for the max-min product assumption would suffice
 for providing further improvements to the state-of-the-art bounds for this fundamental problem.
 
  \subsection{Our Techniques}\label{par:recurMotivation}
 In this section we provide a high-level overview of our algorithms.
  Given a digraph $G = (V,E)$, and a positive integer parameter $D$,
  we say that a graph $G' = (V,H)$ is a \emph{$D$-shortcut} of $G$ if
  for any ordered pair $u,v \in V$ of vertices, $v$ is reachable from $u$ in $G$
  iff it is reachable from $u$ in $G \cup G' = (V, E \cup H)$ using at most $D$ hops.
  Kogan and Parter~\cite{KoganParterDiShortcuts1, KoganParterDiShortcuts2} showed that
  for any parameter $1 \le D \le \sqrt n$, for any digraph $G$, there exists a $D$-shortcut $G'$ with
  $\tilde O(\tiny{\frac{n^2}{D^3}} + n)$ edges, and moreover, this shortcut can be constructed in $\tilde O(m \cdot n/D^2 + n^{3/2})$ time~\cite{KoganParterDiShortcuts2}.
  
  Our first algorithm starts with invoking the algorithm of~\cite{KoganParterDiShortcuts2} for an appropriate parameter $D$.
 We call this the \emph{diameter-reduction step} of our algorithm.
  We then build two matrices.
  The first one is the adjacency matrix $A'$ of the graph $G \cup G'$, with all the diagonal entries equal to $1$.
  This is a square $ n \times n$ matrix, and the diagonal entries correspond to self-loops.
  The second matrix $B$ is a rectangular $|S| \times n$ matrix.
  It is just the adjacency matrix of of the graph $G \cup G'$ restricted to $|S|$ rows corresponding
  to designated sources of $S$.
  
  Next, our algorithm computes the Boolean matrix product $B \star A'^{D}$.
  Specifically, the algorithm computes the product $B \star A'$, and then multiplies it by $A'$, etc., and does so $D$ times.
  Hence, this computation reduces to $D$ Boolean matrix products of a rectangular matrix of dimensions $|S| \times n$ by a square 
  matrix with dimensions $n \times n$. Each of these $D$ Boolean matrix products can be computed using standard matrix multiplication (see Section~\ref{sec:bmm}).
    We use fast rectangular matrix multiplication algorithm by~\cite{VassilevskaBestRectangular} to compute these products.
  As $|S| = n^{\sigma}$, this computation, henceforth referred to as the \emph{reachability computation step}, requires $O(D \cdot n^{\omega(\sigma)})$ time.
  
  Observe that the running time of the reachability computation step grows with $D$, and the running time of
  the diameter-reduction step decreases with $D$. It is now left to balance these two running times by carefully choosing $D$.
  This completes the overview of our basic direachability algorithm, that improves (unconditionally)
  the state-of-the-art running time for $S \times V$-direachability, $|S| = n^{\sigma}$, for $\tilde \sigma \le \sigma \le 0.53$, $\tilde \sigma < 0.3336$.
  Moreover, as was mentioned above, for any $\sigma$, $\tilde \sigma \le \sigma < 1$, there is a non-empty interval $I_{\sigma} \subseteq [1,2]$ of 
  values $\mu$, such that thus algorithm improves (also, unconditionally) upon state-of-the-art solutions for this problem on graphs with $m = \Theta(n^{\mu})$ edges,
  
  To improve these results further (based on the paths direachability assumption),
  we observe that the algorithm of Kogan and Parter~\cite{KoganParterDiShortcuts2} for building shortcuts has two parts.
  In the first part it computes a subset $P'$ of $\tilde O(n/D^2)$ dipaths and a subset $V'$ of $\tilde O(n/D)$ vertices.
  This step is implemented via a reduction to a min-cost max-flow (MCMF) instance. Using the recent advances in this area~\cite{ChenMaxFlow22},
 it can be done within $\tilde O(m^{1 + o(1)} + n^{3/2})$ time.
  The second part of the algorithm of~\cite{KoganParterDiShortcuts2} involves computing reachabilities between all pairs $(p, v) \in P' \times V'$.
  Here for every pair $(p,v) \in P' \times V'$, one needs to compute the last (if any) vertex on $p$ from which $v$ is reachable.
  Under our paths direachability assumption, this second part can be implemented via a recursive invocation of our $S \times V$-reachability algorithm.
  Our conditional bounds are achieved by analysing this recursive scheme.
  
  \textbf{Related Work}
  Fast rectangular matrix multiplication in conjunction with hopsets was recently employed in a similar way in the context of
  distance computation in \emph{undirected} graphs by Elkin and Neiman~\cite{ElkinN22}.
  Directed hopsets, based on Kogan-Parter constructions of shortcuts~\cite{KoganParterDiShortcuts1, KoganParterDiShortcuts2}, 
  were recently devised in~\cite{BernsteinWein}.
\section{Preliminaries}\label{sec:prelims}
\textbf{Graph Notation.} For an $n$-vertex, $m$-edge digraph $G = (V,E)$, let
 $TC(G)$ denote its transitive closure.
For a vertex pair $u, v \in  V$, where $(u, v) \in TC(G)$, let $d_G(u, v)$
be the length (measured by the number of arcs) of a shortest dipath from $u$ to $v$. For
$(u, v) \notin T C(G)$, we have $d_G(u, v) = \infty$. 
The \emph{diameter} of $G$, denoted $Diam(G)$, is defined as 
$\max_{(u,v) \in TC(G)} d_G(u,v)$. An edge $(u,v) \in TC(G)$ which is not in $E$ is called a \emph{shortcut} edge.

For a dipath $p$, let $V(p)$ denote the vertex set of $p$.
Also, for two distinct vertices $u,v \in V(p)$, we write $u <_p  v$ if $u$ appears before $v$ in $p$, and write $v <_p u$, otherwise.
We use this notation for any vertex sequence $p$, i.e., even if $p$ is not necessarily a dipath.

As mentioned in Section~\ref{sec:Intro}, a naive solution to the $S \times V$-direachability involves performing a BFS or DFS exploration 
from each of the source vertices in $S$.
\begin{definition}\label{def:naiveReach}
For an $n$-vertex $m$-edge digraph $G = (V,E)$, we refer to the naive methods for $S\times V$-direachability (which involve running a BFS or DFS separately
 from each of the sources in $S$) as $S \times V$-\emph{naiveReach} method.
We denote by $\mathcal{T}_{Naive}$ the time complexity of  $S \times V$-naiveReach, and it is given by $\mathcal{T}_{Naive} = O(m \cdot |S|)$.
\end{definition}

Another way to solve $S \times V$-direachability is to compute the transitive closure of the input digraph using square matrix multiplication.
\begin{definition}\label{def:squareReach}
For an $n$-vertex $m$-edge digraph $G = (V,E)$, we refer to the technique of computing the transitive closure of $G$ using 
square matrix multiplication as $S \times V$-\emph{squareReach}. Observe that computing the transitive closure of $G$ 
solves $S \times V$-direachability for up to $|S| = n$ sources.
We denote by $\mathcal{T}_{Square}$ the time complexity of  $S \times V$-squareReach, and it is given by $\mathcal{T}_{Square} = \tilde O(n^{\omega})$,
where $\omega$ is the exponent of $n$ in the number of operations required to compute the product of two $n \times n$ square matrices.
\end{definition}
\begin{definition}\label{def:shortcut}
For a digraph $G$, a set of shortcut edges $H \subseteq TC(G)$ is called a $D$-shortcut if $Diam(G \cup H) \le D$.
\end{definition}
The following theorem summarizes the fast shortcut computation algorithm by Kogan and Parter~\cite{KoganParterDiShortcuts2}.
\begin{theorem}\label{thm:koganParter1}
 (Theorem 1.4,~\cite{KoganParterDiShortcuts2}) There exists a randomized algorithm that for every
$n$-vertex $m$-edge digraph $G$ and $D = O(\sqrt n)$, computes, w.h.p., in time $\tilde O(m \cdot n/D^2 + n^{3/2})$
a $D$-shortcut set $H \subseteq T C(G)$ with $|E(H)| = \tilde O(n^2/D^3 + n)$ edges.
\end{theorem}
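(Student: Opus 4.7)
The plan is to combine two complementary structures that together shortcut any long reachability witness: a small collection of long dipaths and a small set of ``pivot'' vertices. Any $G$-dipath whose length exceeds $D$ should either hit many pivots or align with one of the long paths for a long stretch, so in either case a single shortcut edge can skip over most of it.

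\textbf{Step 1 (path cover via flow).} I would reduce the construction of $P'$ to a min-cost max-flow (MCMF) instance tailored so that its integral optimum is a collection of $\tilde O(n/D^2)$ vertex-disjoint dipaths whose union carries most of the ``long reachability mass'' of $G$. In parallel, sample a uniformly random set $V'$ of $\tilde O(n/D)$ vertices; a Chernoff bound ensures that $V'$ hits every dipath of length at least $D$ with high probability. Invoking the max-flow solver of~\cite{ChenMaxFlow22}, this step runs in $\tilde O(m^{1+o(1)} + n^{3/2})$ time, which is absorbed by the stated $\tilde O(mn/D^2 + n^{3/2})$ budget.

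\textbf{Step 2 (shortcut assignment).} For every $(p, v) \in P' \times V'$, define $u_p(v)$ to be the $<_p$-maximal vertex on $p$ from which $v$ is reachable in $G$, and let $H$ consist of all edges $(u_p(v), v)$, together with $O(n)$ trivial arcs covering the remaining reachability. To compute $u_p(v)$ for a fixed $p$, I would scan $V(p)$ from last to first and, for each $v_i$, run a BFS from $v_i$ in the reverse graph $G^T$ that prunes at vertices already labelled by a later $v_j$. Pruning makes each vertex be explored at most once per path, giving $O(m)$ per path and $\tilde O(mn/D^2)$ in total. The edge count is then $|H| \le |P'|\cdot|V'| + O(n) = \tilde O(n^2/D^3 + n)$.

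\textbf{Step 3 (correctness).} Fix $(x,y)\in TC(G)$ and a shortest $x\to y$ dipath $\pi$ in $G$. If $|\pi|\le D$ there is nothing to do; otherwise, the hitting property of $V'$ yields some $v \in V' \cap V(\pi)$, and the structural guarantee of $P'$ forces the sub-path of $\pi$ ending at $v$ to pass through some $u' \in V(p)$ for a $p \in P'$. Since $u' \leadsto v$ in $G$, maximality gives $u' \le_p u_p(v)$, so the shortcut $(u_p(v), v) \in H$ lets us jump directly to $v$ in one hop. Iterating this reduction along $\pi$ in $G\cup H$ shows that every reachability witness collapses to at most $D$ hops.

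The main obstacle, and the real technical heart of~\cite{KoganParterDiShortcuts2}, is engineering the MCMF instance so that (i) its integral optimum is genuinely a small collection of long vertex-disjoint paths whose union ``captures'' every sufficiently long reachability pattern, and (ii) the uncovered residual is short enough to be bridged by the random hitting set $V'$. Calibrating path-length thresholds, sampling rates, and the interaction between $P'$ and $V'$ so that the final diameter is $D$ rather than $O(D\log n)$ is precisely where the careful flow-network design and the probabilistic analysis of~\cite{KoganParterDiShortcuts2} is unavoidable; everything else is a routine combination of BFS, Chernoff bounds, and the max-flow running time of~\cite{ChenMaxFlow22}.
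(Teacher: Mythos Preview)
The paper does not prove this theorem at all: it is quoted verbatim as Theorem~1.4 of~\cite{KoganParterDiShortcuts2} in the Preliminaries section and then used as a black box throughout. There is therefore no ``paper's own proof'' to compare your proposal against; the authors only \emph{describe} the structure of Kogan and Parter's algorithm (in Sections~\ref{par:recurMotivation} and~\ref{sec:recur}) in order to explain where their recursive scheme plugs in.

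That said, your sketch is broadly consistent with the paper's informal description of the Kogan--Parter construction: a path collection $P$ of size $n/D$ computed via an MCMF reduction (then subsampled to $P'$ of size $\tilde O(n/D^2)$), a random hitting set $V'$ of size $\tilde O(n/D)$, and a reachability computation between $P'$ and $V'$ costing $\tilde O(m\cdot |P'|)=\tilde O(mn/D^2)$ via per-path BFS. A few details in your write-up drift from what the paper reports: (i) the paper says $P$ is obtained from the MCMF and $P'$ is then \emph{sampled} from $P$, whereas you produce $P'$ directly from the flow; (ii) the shortcut edges in the paper's description go from $v\in V'$ to the \emph{first} vertex on $p$ reachable from $v$ (equivalently, in the reversed graph, from the last vertex on $p$ reaching $v$), so your orientation $(u_p(v),v)$ is the reversed-graph variant and you should say so explicitly; and (iii) your Step~3 invokes an unproved ``structural guarantee of $P'$'' to argue that long sub-paths of $\pi$ meet some $p\in P'$---this is exactly the non-trivial combinatorial content of~\cite{KoganParterDiShortcuts2} that you correctly flag at the end as the real obstacle, but as written Step~3 is circular because it assumes precisely what the MCMF construction is supposed to deliver.

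In short: there is nothing in the present paper to benchmark against, your high-level outline matches the authors' summary of~\cite{KoganParterDiShortcuts2}, and the genuine gap you yourself identify (engineering the flow instance and proving the diameter bound) is indeed the entire substance of the cited theorem.
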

\textbf{Matrix Notation.}
For a matrix B, we denote by $B[i,j]$ the entry in row $i$ and column $j$ of $B$. 
The transpose of $B$ is $B^T$ . 
We use $*$ to denote a wildcard, e.g., the notation $B[*,j]$ refers to the vector
which is the $j$-th column of $B$. 

\subsection{Matrix Multiplication and Boolean Matrix Product}\label{sec:bmm}
Recall that for a fixed integer $n$ and $0 < \sigma < 1$, $\omega(\sigma)$ denotes the exponent
of $n$ in the number of algebraic operations required to compute the product of an $n^{\sigma} \times n$ matrix by an $n \times n$ matrix.
Let $B$ be an $n^{\sigma} \times n$ Boolean matrix (i.e., each entry of $B$ is either $0$ or $1$.).
Let $A$ be another Boolean matrix of dimensions $n \times n$. Define the Boolean matrix product $C = B \star A$ by
\begin{equation}\label{def:BMM}
		C[i,j] =  \small{\bigvee_{1 \le k \le n}} B[i,k] \wedge A[k,j],
\end{equation}
where $\vee$ and $\wedge$ stand for binary operations $OR$ and $AND$ respectively.

Note that the Boolean matrix product (BMM henceforth) of $B$ and $A$ can be easily computed by treating the two matrices as standard integer matrices and computing 
the integer matrix product $\tilde C = B \cdot A$ over the integers, and then setting $C[i,j] = 1$ if $\tilde C[i,j] \ge 1$. 
Thus the number of operations required to compute the BMM of $B$ and $A$ as above 
is $O(n^{\omega{(\sigma)}})$.
Vassilevska Williams et al. presented the current state-of-the-art upper bounds on $\omega(\sigma)$ (Table 1 of~\cite{VassilevskaBestRectangular}).
We present these bounds in Table~\ref{tab:tableLeGall} here for completeness.)
\begin{table}[!h]
 \captionsetup{font=scriptsize}
  \begin{center}  
  {\small  
    \begin{tabular}{l|c} 
     $\sigma$ & \textbf{upper bound on $\omega(\sigma) $}\\
      \hline
                0.321334 & 2\\
                0.33 & 2.000100\\
                0.34 &  2.000600\\
                0.35 & 2.001363\\
                0.40 & 2.009541\\
                0.45 & 2.023788\\
        \end{tabular}
        \hspace{1em}
         \begin{tabular}{l|c}
          $\sigma$ & \textbf{upper bound on $\omega(\sigma) $}\\
          \hline
            0.50 & 2.042994\\
            0.527661 & 2.055322\\  
            0.55 & 2.066134\\
             0.60 & 2.092631\\
             0.65 & 2.121734\\
              0.70 & 2.153048\\                               
          \end{tabular}
          \hspace{1 em}        
      \begin{tabular}{l|c}
          $\sigma$ & \textbf{upper bound on $\omega(\sigma) $}\\
          \hline
                 0.75 & 2.186210\\               
                0.80 & 2.220929\\
                0.85 & 2.256984\\
                0.90 & 2.294209\\
                0.95  &2.332440\\
                1.00 & 2.371552\\
         \end{tabular}  
         }                  
  \end{center}
  \caption{Upper bounds on the exponent of $n$ in the number of operations required to 
multiply  an $n^{\sigma} \times n$  matrix by an $n \times n$ 
matrix (reproduced from~\cite{VassilevskaBestRectangular} here for completeness).} 
   \label{tab:tableLeGall}  
\end{table}

\section{Basic Two-Step  \texorpdfstring{ $S \times V$}{[S times V]}-Direachability}\label{sec:BasicTwoStep}
In this section, we present the description and analysis of our two-step reachability  scheme.
Let $G = (V,E)$ be a directed unweighted graph.
Let $S \subseteq V$ be a set of source vertices such that $|S| = n^{\sigma}$ for some $0< \sigma < 1$.
We compute $S \times V$-direachability by executing Algorithm~\ref{alg:nonrecursiveDireach}.
We will henceforth refer to it as \emph{Basic $S \times V$-Direachability Algorithm}.
The algorithm accepts as input the graph $G$ and the set $S$ of sources.
In addition, it accepts as input a parameter $D$. The algorithm is described for an arbitrary choice of $D$.
However, in fact, we will set it as a certain function of $n$ and $|S|$, which will be explicated in the sequel.
(See Equation~\eqref{eq:deltaNoRecursion2}.)
\begin{algorithm}[h!]
   {\small
   \caption{\small{DiReach($G,S, D$)}}
  	\label{alg:nonrecursiveDireach}
 		\begin{algorithmic}[1]
			\State \text{Compute a} $D$\text{-shortcut} $H$ \text {for} $G$ \text{(see Theorem~\ref{thm:koganParter1}) }; \label{ln:Step1}
			\State \text{Define a matrix} $A' =A + I$ \text{,where $A$ is the adjacency matrix of} $G \cup H$; \label{ln:Step2}
			\State \text{Let} $B^{(1)} = A_{S*}$; \label{ln:Step3} \Comment{Matrix $A$ restricted to the rows corresponding to vertices in the set $S$.}
		 	\For{$k$ \text{from} $1$ \text{to} $D-1$}\label{ln:Step4}
        				\State \text{Compute} $B^{(k+ 1)} = B^{(k)} \star A'$;
     			 \EndFor\label{ln:Step5}
		 	\State \textbf{return} $B^{(D)}$;		 	
		\end{algorithmic}
		}
\end{algorithm}
The lines~\ref{ln:Step1} and~\ref{ln:Step2}  constitute the diameter reduction step.
We compute a $D$-shortcut $H$ (line~\ref{ln:Step1}) of the input digraph $G$ using~\cite{KoganParterDiShortcuts2} 
(see Theorem~\ref{thm:koganParter1})
and define an $n \times n$ boolean matrix $A' = A + I$, where $A$ is the adjacency matrix of $G \cup H$ (line~\ref{ln:Step2}).
In line~\ref{ln:Step3} we define a new matrix $B^{(1)}$,
which is the adjacency matrix of $G \cup H$ restricted to the rows corresponding to the set $S$ of source vertices.
Lines~\ref{ln:Step4} to~\ref{ln:Step5} constitute the reachability computation step.
We repeatedly perform a rectangular boolean matrix product, and return the final product as our output.

Let $\mathcal T_{DR}$ and $\mathcal T_{RC}$ denote the time complexity of the diameter reduction step and reachability computation step, respectively.
Then the overall time complexity of our algorithm is $\mathcal T_{DR} + \mathcal T_{RC}$.
\subsection{General Graphs}\label{sec:basicDense}
In this section, we consider general $n$-vertex, $m$-edge
digraphs, i.e., $m$ may be as large as $n(n-1)$.
In the next section, we analyse the case of sparser graphs, and show that when $m =\Theta(n^{\mu})$, for some $\mu <2$,
our bounds can be further improved.

Let $D = n^{\delta}$ for some $0 < \delta \le 1/2$. 
In the diameter reduction step we compute a $D$-shortcut of $G$.
 By Theorem~\ref{thm:koganParter1}, this can be done in time $ \mathcal T_{DR} = \tilde O(m \cdot n/D^2 + n^{3/2}) = \tilde O(n^{3-2\delta})$.
In the reachability computation step we perform $D =  n^{\delta}$ iterations, each of which computes a rectangular matrix product of 
an $n^{\sigma} \times n$ matrix by an $n \times n$ matrix.
Each matrix product requires $O(n^{\omega(\sigma)})$ time.
It follows that $\mathcal T_{RC} = O(n^{ \omega(\sigma) + \delta } )$.
The overall time complexity of the algorithm is therefore $\mathcal T_{DR} +  \mathcal T_{RC} = \tilde O(n^{3-2\delta} + n^{ \omega(\sigma) + \delta })$.
Let $g_0(\sigma)$ denote the exponent of $n$ (as a function of $\sigma$) in the overall time complexity of our algorithm.
It follows that
\begin{equation}\label{eq:deltaNoRecursion1}
		g_0(\sigma) = \min_{\delta} \max \{3 -2\delta, \omega(\sigma) + \delta \}.
\end{equation}
Since $3-2\delta$ decreases when $\delta$ grows and $\omega(\sigma) + \delta$ grows,
the minimum is achieved when the following equation holds
\begin{align}\label{eq:deltaNoRecursion2}
	\begin{aligned}
		3-2\delta &= \omega(\sigma) + \delta \text{, i.e.,} \\
		\delta &= 1- \frac{1}{3} \cdot \omega(\sigma).		
	\end{aligned}
\end{align}
The parameter $D$ is therefore set as $D = n^{\delta}$, with $\delta$ given by Equation~\eqref{eq:deltaNoRecursion2}.
It follows from~\eqref{eq:deltaNoRecursion1} and~\eqref{eq:deltaNoRecursion2} that 
\begin{equation}\label{eq:g0Sigma}
g_0(\sigma) = 1 + \frac{2}{3} \cdot \omega(\sigma).
\end{equation}
Recall that the time complexity $\mathcal{T}_{Naive}$ of the naive algorithm is given by $ \mathcal{T}_{Naive} = O(m \cdot |S|) = O( n^{2 +\sigma})$. Hence in the range $\omega -2 = 0.371552 \le \sigma \le 1$, the naive algorithm is no better than computing the transitive closure of the input digraph. The latter requires $\tilde O(n^{\omega})$ time (where $\omega = \omega(1))$.
For our algorithm to do better than the naive algorithm in this range we require
\begin{equation*}
             1 + \frac{2}{3} \cdot \omega(\sigma) < \omega.
\end{equation*}
Incorporating the state-of-the-art upper bounds on $\omega(\sigma)$ from Table~\ref{tab:tableLeGall}, we get that $g_0(\sigma) < \omega$, for
$\sigma \le 0.53$. Therefore, since $\omega(\cdot)$ is a monotone function, 
our algorithm can compute $S \times V$-direachability in $o(n^\omega)$ time for $1 \le |S| \le n^{0.53}$ sources, as opposed to 
naive algorithm that can only do so for $1 \le |S| \le n^{\omega -2}$ .

Moreover, for $0.335 \le \sigma < 0.371552$, $g_0(\sigma) = 1 + \frac{2}{3} \omega(\sigma) < 2 + \sigma$.
Observe that by convexity of $\omega(\sigma)$, it suffices to verify the inequality for the endpoints $\sigma = 0.335$ and $\sigma = 0.371552$.
Therefore in the range $n^{0.335} \le |S| < n^{0.53}$, our algorithm outperforms the state-of-the-art solutions.

Denote by $\tilde \sigma$ the threshold value such that 
\begin{equation}\label{eq:sigmaThresholdTS}
 g_0(\tilde \sigma) = 1 + \frac{2}{3} \omega(\tilde \sigma)  = 2 + \tilde \sigma.
\end{equation}
For all $\tilde \sigma < \sigma <1$, we have 
\begin{equation}\label{eq:sigmaThresholdTS1}
g_0(\sigma) < 2 + \sigma.
\end{equation}
As the Inequality~\eqref{eq:sigmaThresholdTS1} holds for $\sigma = 0.335$, it follows that $1/3 < \tilde \sigma < 0.335$.
(The Inequality~\eqref{eq:sigmaThresholdTS1} does not hold for $\sigma = 1/3$.)
In fact, in Section~\ref{sec:basicSparse} (between Lemma~\ref{lem:thresh} and Lemma~\ref{lem:sparseSigmaZero}) we argue that $\tilde \sigma < 0.3336$.
We conclude with the following theorem:
\begin{theorem}\label{thm: noRecursionDense}
Let $G = (V, E)$ be an $n$-vertex directed unweighted graph with $\theta(n^2)$ edges.
Fix $S \subset V$, $|S| = n^{\sigma}$, for $\tilde \sigma \le \sigma \le 0.53$.
Then, there exists a randomized algorithm that computes with high probability $S \times V$-direachability 
on $G$ in $o(n^{\omega})$ time.
This algorithm outperforms the state-of-the-art solutions for $S \times V$-direachability for 
$n^{\tilde \sigma} \le |S| \le n^{0.53}$.
\end{theorem}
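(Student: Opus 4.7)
The plan is to invoke Algorithm~\ref{alg:nonrecursiveDireach} with $D = n^{\delta}$, where $\delta$ is chosen to balance the two steps. Correctness is immediate from the shortcut property: after computing $B^{(D)} = B^{(1)} \star (A')^{D-1}$, where $A' = A + I$ is the adjacency matrix of $G \cup H$ with self-loops added, the $(s,v)$-entry of $B^{(D)}$ is $1$ iff $v$ is reachable from $s$ in $G \cup H$ within at most $D$ hops. By Theorem~\ref{thm:koganParter1}, $H$ is a $D$-shortcut of $G$, so this coincides with reachability in $G$.

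For the running time, I would separately bound the diameter reduction step (lines~\ref{ln:Step1}--\ref{ln:Step2}) and the reachability computation step (lines~\ref{ln:Step4}--\ref{ln:Step5}). Using $m = \Theta(n^2)$ and Theorem~\ref{thm:koganParter1}, the former takes $\tilde O(m n / D^2 + n^{3/2}) = \tilde O(n^{3 - 2\delta})$ time w.h.p. The latter performs $D = n^{\delta}$ rectangular Boolean matrix products of an $n^{\sigma} \times n$ matrix by an $n \times n$ matrix; by the observation in Section~\ref{sec:bmm}, each such product costs $O(n^{\omega(\sigma)})$, giving $\tilde O(n^{\omega(\sigma) + \delta})$ overall. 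Setting $3 - 2\delta = \omega(\sigma) + \delta$ as in~\eqref{eq:deltaNoRecursion2} yields $\delta = 1 - \omega(\sigma)/3$, and hence the total running time $\tilde O(n^{g_0(\sigma)})$ with $g_0(\sigma) = 1 + \tfrac{2}{3}\omega(\sigma)$, matching~\eqref{eq:g0Sigma}. Since $0 < \delta \le 1/2$ requires $\omega(\sigma) \le 3$ and $\omega(\sigma) \ge 3/2$, both trivially hold for $\sigma \in [\tilde\sigma, 0.53]$, the application of Theorem~\ref{thm:koganParter1} is valid.

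The final step is to verify that $g_0(\sigma) < \omega$ and $g_0(\sigma) < 2 + \sigma$ for all $\sigma$ in the claimed range. By convexity of $\omega(\cdot)$, the function $g_0(\sigma) - \omega$ is itself convex, so it suffices to check the inequality $1 + \tfrac{2}{3}\omega(\sigma) < \omega = 2.371552\ldots$ at the two endpoints of the interval using Table~\ref{tab:tableLeGall}: at $\sigma = 0.53$ (interpolating between $\sigma = 0.527661$ with $\omega(\sigma) \le 2.055322$ and the next tabulated value) and at $\sigma = \tilde\sigma$ where the inequality holds strictly. Similarly, $g_0(\sigma) < 2 + \sigma$ follows by convexity from checking the endpoints $\sigma = 0.335$ and $\sigma = 0.371552$ in Table~\ref{tab:tableLeGall}, together with the definition~\eqref{eq:sigmaThresholdTS} of $\tilde\sigma$. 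Combining both inequalities with the state-of-the-art bounds $\tilde O(n^{\omega})$ (for $S \times V$-squareReach) and $O(n^{2 + \sigma})$ (for $S \times V$-naiveReach on dense graphs), our algorithm strictly improves both for $\tilde\sigma \le \sigma \le 0.53$, establishing the theorem.

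The only non-routine aspect is the numerical verification against Table~\ref{tab:tableLeGall}; the algorithmic content is essentially the observation that the cost of the shortcut construction and of iterated rectangular multiplication are opposing monotone functions of $D$, so the optimal balance is determined by~\eqref{eq:deltaNoRecursion2}. No additional obstacle is anticipated.
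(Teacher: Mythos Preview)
Your proposal is correct and follows essentially the same approach as the paper: invoke Algorithm~\ref{alg:nonrecursiveDireach}, bound the diameter-reduction step by $\tilde O(n^{3-2\delta})$ and the reachability step by $\tilde O(n^{\omega(\sigma)+\delta})$, balance to obtain $g_0(\sigma)=1+\tfrac{2}{3}\omega(\sigma)$, and then verify $g_0(\sigma)<\min\{2+\sigma,\omega\}$ on $[\tilde\sigma,0.53]$ via convexity of $\omega(\cdot)$ and the entries of Table~\ref{tab:tableLeGall}. Your added remarks on correctness of $B^{(D)}$ and on the validity of the range $0<\delta\le 1/2$ are welcome details that the paper leaves implicit; the numerical verification is organized slightly differently (the paper splits the range at $\sigma=\omega-2$ and checks $g_0(\sigma)<\omega$ for $\sigma\ge\omega-2$ and $g_0(\sigma)<2+\sigma$ for $\tilde\sigma\le\sigma<\omega-2$), but the substance is the same.
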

For example, we compute $S \times V$-direachability for $S = n^{\sigma}$, $\sigma = 0.4$ 
in time $O(n^{g_0(0.4)}) = O(n^{2.33969})$, improving upon the state-of-the-art bound of $O(n^{2.4})$ 
 for naive algorithm and the state-of-the-art bound of $\tilde O(n^{2.371552})$ obtained by fast square matrix multiplication.
 In Table~\ref{tab:tableComparisonDense}, we present values of $g_0(\sigma)$ corresponding to some specific values of $\sigma$ in the range $[\tilde \sigma, 0.53]$
 and show how they compare to the exponent of $n$ in $S \times V$-naiveReach (see Definition~\ref{def:naiveReach})  and $S \times V$-squareReach (see Definition~\ref{def:squareReach}).
  \begin{table}[!h]
  \captionsetup{font=scriptsize}
  \begin{center}  
  {\small 
    \begin{tabular}{c|c|c|c} 
     $\sigma$ &$g_0(\sigma)$ &Exponent of $n$ in $\mathcal{T}_{Naive}$ & Exponent of $n$ in $\mathcal{T}_{Square} $\\
      \hline
                0.335 & 2.333565 & \textbf{2.335} & 2.371552 \\
                0.34 & 2.3337 &  \textbf{2.34} & 2.371552\\
                0.35 & 2.334241 &  \textbf{2.35} &2.371552\\
                0.36 &2.33533 &  \textbf{2.36} & 2.371552\\
                0.37& 2.336422 & \textbf{2.37} & 2.371552\\
                0.38 &2.33751& 2.38 & \textbf{2.371552}\\
                0.39 &2.3386 & 2.39& \textbf{2.371552}\\
                0.40 &2.33969 & 2.40 &\textbf{2.371552}\\
                0.41& 2.34159 & 2.41&\textbf{2.371552}\\
                0.42 &2.34349 & 2.42 & \textbf{2.371552}\\
                0.43 & 2.34539 &2.43 &\textbf{2.371552}\\
                0.44 & 2.34729 &2.44 & \textbf{2.371552}\\
                0.45 &2.34919 & 2.45&  \textbf{2.371552}\\
                0.46 & 2.35175 & 2.46 & \textbf{2.371552}\\
                0.47 & 2.35431& 2.47 &  \textbf{2.371552}\\
                0.48& 2.35687 & 2.48 & \textbf{2.371552}\\
                0.49 &2.359435 & 2.49 &  \textbf{2.371552}\\
                0.50 & 2.3621996 & 2.5 & \textbf{2.371552}\\
                0.51& 2.365081& 2.51 & \textbf{2.371552}\\
                0.52 & 2.368166 & 2.52 & \textbf{2.371552}\\
                0.53 & 2.371252 & 2.53 & \textbf{2.371552}                
        \end{tabular}  
        }               
  \end{center}
  \caption{Comparison of our algorithm from Theorem~\ref{thm: noRecursionDense} with the $S \times V$-naiveReach (BFS-based)  and $S \times V$-squareReach (TC-based) methods in terms of exponent of $n$ in their respective time complexities.
 Previous state-of-the-art exponents (for corresponding exponents of $n$ in $S$) are marked by bold font.
 Our exponent (given in the column titled $g_0(\sigma)$) is better than the previous state-of-the-art in the entire range presented in this table (i.e., $0.335 \le \sigma \le 0.53$).
 }    
   \label{tab:tableComparisonDense}  
\end{table}
\subsection{Graphs with  \texorpdfstring{$m = \Theta( n^{\mu})$} {[Theta (n\textsuperscript{mu})]} edges, for  \texorpdfstring{$\mu <2$} {[mu less than 2]} }\label{sec:basicSparse}
In this section we argue that when our input $n$-vertex $m$-edge digraph $G$ has $m = \Theta(n^{\mu})$ edges, for
some $\mu <2$, then our bounds from Theorem~\ref{thm: noRecursionDense} can be further improved.
Let $m = \Theta(n^{\mu})$ be the size of the edge set of $G$, where $0 \le \mu < 2$.
As before,  let $D = n^{\delta}$, for $0 < \delta \le 1/2$.
Applying Theorem~\ref{thm:koganParter1} for $m = \Theta(n^{\mu})$
and $D = n^{\delta}$, it follows that $\mathcal T_{DR} = \tilde O(n^{1 + \mu -2\delta} + n^{3/2})$.
Recall that the running time of the reachability computation step is
$\mathcal T_{RC} = O(n^{\omega(\sigma) + \delta})$.
The second term of $\mathcal T_{DR}$ will always be dominated by $\mathcal T_{RC}$.
The overall time complexity of the algorithm is therefore 
$\mathcal T_{DR} +  \mathcal T_{RC} = \tilde O(n^{1 + \mu -2\delta} + n^{ \omega(\sigma) + \delta })$.
Let $g^{(\mu)}_0(\sigma)$ denote the exponent of $n$ (as a function of $\sigma$) in the overall time time complexity of our algorithm.
It follows that
\begin{equation}\label{eq:deltaNoRecursion1Sparse}
		g^{(\mu)}_0(\sigma) = \min_{\delta} \max \{1 + \mu -2\delta, \omega(\sigma) + \delta \}.
\end{equation}
This expression is minimized when
\begin{align}\label{eq:deltaNoRecursion2Sparse}
	\begin{aligned}
		1 + \mu-2\delta &= \omega(\sigma) + \delta \text{, i.e.,}\\
		\delta &=  \frac{1 + \mu - \omega(\sigma)}{3}.		
	\end{aligned}
\end{align} 
It follows from~\eqref{eq:deltaNoRecursion1Sparse} and~\eqref{eq:deltaNoRecursion2Sparse} that
\begin{equation}\label{eq:g0mu}
    g^{(\mu)}_0(\sigma) = \frac{1 + \mu + 2\omega(\sigma)}{3}.
\end{equation}
Note that for $\mu = 2$, we have $g^{(\mu)}_0 (\sigma) = g_0(\sigma)$.
Also, when $\mu <2$, the exponent $g^{(\mu)}_0(\sigma)$ is smaller than $\omega$ for a wider range of $\sigma$ than
 the exponent $g_0(\sigma)$.
 On the other hand the sparsity of the input digraph also improves the time complexity $\mathcal{T}_{Naive} = O(m \cdot n^{\sigma}) = O(n^{\mu + \sigma})$
 of the naive algorithm. 
 It is possible that for a given combination of $\sigma$ and $\mu$, $\mathcal{T}_{Naive}$ 
 may be better than the time complexity, $\tilde O(n^{\frac{1 + \mu + 2\omega(\sigma)}{3} })$, of our algorithm.
For our algorithm to outperform the naive algorithm, we need
\begin{equation*}
		\frac{1 + \mu + 2\omega(\sigma)}{3} < \mu + \sigma.
\end{equation*}
This condition is equivalent to
\begin{equation}\label{eq:muLower}
		\mu >    \omega(\sigma)  + \frac{1}{2} - \frac{3}{2} \sigma.		
\end{equation}
 By substituting values of $\omega(\sigma)$ from Table~\ref{tab:tableLeGall} for specific values of $\sigma$, we get corresponding lower bounds on $\mu$.
In Table~\ref{tab:table0}, we present these lower bounds for various values of $\sigma$.
\begin{table}[h!]
 \captionsetup{font=scriptsize}
  \begin{center}
   {\small 
    \begin{tabular}{l|c} 
      \textbf{$\sigma$} & \textbf{$\mu $}\\
      \hline
                0 & $>2.5$\\
                0.34 &$> 1.9906$\\
                0.4 & $>1.91$\\
                0.5 &$> 1.793$\\
                0.6& $ > 1.693$ \\
                0.7 &$>  $ 1.603\\
                0.8 & $>1.521$\\
                0.9 & $> 1.444$\\
                1 & $> \omega - 1$
  \end{tabular}
  }
  \end{center}
   \caption{Values of $\mu$ for which $g_0(\sigma) < \mu + \sigma$.
   The condition $\mu > 2.5$ in the first row means that for small values of $\sigma$ state-of-the-art 
   solutions outperform our algorithm for all densities of input graph.}
    \label{tab:table0}
\end{table}

For our algorithm to work in $o(n^{\omega})$ time, we need 
\begin{equation}\label{eq:muUpper}
		\frac{1 + \mu + 2\omega(\sigma)}{3} < \omega \text{, i.e.,~}
		\mu < 3 \omega - 2\omega(\sigma) - 1.
\end{equation}
For $\sigma =1$, Inequality~\eqref{eq:muUpper} implies that $\mu < \omega -1$, matching the lower bound $\mu > \omega -1$ implied by~\eqref{eq:muLower}. Indeed, our algorithm improves existing bounds only for $\sigma <1$.
For every $\sigma < 1$, we get some non-trivial intervals $I_{\sigma}$ of $\mu$ for which $g^{(\mu)}_0(\sigma) < \min \{\mu + \sigma, \omega\}$.
For example, for $\sigma = 0.5$, $\mu > 1.793$ (see Table~\ref{tab:table0}) ensures that $g^{(\mu)}_0(\sigma) < \mu + \sigma$.
For inequality $g^{(\mu)}_0(\sigma)  < \omega$ to hold, substituting $\sigma = 0.5$ in~\eqref{eq:muUpper} implies that
$\mu < 2.028668$. It follows that the inequality always holds for all $\mu$, i.e., the condition on $\mu$ for $\sigma =0.5$ is that $\mu > 1.793$.
We computed these values of intervals for various values of $\sigma$.
Specifically, for a number of values of $\sigma$, using the best known upper bounds on $\omega(\sigma)$, 
we numerically computed the intervals $I_{\sigma} = (\mu_1, \mu_2)$, such that for graphs with $m = \Theta(n^{\mu})$ edges, $\mu_1 < \mu < \mu_2$,
our algorithm improves the existing bounds bounds on the $S \times V$-direachability problem.
In Table~\ref{tab:table1}, we present these intervals for various values of $\sigma$.
\begin{table}[h!]
  \captionsetup{font=scriptsize}
  \begin{center}
  {\small 
    \begin{tabular}{l|c} 
      \textbf{$\sigma$} & \textbf{Interval $I_{\sigma} = (\mu_0, \mu_1)$ }\\
      \hline
      0.335&$ \mu > 1.99785$\\
      0.34 &$\mu > 1.9911$\\
           0.4 & $\mu > 1.91$ \\
      	0.5 & $\mu > 1.793$ \\
      	0.55 & $(1.74, 1.982)$\\
      	0.6 & $(1.693, 1.93)$\\
      	0.7 & $(1.603, 1.809)$\\
      	0.8 & $(1.521, 1.673)$\\
     	 0.9 & $(1.4442, 1.526)$\\
     	 0.99 & $(1.3787, 1.3872)$\\
           1 & $(\omega-1, \omega-1) = \phi$
  \end{tabular}
  }
  \end{center}
    \caption{Intervals of values of $\mu$ for which $g^{(\mu)}_0(\sigma) < \min \{\mu + \sigma, \omega \}. $
                 These intervals are not empty for any $\sigma$, $\tilde \sigma \le \sigma <1$, where $\tilde \sigma <0.335$ is a universal constant.
          }
    \label{tab:table1}
\end{table}
These results suggest that for all $\sigma, \tilde \sigma < \sigma < 1$, there is a non-empty interval of $\mu$ 
such that for digraphs with $m = \Theta(n^\mu)$ edges, our algorithm improves over existing bounds for $S \times V$-direachability
for $|S| = n^{\sigma}$. We will soon prove it formally.

In Table~\ref{tab:tableComparisonMu}, we present a comparison of our algorithm for graphs with $m = \Theta(n^\mu)$ edges
for various combinations of $\mu$ and $\sigma$.
  \begin{table}[!h]
  \captionsetup{font=scriptsize}
  \begin{center}   
  {\small
    \begin{tabular}{c|c|ccc} 
     $\mu$ & $\sigma$ &$g^{(\mu)}_0(\sigma)$ &Exponent of $n$ in $\mathcal{T}_{Naive}$ & Exponent of $n$ in $\mathcal{T}_{Square} $\\
      \hline
                 1.95 & 0.375 & 2.32 & \textbf{2.325} & 2.371552\\
                 &0.4 &  2.323 & \textbf{2.35} & 2.371552\\
                 & 0.45 & 2.3325 & 2.4 & \textbf{2.371552}\\
                 & 0.50 & 2.345 & 2.45 & \textbf{2.371552}\\ \hline
                1.9 & 0.45 & 2.3159 & \textbf{2.35} & 2.371552\\ 
                 & 0.5 & 2.3287 & 2.4 & \textbf{2.371552}\\
                 & 0.55& 2.344 & 2.45& \textbf{2.371552}\\
                 & 0.6 & 2.362 & 2.5 & \textbf{2.371552}\\  \hline
                1.75 & 0.55  &2.294 & \textbf{2.3} & 2.371552\\
                 & 0.6 & 2.312 & \textbf{2.35} &2.371552\\
                & 0.65 & 2.331 &2.4 &\textbf{2.371552}\\
                 & 0.7 & 2.352 & 2.45 &\textbf{2.371552}\\ \hline
                 1.525& 0.8 &2.323 & \textbf{2.325}&2.371552\\
                  & 0.85 & 2.346 & 2.375 & \textbf{2.371552}\\
                 & 0.9 & 2.3711 & 2.425 & \textbf{2.371552} \\ \hline                   
        \end{tabular} 
        }                
  \end{center}
  \caption{Comparison of our algorithm for graphs with $m = \Theta(n^{\mu})$ edges
  with the $S \times V$-naiveReach  and $S \times V$-squareReach methods in terms of exponent of $n$ in their respective time complexities. For each pair $(\mu, \sigma)$ that we list in the table, we have $\mu \in I_{\sigma}$.
 We mark the previous state-of-the-art bounds by bold font.}    
   \label{tab:tableComparisonMu}  
\end{table}

Recall that $\frac{1}{2} + \omega(\sigma) - \frac{3}{2} \sigma$ is the lower bound on the value of $\mu$ implied by Inequality~\eqref{eq:muLower},
i.e., for $\mu > 1/2 + \omega(\sigma) - 3/2 \sigma$, our algorithm is better than the $S \times V$-naiveReach method.
The threshold  $ \tilde \sigma$ is the value such that $\frac{1}{2} + \omega(\sigma) - \frac{3}{2} \sigma =2$, i.e., $\omega( \tilde \sigma) = \frac{3}{2} (1 +  \tilde \sigma)$ holds.
(Recall that we defined it in Section~\ref{sec:basicDense} as the value that satisfies $1 + \frac{2}{3} \omega( \tilde \sigma) = 2+  \tilde \sigma$.
These two conditions are equivalent.)
\begin{lemma}\label{lem:thresh}
For every $\sigma >  \tilde \sigma$, we have 
\begin{align}
\begin{aligned}\label{eq:threshold}
\frac{1}{2} + \omega(\sigma) - \frac{3}{2} \sigma < 2 \text{, i.e., } \omega(\sigma) < \frac{3}{2}(1 + \sigma).
\end{aligned}
\end{align}
\end{lemma}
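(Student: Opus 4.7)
The plan is to read the inequality as asserting that the function $f(\sigma) := \omega(\sigma) - \tfrac{3}{2}(1+\sigma)$ is strictly negative on the interval $(\tilde\sigma, 1]$. By definition of $\tilde\sigma$ (the equation $\omega(\tilde\sigma) = \tfrac{3}{2}(1+\tilde\sigma)$ recalled just before the lemma), we have $f(\tilde\sigma) = 0$. At the other endpoint, $f(1) = \omega - 3 \le 2.371552 - 3 < 0$, using the state-of-the-art upper bound on $\omega$ quoted in the introduction. So we are asking for $f < 0$ on the open-at-$\tilde\sigma$ interval bounded by a zero on the left and a strictly negative value on the right.

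The key structural fact I would invoke is convexity of the rectangular matrix multiplication exponent $\omega(\cdot)$, which is a standard property and certainly consistent with the values displayed in Table~\ref{tab:tableLeGall}. Since the affine term $\tfrac{3}{2}(1+\sigma)$ is linear, $f$ inherits convexity from $\omega$. For any $\sigma \in (\tilde\sigma, 1]$ write $\sigma = (1-t)\tilde\sigma + t\cdot 1$ with $t = \frac{\sigma - \tilde\sigma}{1 - \tilde\sigma} \in (0,1]$. Convexity of $f$ yields
\begin{equation*}
f(\sigma) \;\le\; (1-t) f(\tilde\sigma) + t\, f(1) \;=\; t\, f(1) \;<\; 0,
\end{equation*}
and this is the inequality $\omega(\sigma) < \tfrac{3}{2}(1+\sigma)$ claimed in~\eqref{eq:threshold}. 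Rearranging gives the equivalent form $\tfrac{1}{2} + \omega(\sigma) - \tfrac{3}{2}\sigma < 2$ exactly as stated.

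The only non-trivial ingredient is convexity of $\omega(\sigma)$; the rest is a one-line interpolation between the two endpoints $\tilde\sigma$ and $1$. I expect the write-up to simply cite convexity as a well-known property (it follows, e.g., from the Schönhage $\tau$-theorem style arguments used to define $\omega(\sigma)$ via border rank, and is assumed implicitly when the authors argued earlier, via convexity, that checking $g_0(\sigma) < 2+\sigma$ at two endpoints suffices on an interval). If one prefers to avoid invoking convexity as a black box, an alternative is to verify~\eqref{eq:threshold} pointwise at the tabulated values of $\sigma$ in Table~\ref{tab:tableLeGall} and fill in the intermediate $\sigma$ via monotonicity of $\omega(\sigma) - \tfrac{3}{2}\sigma$ between adjacent table entries, but this is essentially the convexity argument in disguise and much less clean.
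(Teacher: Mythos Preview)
Your proposal is correct and matches the paper's own proof essentially verbatim: the paper checks $\omega(\tilde\sigma)=\tfrac{3}{2}(1+\tilde\sigma)$ and $\omega(1)<3=\tfrac{3}{2}(1+1)$, then invokes convexity of $\omega(\cdot)$ to conclude the inequality on $(\tilde\sigma,1]$. Your write-up is slightly more explicit about the interpolation step, but the argument is identical.
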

\begin{proof}
Note that for $\sigma = \tilde \sigma$, we have $\omega(\sigma) = \frac{3}{2} (1 + \sigma)$ and for $\sigma =1$, $\omega =\omega(1) < \frac{3}{2}(1 + \sigma) =3$. The inequality~\eqref{eq:threshold}
now follows by convexity of the function $\omega(\sigma)$.
\end{proof}
To evaluate the value of $\tilde \sigma$ we note that for $\sigma= 0.34$,
we have $\omega(\sigma) = 2.0006$ (see Table~\ref{tab:tableLeGall})
and $\frac{3}{2}(1 + \sigma) = 2.01$.
For $\sigma =1$, we have $\omega(\sigma) = 2.371552$ and $\frac{3}{2}(1 + \sigma) = 3$.
Thus $\omega(\sigma) < \frac{3}{2} ( 1 + \sigma)$ holds for $\sigma = 0.34$ and for $\sigma =1$.
By convexity of the function $\omega(\sigma)$, it follows that this inequality holds for all $0.34 \le \sigma \le 1$.
However, for $\sigma = 1/3$, $\omega(\sigma) > \omega(0.33)>  2.0001 > 2$, whereas $\frac{3}{2} ( 1 + \sigma) =2$.
Therefore, there exists a constant $\tilde \sigma$, $1/3 < \sigma_0 < 0.34$ (satisfying $\omega(\tilde \sigma) = \frac{3}{2} ( 1 + \tilde \sigma)$),
such that the inequality $\omega(\sigma) < \frac{3}{2}  (1 + \sigma)$
(and thus $g_0(\sigma) > \omega(\sigma) + \frac{1-\sigma}{2}$) holds for $\tilde \sigma < \sigma \le 1$.
In fact, one can verify that $1/3 < \tilde \sigma < 0.3336$, as $\omega(0.3336) < 2.00028$, while 
$\frac{3}{2} ( 1+ 0.3336) = 2.0004$.
In the following lemma we formally prove that for $\tilde \sigma < \sigma < 1$, the intervals $I_{\sigma}$ are non-empty.
\begin{lemma}\label{lem:sparseSigmaZero}
There exists a threshold $\tilde \sigma < 0.3336$ such that for every
$\tilde \sigma < \sigma < 1$, we have 
$\frac{1}{2} + \omega(\sigma) - \frac{3}{2} \sigma < 3\omega - 2\omega(\sigma) -1$,
 and there is equality for $\sigma =1$.
 \end{lemma}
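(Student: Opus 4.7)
The plan is to first rewrite the desired inequality in a much cleaner form. Adding $2\omega(\sigma) + 1 - \tfrac{1}{2} + \tfrac{3}{2}\sigma$ to both sides, the inequality $\tfrac{1}{2} + \omega(\sigma) - \tfrac{3}{2}\sigma < 3\omega - 2\omega(\sigma) - 1$ becomes $3\omega(\sigma) < 3\omega - \tfrac{3}{2}(1-\sigma)$, i.e.,
\begin{equation}\label{eq:cleanForm}
    \omega(\sigma) \;<\; \omega + \tfrac{\sigma - 1}{2}.
\end{equation}
At $\sigma = 1$ both sides equal $\omega$ (since $\omega(1) = \omega$), which already accounts for the equality asserted by the lemma.

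Next I would introduce the auxiliary function $h(\sigma) := \omega(\sigma) - \omega - \tfrac{\sigma - 1}{2}$. Since $\omega(\sigma)$ is convex and the remaining terms are affine, $h$ is convex, and we have just observed that $h(1) = 0$. Hence it suffices to exhibit a single point $\sigma^{\star} < 0.3336$ with $h(\sigma^\star) < 0$; once we have this, the chord inequality for convex functions gives, for every $\sigma = \lambda\sigma^\star + (1-\lambda)\cdot 1 \in (\sigma^\star,1)$ with $\lambda \in (0,1)$,
\begin{equation*}
    h(\sigma) \;\le\; \lambda\, h(\sigma^\star) + (1-\lambda)\,h(1) \;=\; \lambda\, h(\sigma^\star) \;<\; 0,
\end{equation*}
yielding the strict inequality \eqref{eq:cleanForm} on $(\sigma^\star, 1)$.

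The natural candidate for $\sigma^\star$ is the constant $\tilde\sigma$ furnished by Lemma~\ref{lem:thresh}, which is characterised by $\omega(\tilde\sigma) = \tfrac{3}{2}(1+\tilde\sigma)$ and is known to satisfy $\tilde\sigma < 0.3336$. Plugging this identity into $h$,
\begin{equation*}
    h(\tilde\sigma) \;=\; \tfrac{3}{2}(1+\tilde\sigma) - \omega - \tfrac{\tilde\sigma-1}{2} \;=\; 2 + \tilde\sigma - \omega.
\end{equation*}
Since $\tilde\sigma < 0.3336 < 0.371552 \le \omega - 2$, this is strictly negative, so $\sigma^\star := \tilde\sigma$ works and the argument closes. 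No real obstacle is anticipated: convexity of $\omega(\sigma)$ is standard and was already used in Lemma~\ref{lem:thresh}, the bound $\tilde\sigma < 0.3336$ was established in the discussion just before the lemma via Table~\ref{tab:tableLeGall}, and the rest is the one-line chord estimate above together with the algebraic rewriting to \eqref{eq:cleanForm}.
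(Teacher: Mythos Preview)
Your proof is correct and follows essentially the same route as the paper: both rewrite the inequality to the equivalent form $\omega(1)-\omega(\sigma) > \tfrac{1-\sigma}{2}$ (your \eqref{eq:cleanForm} is just this rearranged) and then invoke convexity of $\omega(\cdot)$. The only cosmetic difference is the anchor point: the paper checks the secant slope from $\alpha=0.321334$ to $1$ numerically and appeals to convexity to propagate, whereas you anchor at $\tilde\sigma$ itself and use its defining relation $\omega(\tilde\sigma)=\tfrac32(1+\tilde\sigma)$ to get $h(\tilde\sigma)=2+\tilde\sigma-\omega<0$ directly---a slightly tidier bookkeeping, but the same idea.
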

\begin{proof}
For $\sigma = \tilde \sigma$, we have $\frac{1}{2} +  \omega(\sigma) - \frac{3}{2}\sigma =2$ and $3\omega - 2\omega(\sigma) -1 >2$
(as $\tilde \sigma < 0.34$, and $\omega(\cdot)$ is a monotonically increasing function).
For $\tilde \sigma < \sigma < 1$, we argue that $\frac{1}{2} +  \omega(\sigma) - \frac{3}{2}\sigma < 3\omega - 2 \omega(\sigma) -1$, i.e., 
$\frac{3}{2}  + 3\omega(\sigma) - \frac{3}{2} \sigma < 3 \omega$.
It is sufficient to prove that
\begin{align*}
	\begin{aligned}
		\frac{1}{2}(1-\sigma) < \omega(1) - \omega(\sigma),
	\end{aligned}
\end{align*}
 or equivalently
 \begin{align}\label{eq:slopeOmega}
	\begin{aligned}
		\frac{\omega(1) - \omega(\sigma) } {1-\sigma} > \frac{1}{2}.
	\end{aligned}
\end{align}
We prove that the Inequality~\eqref{eq:slopeOmega} holds in the following claim.
\begin{claim}\label{clm:slope}
For $\tilde \sigma < \sigma <1$, we have
$\frac{\omega(1) - \omega(\sigma) } {1-\sigma} > \frac{1}{2}$.
\end{claim}
\begin{proof}
Let $\sigma_1 = \alpha =  0.321334$ and $\sigma_2 = 1$.
According to Table~\ref{tab:tableLeGall}, $\omega(\sigma_1) = 2$ and $\omega(\sigma_2) = \omega =  2.371552$.
It follows that 
\begin{align}\label{eq:slope}
	\begin{aligned}
			\frac{\omega(\sigma_2) - \omega(\sigma_1) } {1-\sigma_1} = 0.5475 > 1/2.
	\end{aligned}
\end{align}
Note that $\frac{\omega(1) - \omega(\sigma_1)}{1 - \sigma_1}$ is the slope of the line connecting points $(\sigma_1, \omega(\sigma_1))$ and $(1, \omega(1))$.
By convexity of the function $\omega(\cdot)$, this slope is smaller than the respective slope $\frac{\omega(1) - \omega(\sigma)}{1 - \sigma}$, for any $\sigma_1 < \sigma <1$.
Hence the latter slope is greater than $1/2$ too.
See Figure~\ref{fig:plot} for an illustration.
\begin{figure}[ht!]
 \captionsetup{font=scriptsize}
	\centering
	\fbox{
          \includegraphics[scale=0.15]{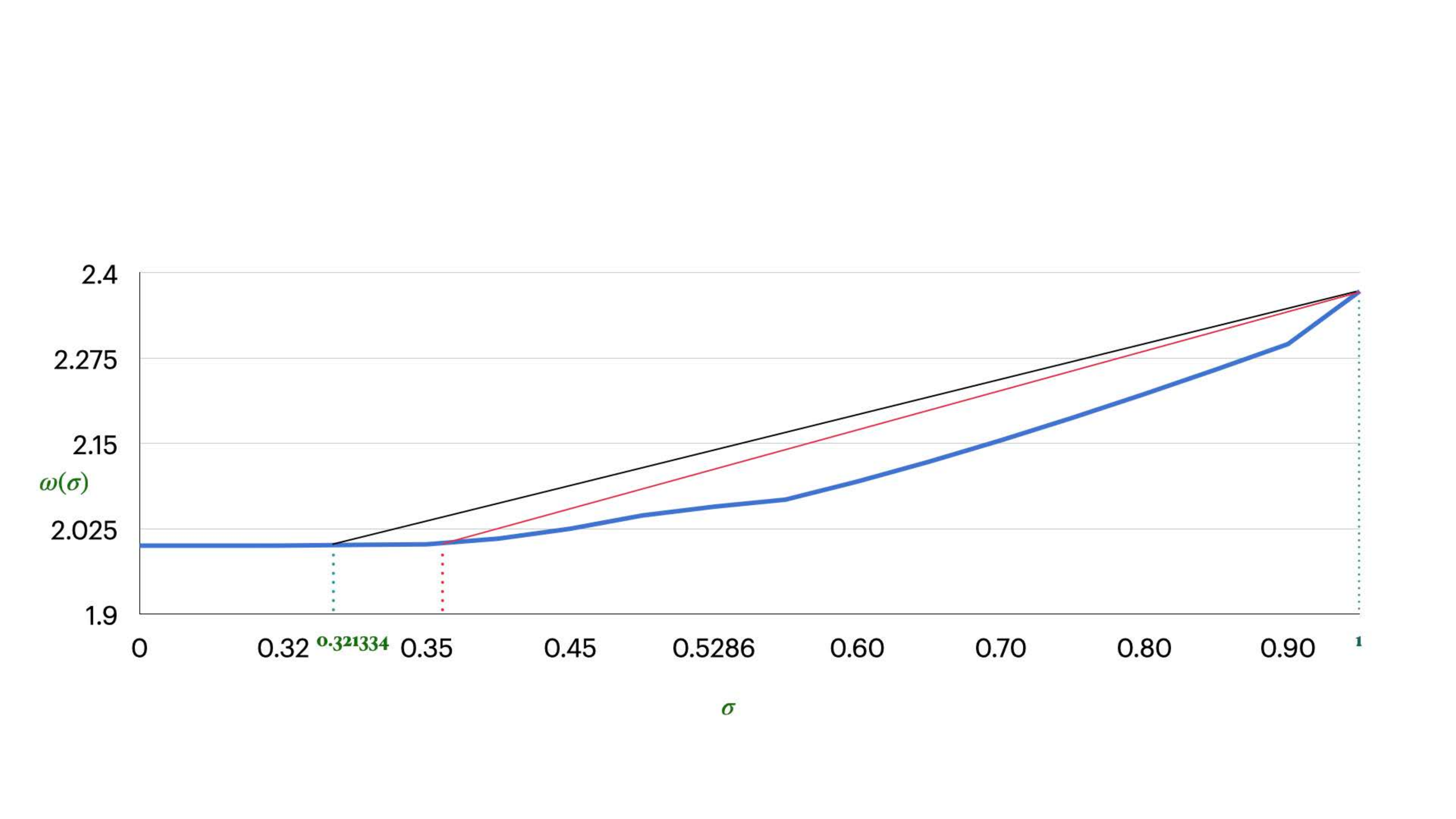}
	}
	\caption{The segment connecting  connecting the points $(0.321334, \omega(0.321334))$ and $(1,\omega(1))$ lies above the plot of $\omega(\sigma)$. Its slope is however smaller than the slope of the segment connecting the points $(\sigma, \omega(\sigma))$ and $(1,\omega(1))$ for any $\tilde \sigma < \sigma < 1$. The latter segment is illustrated by a red line.}
	\label{fig:plot}
\end{figure}

\end{proof}
Hence, the assertion of the lemma holds for $\tilde \sigma < \sigma < 1$.

For $\sigma =1$, we have $\frac{1}{2} + \omega(\sigma) - \frac{3}{2}\sigma = 3\omega - 2\omega(\sigma) -1 = \omega -1$.
Hence equality holds for $\sigma =1$.
\end{proof} 
We proved the following theorem.
\begin{theorem}\label{thm:sparseMain}
There is a threshold value $1/3 < \tilde \sigma < 0.3336$, such that for every $\tilde \sigma < \sigma < 1$, 
there exists a non-empty interval $I_{\sigma}$ that satisfies the following property:\\
For every $\mu \in I_{\sigma}$ and any $n$-vertex $m$-edge digraph with $m = n^{\mu}$ edges,
Algorithm~\ref{alg:nonrecursiveDireach} computes $S \times V$-direachability for $|S| = n^{\sigma}$
faster than the current state-of-the-art algorithms for the problem. 
(The latter are $S \times V$-naiveReach method (see Definition~\ref{def:naiveReach}) and $S \times V$-squareReach method (see Definition~\ref{def:squareReach}).)
See Table~\ref{tab:table1} for sample values of $I_{\sigma}$ and Table~\ref{tab:tableComparisonMu}
for sample exponents of our and state-of-the-art algorithms.
\end{theorem}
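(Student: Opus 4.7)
The plan is essentially to package the analysis already carried out earlier in Section~\ref{sec:basicSparse} into the stated theorem, since the quantitative work has essentially been done. I would begin by recalling that the running time of Algorithm~\ref{alg:nonrecursiveDireach} on an $n$-vertex $m$-edge digraph with $m = \Theta(n^\mu)$ edges and $|S|=n^\sigma$ sources has exponent $g^{(\mu)}_0(\sigma) = \frac{1+\mu+2\omega(\sigma)}{3}$, obtained by balancing the two steps via~\eqref{eq:deltaNoRecursion2Sparse}. The ``state-of-the-art'' exponent is $\min\{\mu+\sigma,\omega\}$, coming from $S\times V$-naiveReach and $S\times V$-squareReach respectively.

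The strategy then is to take as $I_\sigma$ exactly the set of $\mu\in[1,2]$ for which $g^{(\mu)}_0(\sigma)$ is strictly smaller than both $\mu+\sigma$ and $\omega$. By~\eqref{eq:muLower}, beating the naive bound is equivalent to $\mu > \omega(\sigma) + \tfrac{1}{2} - \tfrac{3}{2}\sigma$; by~\eqref{eq:muUpper}, beating the square-MM bound is equivalent to $\mu < 3\omega - 2\omega(\sigma) - 1$. Thus define
\begin{equation*}
I_\sigma \;=\; \Bigl(\,\omega(\sigma) + \tfrac{1}{2} - \tfrac{3}{2}\sigma,\; 3\omega - 2\omega(\sigma) - 1\,\Bigr) \cap [1,2].
\end{equation*}
For any $\mu\in I_\sigma$, Algorithm~\ref{alg:nonrecursiveDireach} beats both state-of-the-art bounds simultaneously, giving the conclusion of the theorem.

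It remains to produce the threshold $\tilde\sigma$ and show $I_\sigma\neq\emptyset$ for $\tilde\sigma<\sigma<1$. This is exactly the content of Lemma~\ref{lem:sparseSigmaZero}: its inequality $\omega(\sigma)+\tfrac{1}{2}-\tfrac{3}{2}\sigma < 3\omega-2\omega(\sigma)-1$ is precisely the condition that the left endpoint of $I_\sigma$ is strictly less than the right endpoint. Moreover, the lemma records that equality holds at $\sigma=1$, so one cannot extend beyond $\sigma<1$. For the upper bound $\tilde\sigma<0.3336$, I would point to the numerical verification already given between Lemma~\ref{lem:thresh} and Lemma~\ref{lem:sparseSigmaZero}: at $\sigma=0.3336$ the table entry $\omega(0.3336)<2.00028<2.0004=\tfrac{3}{2}(1+0.3336)$ places $\tilde\sigma$ strictly below $0.3336$, while convexity plus $\omega(1/3)>2=\tfrac{3}{2}(1+1/3)$ places it strictly above $1/3$.

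The only nontrivial step is Lemma~\ref{lem:sparseSigmaZero}, whose proof hinges on the chord-slope inequality $\tfrac{\omega(1)-\omega(\sigma)}{1-\sigma}>\tfrac{1}{2}$ (Claim~\ref{clm:slope}). That is the ``hard part,'' but it is already handled in the paper via convexity of $\omega(\cdot)$ and the numerical slope computation from the endpoints $\sigma_1=0.321334$ and $\sigma_2=1$. Once that chord bound is in hand, everything chains together: $I_\sigma$ is a non-empty open interval for every $\tilde\sigma<\sigma<1$, and for each $\mu$ inside it, the exponent $g^{(\mu)}_0(\sigma)$ strictly improves upon both $\mu+\sigma$ and $\omega$, proving Theorem~\ref{thm:sparseMain}. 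The tables cited in the statement (Table~\ref{tab:table1} and Table~\ref{tab:tableComparisonMu}) then serve as illustrative instantiations rather than part of the proof.
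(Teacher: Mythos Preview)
Your proposal is correct and mirrors the paper's own argument: the theorem is a packaging of the preceding analysis in Section~\ref{sec:basicSparse}, with $I_\sigma$ defined by the two inequalities~\eqref{eq:muLower} and~\eqref{eq:muUpper}, non-emptiness supplied by Lemma~\ref{lem:sparseSigmaZero} (via Claim~\ref{clm:slope}), and the location of $\tilde\sigma$ pinned down by Lemma~\ref{lem:thresh} together with the numerical check at $\sigma=0.3336$. The only small addition worth making explicit is that Lemma~\ref{lem:thresh} guarantees the left endpoint is below $2$, so the open interval $(L,R)$ genuinely meets $[1,2]$; the paper leaves this implicit as well.
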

\section{Recursive  \texorpdfstring{$S \times V$}{[Recursive S times V]}-direachability}\label{sec:recur}
In this section, we describe and analyse our recursive scheme.
The algorithm accepts as input an integer parameter $k = 0, 1,2, \ldots,$ in addition to the input graph $G = (V,E)$,
the set $S \subseteq V$ of sources, and the parameter $D$ that will be set below (as a function of $n$ and $|S|$). 
The algorithm will be referred to as Procedure Recur-DiReach($G,S,D,k$).
If $k =0$, then the algorithm invokes Algorithm~\ref{alg:nonrecursiveDireach}. i.e., Procedure DiReach($G,S,D$), with the parameter 
$D$ set as $n^{\delta}$ with $\delta$ given by Equation~\eqref{eq:deltaNoRecursion2}.
Otherwise for $k \ge 1$, Procedure Recur-DiReach($G,S,D,k$) (like Algorithm~\ref{alg:nonrecursiveDireach})
starts by computing a $D$-shortcut. The difference is, however, that while Algorithm~\ref{alg:nonrecursiveDireach} computes a $D$-shortcut via a blackbox invocation of
Kogan-Parter's algorithm~\cite{KoganParterDiShortcuts2} (see Theorem~\ref{thm:koganParter1}), Procedure Recur-DiReach($G,S,D,k$)
employs itself recursively for computing the shortcut in the following way. It starts with computing the sets $P'$ and $V'$ of paths and vertices, respectively, exactly in the same way as Kogan-Parter's algorithm~\cite{KoganParterDiShortcuts2} does. That is, it computes a collection of paths $P$ of size $n/D$ by a reduction to an instance of min-cost max-flow and then it samples every path in $P$ independently with probability $\Theta(\log n/ D)$ to obtain the set $P'$,
and also samples every vertex $v \in V$ independently with probability $\Theta(\log n/ D)$ to obtain $V'$.
However, to compute shortcuts between paths in $P'$ and vertices in $V'$ it invokes Procedure Recur-DiReach($G,S,D$) recursively with parameters $G, S = P', D', k-1$.
The parameter $D'$ will be set as a certain function of $n$ and $|P'|$ later in the sequel. See Equation~\eqref{eq:parameterD'}.
(Note that the paths become sources for this invocation.)
Finally, once the shortcut set $H$ is computed, Procedure Recur-DiReach($G,S,D,k$) proceeds in the same way as Procedure 
DiReach($G,S,D$) does, i.e., it executes lines~\ref{ln:Step2}-\ref{ln:Step5} of Algorithm~\ref{alg:nonrecursiveDireach}.
The pseudocode of Procedure Recur-DiReach($G,S,D,k$) is given in Algorithm~\ref{alg:recursiveDireach}.\\
\begin{algorithm}[h!]
   {\small
   \caption{\small{Recur-DiReach($G,S, D,k$)}}
  	\label{alg:recursiveDireach}
 		\begin{algorithmic}[1]
			\If{k = 0}
			  \State \text{Invoke DiReach} $(G,S,D)$ \text {(Algorithm~\ref{alg:nonrecursiveDireach}) and return its output};    \label{lnR:Step1}
			  \Else
			   \State \text{Compute the sets $P'$ and $V'$ of paths and vertices, respectively, as in~\cite{KoganParterDiShortcuts2}}
			   
			      \Comment{these sets are needed for computing 
			                       a $D$-shortcut (see~\cite{KoganParterDiShortcuts2})}
			\State \text{Invoke Recur-DiReach($G,P',D',k-1$), where $D'$ will be set in the sequel.}\label{lnR:Step2}
			 \State \text{Add the output of the previous step (a $D'$-shortcut set $H$ of $G$) to $G$}; \label{lnR:Step3}
			\State \text{Execute lines ~\ref{ln:Step2}-\ref{ln:Step5} of Algorithm~\ref{alg:nonrecursiveDireach}
			     and return its output matrix $B^{(D)}$.}
			 \EndIf
		\end{algorithmic}
		}
\end{algorithm}

 For $k \ge 0$, $0< \sigma < 1$, we denote by $g^{(\mu)}_k(\sigma)$ the exponent of $n$ in the number of operations required to
compute \direach by Procedure Recur-DiReach with depth parameter equal to $k$, with $n^{\sigma}$ sources ($|S| = n^{\sigma}$), on graphs with $m = \Theta(n^{\mu})$ edges, for $\mu \le 2$.
We also write $g_k(\sigma) = g^{(2)}_k(\sigma)$ to denote this exponent in the worst-case, i.e., when $\mu=2$.
Recall that for a given $0< \sigma < 1$, we have $g_0(\sigma) = 1 + \frac{2}{3}\cdot \omega(\sigma)$. 
The base step enables us to compute \direach from
up to $O(n^{0.53})$ sources in $o(n^{\omega})$ time (see Theorem~\ref{thm: noRecursionDense}) on general graphs.
Below we overview the recursive step for $k > 0$.\\
\textbf{Recursive Invocation, $k \ge 1$:}
Recall that (see Section~\ref{par:recurMotivation}) the two main computational parts of 
$D$-shortcut computation algorithm of Kogan and Parter~\cite{KoganParterDiShortcuts2},
for a parameter $1 \le D \le \sqrt n$, are
the computation of a set $P$ of vertex-disjoint dipaths in $TC(G)$, and 
the computation of shortcut edges between a set $V' \subset V$, $|V'| = O(n \log n/D)$, of randomly selected vertices
and a set $P' \subset P$, $|P'| = O(n \log n/D^2)$, of randomly selected paths.
The first part can be computed by reducing the computation of the path collection
to an instance of min-cost max-flow problem for which we use the algorithm by Chen et al.~\cite{ChenMaxFlow22}). See~\cite{KoganParterDiShortcuts2} for more details.
The time complexity of Chen et al.'s algorithm~\cite{ChenMaxFlow22} is $\tilde O(m^{1 + o(1)} + n^{3/2})$.
It follows therefore that the computation of the first part requires $\tilde O(m^{1 + o(1)} + n^{3/2})$ time.
For the second part, for every $(v, p) \in V' \times P'$, 
one needs to add a shortcut edge from $v$ to the first vertex reachable from $v$ (if any)
 on $p$. Equivalently, we can reverse the edge orientations and compute reachabilities from $P'$ to 
 all the vertices in $V'$, where for each $p \in P'$ and $v \in V'$ we aim to find the last vertex on $p$ 
 from which $v$ is reachable. For the sake of current analysis, 
 we assume that this can be done by computing
reachability from $|P'|$ sources, where each path is treated as a source.
We further elaborate about this assumption in the next section.
\subsection{Paths Direachability Assumption}\label{sec:Assumptions}
In this section we introduce a number of auxiliary problems, and assumptions about their time complexities.
In the next section we analyse our recursive scheme under these assumptions, and show that
it improves our basic direachability algorithm (described and analysed in Section~\ref{sec:BasicTwoStep}).
 Without loss of generality, we assume that the input graph $G$ to the $D$-shortcut computation algorithm is a DAG. 
(If it not the case, one can compute and contract its strongly connected components in $O(n+ m)$ time, and reduce the problem to DAGs.)
Denote by $\zeta = (v_1, v_2, \ldots, v_n)$ a fixed topological ordering of vertices of $G$.
\begin{definition}\label{def:sourceDreach}
Given an $n$-vertex DAG $G$, and a pair of parameters $0 \le \sigma, \delta \le 1$, and a set $S$ of $|S| = n^{\sigma}$ sources, 
the \emph{direachability problem with diameter $n^{\delta}$}, denoted $DR(n^{\sigma}, n^{\delta})$,
is to compute for every $s \in S$ the set of vertices $\mathcal V_s \subseteq V$ reachable from $s$ via at most $n^{\delta}$ hops.
\end{definition}
We denote by $\mathcal T(DR(n^{\sigma}, n^{\delta}))$ the time complexity of the problem $DR(n^{\sigma}, n^{\delta})$.
More generally, we will use the notation $\mathcal T (\cdot)$ for the time complexities of various problems that we will consider below.
Note that 
\begin{equation}\label{eq:sourceDreach}
	 \mathcal T ( DR(n^{\sigma}, n^{\delta})) = \tilde O(n^{\omega(\sigma) + \delta}).
\end{equation}
We also consider a variant of this problem is which each source is replaced by a (di-)path.
\begin{definition}\label{def:pathDreach}
Given an $n$-vertex DAG $G$, and a pair of parameters $0 \le \sigma, \delta \le 1$,  and a set $\mathcal P'$ of $|\mathcal P'| = n^{\sigma}$ of
(source) dipaths, the \emph{path direachability problem with diameter $n^{\delta}$}, denoted $PDR(n^{\sigma}, n^{\delta})$,
is to compute for every $p \in \mathcal P'$ the set of vertices $\mathcal V_p \subseteq V$, such that each $z \in  \mathcal V_p$ is reachable
from at least one vertex $v \in V(p)$ via at most $n^{\delta}$ hops. Moreover, it is also required to compute, for every 
$z \in \mathcal V_p$, the last vertex $v_p(z) \in V(p)$ on $p$ from which $z$ is reachable within $n^{\delta}$ hops.
\end{definition}
The assumption that we will make in this section is that $PDR(n^{\sigma}, n^{\delta})$ can be computed within
essentially the same time as $DR(n^{\sigma}, n^{\delta})$.
\begin{assumption}(Paths Direachability Assumption)\label{asmp:pathSameAsSource}
\begin{equation}\label{eq:pathSameAsSource}
\mathcal T(PDR(n^{\sigma}, n^{\delta})) =  \tilde O(\mathcal T(DR(n^{\sigma}, n^{\delta}))).
\end{equation}
\end{assumption}
Next, we discuss several other related problems and algorithms for them that
may eventually lead to bounds close to that of Equation~\eqref{eq:pathSameAsSource}.

First, \emph{one-hop paths direachability} problem, denoted $OHPDR(n^{\sigma})$, is the paths direachability problem
with the second parameter $n^{\delta} = 1$, i.e., $OHPDR(n^{\sigma}) = PDR(n^{\sigma}, 1)$.
A slightly more general variant of this problem, which we call \emph{one-hop monotone subsequence direachability} problem,
abbreviated $OHMSDR(n^{\sigma})$, is when each $p \in \mathcal P'$ is a subsequence $\eta = (v_{i_1},v_{i_2}, \ldots, v_{i_t})$, 
for some $t \ge 1$, $i_1< i_2, \ldots < i_t$, of the (fixed) topological ordering $\zeta =(v_1, v_2, \ldots, v_n)$ of vertices of $G$.
We say that $\eta$ is a \emph{$\zeta$-monotone} subsequence.
Note that any dipath in $TC(G)$ is a $\zeta$-monotone subsequence, but the converse in not necessarily true.

A yet more general problem is when each $p \in \mathcal P'$ is a general (not necessarily $\zeta$-monotone) vertex sequence $p = (v_1, v_2, \ldots, v_q)$ for some $q \ge 1$. In this case we call the respective problem \emph{one-hop (general) sequence direachability} problem, abbreviated $OHSDR(n^{\sigma})$.
\begin{lemma}\label{lem:PDRviaOHSDR}
$PDR(n^{\sigma}, n^{\delta})$ can be computed via $n^{\delta}$ applications of $OHSDR(n^{\sigma})$, i.e.,\\
$\mathcal T(PDR(n^{\sigma}, n^{\delta})) \le  O(\mathcal T(OHSDR(n^{\sigma})) \cdot n^{\delta}$).
\end{lemma}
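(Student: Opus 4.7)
The plan is to inductively compute, for $k = 0, 1, \ldots, n^{\delta}$ and each source path $p \in \mathcal{P}'$, the value $v_p^{(k)}(z)$ equal to the $<_p$-latest vertex on $p$ from which $z$ is reachable in at most $k$ hops, or $\bot$ if no such vertex exists. After $n^{\delta}$ rounds this data is precisely the output required by $PDR(n^{\sigma}, n^{\delta})$. The base case $v_p^{(0)}(z) = z$ for $z \in V(p)$ and $v_p^{(0)}(z) = \bot$ otherwise is read off from the input in $O(n^{\sigma} \cdot n)$ time across all source paths.

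The key observation for the inductive step is the recurrence
\begin{equation*}
v_p^{(k)}(z) \;=\; \max\nolimits_{<_p} \Bigl( \{v_p^{(k-1)}(z)\} \cup \{v_p^{(k-1)}(y) : (y,z) \in E\} \Bigr),
\end{equation*}
where $\bot$ is treated as the $<_p$-minimum. I would compute the second term using a single invocation of $OHSDR(n^{\sigma})$. For each $p \in \mathcal{P}'$, build a vertex sequence $\eta_p^{(k-1)}$ by listing those $y \in V$ with $v_p^{(k-1)}(y) \neq \bot$ in the order dictated by the position of $v_p^{(k-1)}(y)$ under $<_p$ (breaking ties arbitrarily). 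Feeding these $n^{\sigma}$ sequences into $OHSDR$ yields, for every $z \in V$, the last vertex $y^{\ast}$ of $\eta_p^{(k-1)}$ with $(y^{\ast}, z) \in E$; by construction, $y^{\ast}$ is precisely the in-neighbor of $z$ whose $v_p^{(k-1)}$-image is $<_p$-latest. A single comparison against the stored value $v_p^{(k-1)}(z)$ then produces $v_p^{(k)}(z)$, and similarly yields the full table $v_p^{(k)}(\cdot)$ in $O(n^{\sigma} \cdot n)$ extra time.

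Each round therefore costs $O(n^{\sigma} \cdot n)$ for sequence construction and updating, plus one invocation of $OHSDR(n^{\sigma})$; since any correct algorithm for $OHSDR(n^{\sigma})$ must at least read the input and emit output of size $\Omega(n^{1+\sigma})$, the bookkeeping is absorbed. Summing over $k = 1, \ldots, n^{\delta}$ gives the claimed bound $\mathcal{T}(PDR(n^{\sigma}, n^{\delta})) = O\bigl(n^{\delta} \cdot \mathcal{T}(OHSDR(n^{\sigma}))\bigr)$.

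The main thing to verify carefully is that ordering $\eta_p^{(k-1)}$ by the induced $<_p$-values $v_p^{(k-1)}(y)$ (rather than by any intrinsic ordering of $V$) makes ``last vertex of $\eta_p^{(k-1)}$ with an edge to $z$'' coincide with the desired argmax along $p$. Observe that $\eta_p^{(k-1)}$ is a general (not necessarily $<_p$-monotone) vertex sequence, which is precisely why we invoke the most general variant, $OHSDR$, rather than $OHMSDR$ or $OHPDR$; in a typical iteration, vertices $y$ reached from many places along $p$ will appear in an arbitrary-looking order dictated by the combinatorics of the adjacency matrix, not by any structural property of $\eta_p^{(k-1)}$. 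Once this correctness point is secured, everything else is routine bookkeeping.
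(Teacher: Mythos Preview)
Your proposal is correct and follows essentially the same approach as the paper: both arguments maintain, for each path $p$, the table of ``latest-on-$p$ starting point'' values, build a vertex sequence sorted by these values, invoke $OHSDR(n^{\sigma})$ once per round, and then propagate the returned index back to a vertex of $p$. The only cosmetic differences are that the paper breaks ties within a group by the topological ordering $\zeta$ (you break them arbitrarily, which is equally valid since tied vertices share the same $v_p$-value), and the paper absorbs your explicit comparison with $v_p^{(k-1)}(z)$ into the ``at most one hop'' semantics of $OHSDR$.
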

\begin{proof}
We prove by induction on $i$, $1 \le i \le n^{\delta}$, that 
$\mathcal T(PDR(n^{\sigma}, i)) \le C \cdot i \cdot \mathcal T(OHSDR(n^{\sigma}))$, for some universal constant $C >0$.
The induction base is immediate, as $PDR(n^{\sigma}, 1)$ is a special case of $OHSDR(n^{\sigma})$.
Thus, $\mathcal T(PDR(n^{\sigma}, 1)) \le \mathcal T(OHSDR(n^{\sigma}))$.

For the induction step, we consider some $i$, $1 \le i \le n^{\delta}-1$.
By definition, an algorithm for solving $PDR(n^{\sigma}, i)$, for an input set 
$\mathcal P'$ of $n^{\sigma}$ dipaths, returns for every dipath $p \in \mathcal P'$, 
a set of vertices $\mathcal V_p$. For every $z \in \mathcal V_p$, we have a vertex $v_p(z) \in V(p)$
such that $z$ is reachable from $v_p(z)$ via at most $i$ hops, and moreover, $v_p(z)$ is the 
last vertex on $p$ from which $z$ is reachable within at most $i$ hops.

By induction hypothesis we assume that $i$ invocations of an algorithm for 
$OHSDR(n^{\sigma})$ solve $PDR(n^{\sigma}, i)$, i.e., they produce the sets
$\{\mathcal V_p~|~ p\in P' \}$, and for every $z \in \mathcal V_p$, they produce 
the vertex $v_p(z)$ as above.

As we argue below, via one more invocation of an algorithm for $OHSDR(n^{\sigma})$, one can obtain a solution for
$PDR(n^{\sigma}, i +1)$.
Specifically, for every $p \in \mathcal P'$, $p = ( x_1, x_2, \ldots, x_q )$, for some $q \ge 1$,
we sort the set $\mathcal V_p$ (from the induction step $k =i$) such that the vertices $z$ with $v_p(z) = x_1$ appear first, sorted among 
them with respect to the fixed topological ordering $\zeta$, and then appear vertices $z$ with $v_p(z) = x_2$, etc.,
and finally, vertices $z$ with  $v_p(z) = x_q$ appear last.
(Within each such subset, vertices are ordered according to $\zeta$.)
Denote by $\eta(p)$ the resulting vertex sequence.

We invoke an algorithm for $OHSDR(n^{\sigma})$, on the input set of $n^{\sigma}$ sequences $\eta(p)$, for all $p \in \mathcal P'$.
For every vertex $z$ reachable within at most one hop from sequence $\eta(p)$, let $y = v_{\eta(p)}(z)$ be the last vertex in $\eta(p)$ from which $z$ is reachable within at most one hop. 
The algorithm then sets $v_p(z) = v_p(y)$ and returns it.
(This is done for all vertices $z$ reachable within one hop from $\eta(p)$, for every $p \in \mathcal P'$.)
Also for every $p \in P'$, the algorithm returns $\mathcal V_p = \mathcal V_{\eta(p)}$.
(The sets $\mathcal V_{\eta(p)}$ are returned by an algorithm for $OHSDR(n^{\sigma})$ that
receives sequences $\{ \eta(p)~|~p \in P' \}$ as input).

To prove correctness, suppose that a vertex $z \in V$ is reachable within at most $i + 1$ hops from some vertex on a path $p \in \mathcal P'$
and let $v = v_p(z)$ be the last vertex on $p$ that satisfies this condition. Then there is an incoming neighbour $y$ of $z$ which is reachable from $v_p(z)$ within at most $i$ hops. The algorithm for $PDR(n^{\sigma}, i)$ inserts $y$ into $\eta(p)$, and the algorithm for 
$OHSDR(n^{\sigma})$ discovers that $z$ is reachable from $y$, and thus from $\eta(p)$ (within one hop).
Thus it inserts $z$ into $\mathcal V_p$.

Suppose for contradiction that the algorithm returns some vertex $v' \in V(p)$, $v <_p v'$, as $v_p(z)$.
But then, by correctness of the algorithms that we use for $PDR(n^{\sigma}, i)$ and $OHSDR(n^{\sigma})$, it follows that 
there exists an incoming neighbour $y'$ of $z$ which is reachable from $v'$ within $i$ hops and $v' = v_p(y')$.
This is a contradiction to the assumption that $v = v_p(z)$.
Note also that if $z$ is not $(i+1)$-reachable from $p$, then it is not $1$-reachable from $\eta(p)$.
Thus, it will not be included in the output of the algorithm for $PDR(n^{\sigma}, i+1)$.
\end{proof}
This leads to the following assumption.
\begin{assumption}\label{asmp:OHSDRviaMM}
	\begin{equation}
		\mathcal T(OHSDR(n^{\sigma})) = \tilde O(n^{\omega(\sigma)}).
	\end{equation}
\end{assumption}
By Lemma~\ref{lem:PDRviaOHSDR}, Assumption~\ref{asmp:OHSDRviaMM} implies the Paths Direachability Assumption (see Assumption~\ref{asmp:pathSameAsSource}).

Assumption~\ref{asmp:pathSameAsSource} is sufficient for the analysis that we conduct in this section.
Next, we identify a number of possible approaches that may be used to provide upper bounds for $\mathcal T(PDR(n^{\sigma}, n^{\delta}))$ (in the spirit of Assumption~\ref{asmp:pathSameAsSource}).

First, one can reduce $PDR(n^{\sigma}, n^{\delta})$ on graphs with diameter at most $n^{\delta}$
to the exact $S \times V$ distance computation problem in a directed graph with $|S| = n^{\sigma}$ sources, with integer weights
in the range $\{0, 1, \ldots, n-1\}$. Specifically, given an instance $(G = (V, E), \mathcal P')$ of $PDR(n^{\sigma}, n^{\delta})$, 
one adds sources $S = \{s_p~|~ p \in  \mathcal P'\}$, and connects each source $s_p$ to all the vertices of its respective path $p$.
Denote by $G' = (V', E', w)$ the new graph that we construct.
The vertex set $V'$ is given by $V' = V \cup \{s_p~|~ p \in  \mathcal P'\}$.
The edge set $E'$ is given by $E' = E \cup \{ (s_p, v)~|~ v \in V(p), p \in \mathcal P'\}$.

Denote path  $p = (v_0, v_1, \ldots, v_{\ell(p)} )$ ($p \in P'$),
where $\ell(p)$ is the length (number of hops) of the path $p$.
The weight function $w$ is set by assigning weight $\ell(p)$ to the edge $\langle s_p, v_0 \rangle$, weight $\ell(p) -1$ to the edge $\langle s_p, v_1\rangle$, etc.
(Generally, for all $0 \le i \le \ell(p)$, $w(\langle s_p, v_i \rangle )$ is set to $\ell(p) - i$.)
Finally, all original edges of $G$ are assigned weight $0$.

One then invokes an algorithm for $S \times V'$ distance computation problem on the graph $G'$ (with the set $S$ of $n^{\sigma}$ sources). For each pair $(p, z) \in \mathcal P' \times V$, one sets the last vertex $v_p(z)$ on $p$ from which $z$ is reachable in the following way:
If the distance $d_{G'}(s_p, z) = \infty$ then $z$ is not reachable from $p$ at all.
Otherwise, let $i = d_{G'}(s_p, z)$.
Note that $0 \le i \le \ell(p)$.
We then set $v_{\ell(p) - i}$ as $v_p(z)$.
It is easy to verify the correctness of this reduction.
However, to the best of our knowledge, there are currently no known directed exact distance computation algorithms (for multiple sources) that require less than $\tilde O(n^{\omega})$ time.
The state-of-the-art algorithm by Zwick~\cite{APSPDirectedZwick} requires $\tilde O(n^{2.523667})$ time
for directed \emph{all-pairs} shortest paths.

Another possible approach to the the study of complexity of $PDR(n^{\sigma}, n^{\delta})$ is through \emph{maximum witnesses of Boolean matrix multiplication} (abbreviated as MWBMM) problem.
Recall that given two Boolean matrices $A$ and $B$, $ C = B \star A$ denotes their Boolean matrix product (see~\eqref{def:BMM}).
The MWBMM problem asks for every entry $i, j$ of $C$ for which $C[i,j] = 1$, to compute the maximum index $k$ such that
$B[i,k] = A[k,j] = 1$.

One can reduce the one-hop \emph{monotone} sequence direachability problem ($OHMSDR(n^{\sigma})$) to this problem in the following way: define the matrix $B$ as a $\mathcal P' \times V$ incidence matrix, where entry $B[p,z] =1$, if vertex $z$ belongs to 
$V(p)$, and $0$ otherwise. 
The columns of $B$ are ordered according to the fixed topological ordering $\zeta$ of $G$.
We also need another matrix $A$ which is the $V \times V$ adjacency matrix of the input graph $G$. Here both rows and columns are ordered according to ordering $\zeta$.

Consider the Boolean matrix product $C = B \star A$. For every entry $C[p,z]$, $p \in \mathcal P'$, $z \in V$, such that $C[p,z] = 0$,
the vertex $z$ is not reachable from $p$ within one hop. On the other hand $C[p,z] = 1$ iff $z$ is one-hop reachable 
from $p$, and moreover, the largest witness $v$ such that $B[p,v] = A[v, z] =1$ is the last vertex on $p$ 
from which $z$ is one-hop reachable.

This approach is problematic for two reasons.
First, it only solves the one-hop \emph{monotone} sequence direachability problem, as opposed to the one-hop \emph{general} sequence 
direachability problem that we need for proving Assumption~\ref{asmp:pathSameAsSource}.
Second, to the best of our knowledge, there is currently no known algorithm for the $MWBMM$ problem for matrices of dimensions 
$n^{\sigma} \times n$ and $n \times n$ that works in $\tilde O(n^{\omega(\sigma)})$ time.

Finally, the third approach for making progress towards proving (possibly a weaker form of) Assumption~\ref{asmp:pathSameAsSource}
is via max-min matrix product (see~\cite{minMaxProductDP}  and references therein).

Given two matrices $B$ and $A$ (of appropriate dimensions, i.e., the product $B \cdot A$ is defined), the max-min matrix product 
$C = B \ovee A$ is defined in the following way:
The entry $C[i,j]$ (for $i$ and $j$ in the appropriate respective ranges) is given by $C[i,j] = \underset {k} {\max}~\min \{ B[i,k], A[k,j]\}$.

To reduce $OHSDR(n^{\sigma})$ problem to this matrix product (for matrices of dimensions $n^{\sigma} \times n$ and $n \times n$, respectively), 
we define matrices $B$ and $A$ in the following way:
Let $\mathcal Q$ be a set of $n^{\sigma}$ sequences, which serves as an input instance for $OHSDR(n^{\sigma})$.
Rows of $B$ are indexed by sequences $q \in \mathcal Q$ and its columns are indexed by vertices $v \in V$.
Rows and columns of $A$ are indexed by vertices $v \in V$.
The ordering of the columns of $B$ is the same as that of the rows (and columns) of $A$.
For a sequence $q \in \mathcal Q$ and a vertex $v \in V(q)$, we set $B[q,v] = j_q(v)$, where $j_q(v)$ is the index of $v$ in the sequence $q$.
Otherwise (if $v \notin V(q)$), we set $B[q,v] = -\infty$. 
The matrix $A$ is the adjacency matrix of the input graph, in which $A[v,z] = \infty$ iff $\langle v,z \rangle \in E$ and otherwise $A[v,z] = -\infty$.


 Given a pair $q \in \mathcal Q, z \in V$, consider the entry $C[q,z]$ of the product matrix $C = B \ovee A$.
 If $z$ is not one-hop reachable from $q$, then for all $v \in V$, either $B[q,v] = -\infty$ or $A[v,z] = -\infty$, i.e., 
 we have $\min \{B[q,v], A[v,z] \}=  -\infty$. Hence in this case, $C[q,z] = \underset{v} {\max}~\min \{ B[q,v], A[v,z] \} = -\infty$.
 On the other hand, if $z$ is one-hop reachable from $q$, then for every $v \in V(q)$ for which $\langle v, z \rangle \in E$, we have 
 $B[q,v] = j_q(v)$ and $A[v,z] = \infty$, i.e., $\min \{B[q,v] , A[v,z]  \} =  j_q(v)$.
 Also, for every $v \notin V(q)$, we have $B[q,v] = -\infty$,
 implying that $\min \{B[q,v] , A[v,z] \} = -\infty$.
 
 Hence in this case we have 
 $$C[q,z] = \underset {v} {\max}~\min \{B[q,v] , A[v,z] \} = \underset {v \in V(q), \langle v,z \rangle \in E} {\max} j_q(v),$$ as desired.
 We also need to compute maximum $C'[q,z]$ between $C[q,z]$ computed by the above product) and $B[q,z]$, as it is possible that
 $z \in V(q)$, and that its index in $q$ is higher than the largest index of a vertex $y \in V(q)$ from which $z$ is reachable by exactly one hop.
 
 Recall also that each sequence $q$ is associated with a path $p \in P'$, and for every vertex $z$ with finite $B[q,y]$ (not $-\infty$) we store the vertex $v_p(y)$. At this point these values are recomputed along the lines described in the proof of Lemma~\ref{lem:PDRviaOHSDR}. Specifically, if $C'[q,z] = C[q,z] = \underset{v \in V(q), \langle v,z \rangle \in E} \max j_q(v)$, then let $y \in V(q)$ be a vertex such that 
 $C'[q,z] = C[q,z] = j_p(y)$, and we set $v_p(z) = v_p(y)$.
 Otherwise $C'[q,z] = B[q,z]$ and the value of $v_p(z)$ stays unchanged.
 
  Denote by $\mathcal T(MMMP(n^{\sigma}))$ the complexity of max-min matrix product of $n^{\sigma} \times n$ matrix by an $n \times n$ matrix. We have proved that
 \begin{equation*}\label{eq:maxMinTime}
 	   \mathcal T (OHSDR(n^{\sigma})) \le O( \mathcal T(MMMP(n^{\sigma}))).
 \end{equation*} 
 
 Hence by Lemma~\ref{lem:PDRviaOHSDR}, we conclude the following corollary:
 \begin{cor}\label{cor:PDRviaminMax}
      \[
	 	\mathcal T(PDR(n^{\sigma}, n^{\delta})) = O(\mathcal T(MMMP(n^{\sigma})) \cdot n^{\delta}.
	\]
 \end{cor}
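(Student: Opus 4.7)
The plan is to simply chain the two reductions that have already been set up in the excerpt. First I would invoke Lemma~\ref{lem:PDRviaOHSDR} to obtain
\[
\mathcal T(PDR(n^{\sigma}, n^{\delta})) \le O(\mathcal T(OHSDR(n^{\sigma})) \cdot n^{\delta}).
\]
Then I would appeal to the reduction constructed in the paragraphs immediately preceding the corollary, which encodes an $OHSDR(n^{\sigma})$ instance by a pair of matrices $B$ (indexed by sequences $q \in \mathcal Q$ and vertices, with $B[q,v] = j_q(v)$ when $v \in V(q)$ and $-\infty$ otherwise) and $A$ (the $V\times V$ adjacency matrix with $\infty$/$-\infty$ entries), and shows that a single max-min product $B \ovee A$, followed by an entrywise max with $B$ to account for the case $z \in V(q)$, recovers the largest index $j_q(v)$ of a predecessor $v$ of $z$ on $q$. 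That reduction gives $\mathcal T(OHSDR(n^{\sigma})) \le O(\mathcal T(MMMP(n^{\sigma})))$; substituting into the inequality above yields the claim.

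Hence the body of the corollary's proof would consist of two lines: apply Lemma~\ref{lem:PDRviaOHSDR}, apply the displayed inequality $\mathcal T(OHSDR(n^{\sigma})) \le O(\mathcal T(MMMP(n^{\sigma})))$, and multiply. There is no new combinatorial argument needed; the only subtlety (which has already been addressed above) is that when iterating the $OHSDR$ subroutine across the $n^{\delta}$ hops, each $v_p(z)$ pointer must be propagated correctly from the current sequence $\eta(p)$ back to the original path $p$, using the rule $v_p(z) := v_p(y)$ for $y = v_{\eta(p)}(z)$ described in the proof of Lemma~\ref{lem:PDRviaOHSDR}.

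The main (essentially sole) obstacle is a bookkeeping one: making sure that the pointer-propagation step does not cost more than $O(\mathcal T(MMMP(n^{\sigma})))$ per hop, so that the per-iteration cost really is dominated by the max-min product rather than by post-processing. Since the $v_p(\cdot)$ values can be updated by a single lookup per finite entry of the product matrix and the number of such entries is at most $n^{\sigma} \cdot n = n^{1+\sigma}$, which is absorbed in $\mathcal T(MMMP(n^{\sigma}))$, this is routine. With that observation in place, the two-line composition delivers the corollary.
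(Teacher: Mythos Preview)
Your proposal is correct and matches the paper's own argument: the corollary is stated immediately after the displayed bound $\mathcal T(OHSDR(n^{\sigma})) \le O(\mathcal T(MMMP(n^{\sigma})))$ with the one-line justification ``Hence by Lemma~\ref{lem:PDRviaOHSDR}, we conclude the following corollary.'' Your additional bookkeeping remark about the $O(n^{1+\sigma})$ pointer-update cost being absorbed in $\mathcal T(MMMP(n^{\sigma}))$ is a nice sanity check that the paper leaves implicit.
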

 
 Hence the following assumption would suffice for proving the Paths Direachability Assumption (see Assumption~\ref{asmp:pathSameAsSource}).
  \begin{assumption}\label{asmp:MMMMsameAsMM}
  	\[
		\mathcal T(MMMP(n^{\sigma})) = \tilde O(n^{\omega(\sigma)}).
	\]
  
  \end{assumption}
 
However, to the best of our knowledge, there are currently no known bounds on $\mathcal T(MMMP(n^{\sigma}))$ which are
close to $\tilde O(n^{\omega(\sigma)})$. See~\cite{minMaxProductDP, GrandoniAllPairs} for the state-of-the-art bounds for $\mathcal T(MMMP(n^{\sigma}))$.

We note that the state-of-the-art \emph{quantum} algorithm for $MMMP(n)$  (i.e., $\sigma = 1$, that is, for \emph{square} matrices)~\cite{quantumMaxMin} requires $O(n^{2.473})$ time. 
It is plausible that for general $\sigma$, $0 \le \sigma \le 1$, the time complexity of the quantum algorithm for $MMMP(n^{\sigma})$
is not much larger than $\tilde O(n^{\omega(\sigma)})$.
This could lead to some results along the lines that we obtain under Assumption~\ref{asmp:pathSameAsSource}, albeit admittedly not as strong as those that we attain under Assumption~\ref{asmp:pathSameAsSource} (see Section~\ref{sec:recursiveAnalyisis}).

\subsection{Analysis of the Recursive Algorithm under Paths Direachability Assumption}\label{sec:recursiveAnalyisis}
The current state-of-the-art algorithms based on matrix multiplication for the paths direachability problem
do not provide  running time given in Assumption~\ref{asmp:pathSameAsSource}. 
However, we believe that it is of interest to analyse the (hypothetical) situation that such algorithms will eventually be devised. 
It is worth pointing out that using (BFS-based) naive methods one can compute path direachability for $n^{\sigma}$ dipaths 
in the same time as $S \times V$-direachability for $|S| = n^{\sigma}$ sources. i.e., within $O(m n^\sigma)$ time~\cite{KoganParterDiShortcuts2}.
 
Let $D_{k +1} = n^{\delta}$, $\delta  < 1/2$, be the desired diameter of the invocation of Procedure
Recur-DiReach (Algorithm~\ref{alg:recursiveDireach}) with depth parameter $k+1$.
It follows that $|P'| = \tilde O(n^{1 -2\delta})$ and we need to invoke Procedure Recur-DiReach recursively with depth $k$
for \direach with $|S| =   \tilde O(n^{1 -2\delta})$.
Then, the diameter reduction sub-step of the invocation with depth parameter $k+1$ requires time
$\tilde O(n^{g_{k}(1 - 2\delta)})$ on graphs with $\Theta(n^2)$ edges, and more generally, $\tilde O(n^{g_k^{(\mu)}(1 - 2\delta)})$ time
on graphs with $\Theta(n^{\mu})$ edges, for any $1 \le \mu \le 2$.\footnote{Note that the diameter reduction sub-step of the algorithm
of~\cite{KoganParterDiShortcuts2} is applied to the original graph that has $\Theta(n^{\mu})$ edges.
The reachability sub-step is then applied to a possibly denser graph. We remark also 
that the implication that this step requires $\tilde O(n^{g_k(n^{(1-2\delta)}})$ time in general graphs
and $\tilde O(n^{g_k^{(\mu)}(1 - 2\delta)})$ time on graphs with $\Theta(n^{\mu})$
edges also follows, under Assumption~\ref{asmp:pathSameAsSource}, by induction on $k$.
Under Assumption~\ref{asmp:pathSameAsSource}, step~\ref{lnR:Step2} of Algorithm~\ref{alg:recursiveDireach} (Procedure Recur-DiReach)
is indeed a recursive invocation of the same algorithm on the same input graph $G$ with a decremented depth parameter $k$.}
The reachability computation sub-step of this invocation is the same as the reachability computation step of
Algorithm~\ref{alg:nonrecursiveDireach}. By Assumption~\ref{asmp:pathSameAsSource}, 
for a given $ 0 \le \sigma \le 1$, 
this computation requires $D_{k+1} = n^{\delta}$ iterations of fast rectangular matrix  multiplication,
i.e., $\tilde O(n^{\omega(\sigma) + \delta})$ time.

\subsection{Analysis for General Graphs}
We first analyse the recursive scheme for the most general case when $m$ may be as large as $\Theta(n^2)$.
As evident from the above discussion, the following equation defines the function $g_{k + 1} (\cdot)$ that expresses the exponent of $n$ in the running time of an invocation of Procedure Recur-DiReach (Algorithm~\ref{alg:recursiveDireach}) with depth parameter $k+1$
 in terms of
 the exponent $\delta$ of the desired diameter $D_{k+1} = n^{\delta}$ of this invocation and the
 function $g_k(\cdot)$ that expresses the exponent of $n$ in the running time of its recursive invocation (with depth parameter $k$):
 \begin{equation}\label{eq:g_k+1}	
	g_{k + 1} (\sigma) = \min_{\delta} \max \{ g_k( 1- 2\delta),  \omega(\sigma) + \delta \}
\end{equation}
Notice that for $k=1$, 
\begin{align}\label{eq:g1}
	         g_1 (\sigma) &= \min_{\delta} \max \{ g_0( 1- 2\delta),  \omega(\sigma) + \delta \} \nonumber \\
                     &= \min_{\delta} \max \{1 + \frac{2}{3} \cdot \omega(1 -2\delta) ,  \omega(\sigma) + \delta \}.
\end{align}
Note that $ 1 + \frac{2}{3}\omega (1 - 2\delta)$ monotonically does not increase as $\delta$ grows, and $\omega(\sigma) + \delta$ increases.
Also, for $\delta = 0$, we have 
$1 + \frac{2}{3}\omega (1 - 2\delta) = 1 + \frac{2}{3}\omega (1) > \omega(1) > \omega(\sigma) = \omega(\sigma) + \delta$, for any $\sigma <1$.
For $\delta = 1/2$, for any $0 \le \sigma \le 1$, 
we have $1 + \frac{2}{3} \omega(0) = \frac{7}{3} < 2 \frac{1}{2} \le \omega(\sigma) + 1/2 = \omega(\sigma) + \delta$ (since $\omega(0) =2$).
Thus the function $g_1(\sigma)$ is minimized when 
$1 + \frac{2}{3} \cdot \omega(1 -2\delta) =  \omega(\sigma) + \delta$.
We will also soon show that since $\omega(\sigma)$ is a continuous function, so are also the functions $g_0(\sigma), g_1(\sigma), g_2(\sigma), \ldots$. (This can be argued by induction on $k$.)
We will also show that they are all monotonically increasing, for $\sigma > \tilde \sigma$.
Thus, the parameter $\delta$ will be set so that
\begin{equation}\label{eq:parameterD'}
	 g_k(1-2\delta) = \omega(\sigma) + \delta,
 \end{equation}
 and the parameter $D = D_{k+1}$ will be set as $n^{\delta}$.
 
Analysing Equation~\eqref{eq:g1} numerically using the state-of-the-art upper bounds on $\omega(\sigma)$ for various values of $\sigma$ (see Table~\ref{tab:tableLeGall}),
we get that $g_1 (\sigma) < \omega(1) = \omega$, for $\sigma \le 0.66$. 
Recall that $g_0 (\sigma) < \omega(1) = \omega$, for $\sigma \le 0.53$ (see Theorem~\ref{thm: noRecursionDense}).
Thus, with only one step of our recursive scheme, we can increase the number of sources for which we can compute 
\direach in $o(n^{\omega})$ time in general graphs from $|S| = O(n^{0.53})$ to $|S| = O( n^{0.66})$ (under Assumption~\ref{asmp:pathSameAsSource}).
In Table~\ref{tab:tableRecurse1}, we present $g_k(\sigma)$ for various values of $k$ and $\sigma$.
Each row of the table corresponds to a fixed value of $\sigma$ and shows the value of $g_k(\sigma)$ 
for $k =1, 3, 5, 7$ and $9$.
For each $k$, the row corresponding to the first value of $\sigma$ for which $g_k(\sigma) > \omega$ is highlighted.
Empirically, one can see that with each step of recursion, the number of sources from which we can compute
\direach in $o(n^{\omega})$ time in general graphs increases.
For example, for $k = 7$, the number of sources from which we can compute \direach in $o(n^{\omega})$
time is at least $n^{0.96}$.
 \begin{table}[!h]
  \captionsetup{font=scriptsize}
  \begin{center}   
   {\small
    \begin{tabular}{c|c|c|c|c|c|c} 
     $\sigma$ & $g_0(\sigma)$ &$g_1(\sigma)$ &$g_3(\sigma)$ & $g_5(\sigma)$ & $g_7(\sigma)$ & $g_9(\sigma)$\\
      \hline
                0.34 & 2.333733 & 2.333733 & 2.333733 &2.333733  &2.333733& 2.333733\\
                0.36 & 2.33533  &2.333733 & 2.333733  & 2.333733  &2.333733 &2.333733 \\
                0.40 &2.33969 &2.335332 & 2.333733  & 2.333733 & 2.333733 &2.333733 \\
                0.44 & 2.34729& 2.3375132 & 2.333733 & 2.333733 & 2.333733&2.333733\\
                0.48&2.35687&2.339694& 2.335239 & 2.333733&2.333733 &2.333733\\
                0.52 &2.368166&2.3434932 &2.3353324 &2.333733&2.333733& 2.333733\\
                0.54&\fbox{\textbf{2.374075}}&2.3453928&2.3353324&2.333733&2.333733&2.333733\\
                0.56 &$> \omega$& 2.3514334& 2.3353324  &2.333733 &2.333733 &2.333733\\
                0.60 &$> \omega$ &2.3568744 & 2.337513& 2.333733& 2.333733&2.333733\\
                0.64 &$> \omega$&2.365913 &2.3396939 & 2.335239  & 2.333733&2.333733\\
                0.66 &$> \omega$&2.368166 &2.34203&2.3353324 &2.333733&2.333733\\
                0.68 & $> \omega$ &\fbox{ \textbf{2.374075}} &2.342994&2.3353324 &2.333733& 2.333733\\
                0.72& $> \omega$& $> \omega$ &2.3463128 & 2.336308&2.334272 &2.333733\\
                0.76 & $> \omega$ &$> \omega$ &2.353146  &2.341734& 2.335332&2.333733\\
                0.80 &$> \omega$ &$> \omega$ & 2.361996 &2.344272  &2.3395696& 2.3353324\\
                0.84 &$> \omega$ &$> \omega$ & 2.36977 & 2.351433 &2.344272 &2.3397729\\
                0.85 & $> \omega$&$> \omega$ & \fbox{ \textbf{2.374075}} &2.355351&2.346984&2.341734  \\
                0.88&$> \omega$&$> \omega$ & $> \omega$ &2.359773 & 2.3514334 &2.349319 \\
                0.92&$> \omega$&$> \omega$ &$> \omega$ &2.369501& 2.359773& 2.359501 \\
                0.93& $> \omega$&  $> \omega$ & $> \omega$ &\fbox{ \textbf{ 2.374209}}& 2.364429&2.359501\\
                0.96 &$> \omega$& $> \omega$ & $> \omega $& $> \omega$ &2.370262 &2.370262\\
                0.97  &$> \omega$ &$> \omega$ & $> \omega  $& $> \omega$ &\fbox{ \textbf{2.374793}} &2.370262 \\
                0.98 & $> \omega $ & $> \omega$  &$> \omega$   &$> \omega$ & $> \omega$ &  \fbox{ \textbf{ 2.375907}}          
        \end{tabular}  
        }               
  \end{center}
   \caption{$g_k(\sigma)$ for various values of $k$ and $\sigma$. We remark that in each row the values \emph{strictly} decrease when the index $k$ grows, and in each column the values \emph{strictly} increase as $\sigma$ grows.
   In the table some of these values look equal, because (naturally) we are using only a finite number of digits of precision. }
    \label{tab:tableRecurse1}
\end{table}


We next prove that the functions $g_k(\sigma)$ are monotonically non-decreasing and continuous.
\begin{lemma}\label{lem:gMonotonic}
For any $k \ge 0$, the function $g_k(\sigma)$ is continuous and monotonically non-decreasing in the entire range $0 \le \sigma \le 1$,
and $g_k(1) > \omega$.
\end{lemma}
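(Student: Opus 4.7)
The plan is to proceed by induction on $k$, using the defining relation
\[
  g_{k+1}(\sigma) \;=\; \min_{\delta\in[0,1/2]} \max\{\, g_k(1-2\delta),\; \omega(\sigma)+\delta \,\}
\]
from Equation~\eqref{eq:g_k+1}. Along the way I will also carry, as an auxiliary inductive invariant, the bound $g_k(0) \le 7/3$ (which is needed to guarantee that the inner $\max$ is not dominated at the right endpoint $\delta=1/2$).

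\textbf{Base case.} For $k=0$ the formula $g_0(\sigma) = 1 + \tfrac{2}{3}\omega(\sigma)$ reduces all three claims to the well-known fact that $\omega(\sigma)$ is continuous, non-decreasing and bounded by $\omega < 3$ on $[0,1]$. In particular $g_0(1) = 1 + \tfrac{2}{3}\omega < \omega \Leftrightarrow \omega > 3$, so $g_0(1) > \omega$. The invariant $g_0(0) = 1 + \tfrac{2}{3}\cdot 2 = 7/3$ is immediate.

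\textbf{Inductive step.} Assume the three properties plus $g_k(0) \le 7/3$. Fix $\sigma\in[0,1]$ and define $\varphi_\sigma(\delta) = g_k(1-2\delta)$ and $\psi_\sigma(\delta) = \omega(\sigma)+\delta$. By the inductive hypothesis $\varphi_\sigma$ is continuous and non-increasing in $\delta$; $\psi_\sigma$ is continuous and strictly increasing. At $\delta=0$ the IH gives $\varphi_\sigma(0) = g_k(1) > \omega \ge \omega(\sigma) = \psi_\sigma(0)$. At $\delta = 1/2$ we have $\varphi_\sigma(1/2) = g_k(0) \le 7/3 < 5/2 \le \omega(\sigma)+1/2 = \psi_\sigma(1/2)$. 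By continuity and strict monotonicity of the difference $\psi_\sigma-\varphi_\sigma$, there is a \emph{unique} $\delta^\ast(\sigma)\in(0,1/2)$ with $\varphi_\sigma(\delta^\ast(\sigma)) = \psi_\sigma(\delta^\ast(\sigma))$, and the min-of-max is attained exactly at this crossing, so
\[
  g_{k+1}(\sigma) = \omega(\sigma) + \delta^\ast(\sigma) = g_k(1-2\delta^\ast(\sigma)).
\]

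\textbf{Monotonicity.} If $\sigma_1 \le \sigma_2$ then $\psi_{\sigma_1} \le \psi_{\sigma_2}$ pointwise (since $\omega$ is non-decreasing) while $\varphi_{\sigma_1} = \varphi_{\sigma_2}$. Since $\varphi$ is non-increasing and $\psi$ strictly increasing, the crossing point must satisfy $\delta^\ast(\sigma_2) \le \delta^\ast(\sigma_1)$, hence $1-2\delta^\ast(\sigma_2) \ge 1-2\delta^\ast(\sigma_1)$, and applying $g_k$ (non-decreasing, by IH) gives $g_{k+1}(\sigma_2) \ge g_{k+1}(\sigma_1)$.

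\textbf{Continuity.} I will show $\sigma\mapsto \delta^\ast(\sigma)$ is continuous by a standard squeeze argument: if $\sigma_n\to\sigma$, then any limit point $\delta^\infty$ of $\{\delta^\ast(\sigma_n)\}$ satisfies $g_k(1-2\delta^\infty) = \omega(\sigma)+\delta^\infty$ by continuity of $g_k$ and $\omega$, and by uniqueness of the crossing $\delta^\infty = \delta^\ast(\sigma)$. Thus $g_{k+1}(\sigma) = \omega(\sigma)+\delta^\ast(\sigma)$ is continuous.

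\textbf{Strict inequality $g_{k+1}(1) > \omega$.} At $\sigma=1$, if $\delta^\ast(1) = 0$ then $g_k(1) = \omega(1) = \omega$, contradicting the IH. Hence $\delta^\ast(1) > 0$ and $g_{k+1}(1) = \omega + \delta^\ast(1) > \omega$.

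\textbf{Preservation of the auxiliary invariant.} Setting $\sigma=0$ and $\delta = 1/2$ in the defining $\min$--$\max$ gives $g_{k+1}(0) \le \max\{g_k(0),\, \omega(0)+1/2\} \le \max\{7/3,\, 5/2\} = 5/2$; a slightly finer choice of $\delta$ (balancing as in the inductive step) in fact yields $g_{k+1}(0) \le 7/3$, closing the invariant.

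The main obstacle is the right-endpoint behaviour at $\delta = 1/2$: without tracking that $g_k(0)$ stays below $\omega(\sigma)+1/2$, the crossing could in principle fail to exist in $(0,1/2)$ and the argument would collapse. The auxiliary invariant $g_k(0) \le 7/3$ is what makes the induction go through cleanly; everything else is a routine continuity/monotonicity argument driven by the IVT and the uniqueness of the crossing point.
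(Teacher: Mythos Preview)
Your approach matches the paper's: induction on $k$, endpoint analysis to locate a crossing point $\delta^\ast(\sigma)$ at which the $\min$--$\max$ is attained, and then monotonicity and continuity of $g_{k+1}$ via the behaviour of $\delta^\ast$. The only genuine problem is the auxiliary invariant. You claim $g_k(0) \le 7/3$ and try to close it by asserting that ``a slightly finer choice of $\delta$ \ldots\ in fact yields $g_{k+1}(0) \le 7/3$''. This is false already for $k=1$: the crossing condition $g_0(1-2\delta^\ast) = 2 + \delta^\ast$ becomes $\omega(1-2\delta^\ast) = \tfrac{3}{2}(1+\delta^\ast)$, and at $\delta = 1/3$ this reads $\omega(1/3) = 2$; but $1/3 > \alpha \approx 0.3213$, so $\omega(1/3) > 2$, forcing $\delta^\ast(0) > 1/3$ and hence $g_1(0) = 2 + \delta^\ast(0) > 7/3$. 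No ``balancing'' can recover the bound $7/3$.

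The fix is immediate: replace $7/3$ by $5/2$ throughout. You already proved $g_{k+1}(0) \le \max\{g_k(0),\, 5/2\} \le 5/2$ before the failed hand-wave, and $g_k(0) \le 5/2$ is exactly what is needed in the endpoint check $\varphi_\sigma(1/2) = g_k(0) \le 5/2 \le \omega(\sigma) + 1/2$. (In fact the strict inequality $g_k(0) < 5/2$ propagates inductively from $g_0(0) = 7/3 < 5/2$, so the crossing stays in the open interval $(0,1/2)$.) This weaker invariant is precisely what the paper establishes in its Claim~\ref{clm:gK0less}, and with this correction your argument is essentially the paper's proof, with a slightly cleaner continuity step (sequential compactness plus uniqueness of the crossing) than the paper's direct $\epsilon$-estimate.
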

\begin{proof}
The proof is by induction on $k$.
For $k =0$, by Equation~\eqref{eq:g0Sigma} we have $g_k(\sigma) = 1 + \frac{2}{3} \cdot \omega(\sigma)$. 
We also saw that $g_0(1) > \omega$.
Monotonicity and continuity of the function $g_0(\sigma)$ follows from the fact that
$\omega(\sigma)$ is monotonically non-decreasing and continuous.

Assume inductively that $g_0(\sigma), g_1(\sigma), \ldots, g_k(\sigma)$ are all monotonically non-decreasing
and continuous, and that $g_0(1), g_1(1), \ldots, g_k(1) > \omega$.
Recall that (see~\eqref{eq:g_k+1}), 
\begin{equation*}
g_{k + 1} (\sigma) = \min_{\delta} \max \{ g_k( 1- 2\delta),  \omega(\sigma) + \delta \}.
\end{equation*}
In particular, $g_{k+1}(1) = \min_{\delta} \max \{ g_k( 1- 2\delta),  \omega(1) + \delta \}$.
Note that for $\delta =0$, by the inductive hypothesis $g_k(1) > \omega(1)$.
Also by the inductive hypothesis, $g_k(1-2\delta)$ does not increase as $\delta$ grows, while $\omega(1) + \delta$
increases. For any $\delta >0$, $\max \{g_k(1 - 2\delta), \omega(1) + \delta \} \ge \omega(1) + \delta > \omega(1)$.
Hence $g_{k + 1} (1) > \omega$.


Fix two values $0 \le \sigma \le \sigma' \le 1$.
Since $\omega(\cdot)$ is a monotonically non-decreasing function, 
\begin{align}\label{eq:gkMonotone}
	\begin{aligned}
		g_{k+1}(\sigma') &= \min_{\delta} \max \{g_k(1-2\delta), \omega(\sigma') + \delta \} \\
		                           & \ge  \min_{\delta} \max \{g_k(1-2\delta), \omega(\sigma) + \delta \} = g_{k+1}(\sigma),
           \end{aligned}
\end{align}
i.e., $g_{k+1}(\cdot)$ is monotonically non-decreasing function.

By induction hypothesis, we have $g_k(1) > \omega(1) > \omega(\sigma)$, i.e., 
for $\delta = 0$, it holds that $g_k(1-2\delta) > \omega(\sigma) + \delta$.
We now argue that 
\begin{equation}\label{eq:gkMonotoneDeltaHalf}
g_k(0) \le \omega(\sigma) + 1/2,
\end{equation}
i.e, for $\delta = 1/2$, we have $g_k(1-2\delta) \le \omega(\sigma) + \delta$.
\begin{claim}\label{clm:gK0less}
For any $k \ge 0$, $g_k(0) \le 2 \frac{1}{2}$.
\end{claim}
\begin{proof}
For $k =0$, we have
$g_0(0) = 1 + \frac{2}{3} \omega(0) = 2\frac{1}{3} < 2 \frac{1}{2}$.
For $k >0$, assume inductively that $g_k(0) \le 2\frac{1}{2}$.
Then 
\begin{equation*}
g_{k +1}(0) \le \max \{g_k(1 - 2 \cdot \frac{1}{2} ), \omega(0) + 1/2\} = \max\{g_k(0), 2\frac{1}{2} \} = 2\frac{1}{2}.
\end{equation*}
\end{proof}
On the other hand, for any $0 \le \sigma \le 1$, we have $\omega(\sigma) + 1/2 \ge 2 \frac{1}{2}$, implying inequality~\eqref{eq:gkMonotoneDeltaHalf}.

Hence, for $\delta = 0$, we have $g_k(1- 2\delta) = g_k(1) > \omega(\sigma) + \delta = \omega(\sigma)$,
and for $\delta = 1/2$ it holds that 
$g_k(1 - 2\delta) = g_k(0) \le \omega(\sigma) + \delta = \omega(\sigma) + 1/2$.
Thus by continuity and monotonicity of $g_k(\cdot)$ (which is a part of the induction hypothesis), there exists a
value of $\delta$, $0 < \delta \le 1/2$, such that
\begin{equation*}
	g_{k+1}(\sigma) = \min_{\delta} \max \{g_k(1-2\delta), \omega(\sigma) + \delta  \} = g_k(1-2\delta) = \omega(\sigma) + \delta.
\end{equation*}
For fixed $0 \le \sigma \le \sigma' \le 1$, let $\delta$ and $\delta'$, respectively, be the values such that
$g_{k+1}(\sigma) = g_k(1-2\delta) = \omega(\sigma) + \delta$
and $g_{k+1}(\sigma') = g_k(1-2\delta') = \omega(\sigma') + \delta'$.
Then, $g_{k+1}(\sigma') \ge g_{k+1}(\sigma)$ (by Inequality~\eqref{eq:gkMonotone}) 
implies $g_k(1 - 2\delta') \ge g_k(1-2\delta)$.
By the induction hypothesis, $g_k(\cdot)$ is monotonically non-decreasing.
Thus, $1 - 2\delta' \ge 1-2 \delta$, i.e., $\delta' \le \delta$.

Finally, we argue that $g_{k+1}(\cdot)$ is a continuous function.
Let $\sigma$, $0 \le \sigma <1$, be a fixed value.
Since $\omega(\cdot)$ is a continuous function, for any $\epsilon>0$, there exists $\xi >0 $ such that for any $\sigma'$ with
$\sigma < \sigma' < \sigma + \xi$, we have $\omega(\sigma') - \omega(\sigma) < \epsilon$.
Recall that $g_{k+1}(\sigma) = \omega(\sigma) + \delta = g_k(1-2\delta)$, for some fixed $\delta >0$.
Also,
\[
	\omega(\sigma') + \delta \ge g_k(1-2\delta) = \omega(\sigma) + \delta = g_{k+1}(\sigma).
 \]
 Let $\delta'$ be the value that
 minimizes $g_{k+1}(\sigma')$.
 As $\delta' \le \delta$ and $g_{k+1}(\sigma') = \omega(\sigma') + \delta' = g_k(1-2\delta')$,
 we have 
 \begin{align}\label{eq:gDiff}
 	\begin{aligned}
 g_{k+1}(\sigma') -g_{k+1}(\sigma) &= (\omega(\sigma') + \delta') - (\omega(\sigma) + \delta) \le (\omega(\sigma') + \delta) - (\omega(\sigma) + \delta) \\
& = \omega(\sigma') - \omega(\sigma) < \epsilon.
 \end{aligned}
 \end{align}
 This proves that $g_{k+1}(\cdot)$ is continuous to the right of $\sigma$ (for any $\sigma <1$). 
 The proof that it is continuous to the left of $\sigma$, for any $0 < \sigma \le 1$, is symmetric.
\end{proof}
In the next lemma we argue that with each application of the recursion we obtain running time which 
is no worse than the one we had before the current application.
Moreover, we show that the recursion cannot lead to a better exponent than $\omega(\sigma) + \frac{1-\sigma}{2}$.
Note that unlike Lemma~\ref{lem:gMonotonic} which applies in the entire range $0 < \sigma < 1$, the next lemma
applies only for $\tilde \sigma < \sigma <1$.
\begin{lemma}\label{lem:gLowerBound}
For any $\tilde \sigma <\sigma \le 1$ and any for any integer $k \ge 0$, we have
\[
g_0(\sigma) \ge g_1(\sigma) \ge \ldots \ge g_k(\sigma) \ge \omega(\sigma) + \frac{1-\sigma}{2}.
\]
\end{lemma}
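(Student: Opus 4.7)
The plan is to prove both the decreasing chain $g_0(\sigma) \ge g_1(\sigma) \ge \ldots \ge g_k(\sigma)$ and the uniform lower bound $g_k(\sigma) \ge \omega(\sigma) + (1-\sigma)/2$ by induction on $k$, leaning on Lemma~\ref{lem:thresh} (which gives $\omega(\sigma) \le \tfrac{3}{2}(1+\sigma)$ for $\tilde\sigma < \sigma \le 1$) and the monotonicity of $g_k$ in its argument, already established in Lemma~\ref{lem:gMonotonic}.

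For the monotone-in-$k$ chain, the inductive step is almost immediate from~\eqref{eq:g_k+1}: if $g_{k-1}(\sigma') \ge g_k(\sigma')$ holds pointwise, then substituting the smaller quantity $g_k(1-2\delta)$ for $g_{k-1}(1-2\delta)$ inside $\max\{\,\cdot\,,\,\omega(\sigma)+\delta\}$ can only shrink the outer $\min_\delta$, giving $g_k(\sigma) \ge g_{k+1}(\sigma)$. The base case $g_0(\sigma) \ge g_1(\sigma)$ is the real work: I would plug the $g_0$-optimizer $\delta_0 = 1 - \tfrac{1}{3}\omega(\sigma)$ from Equation~\eqref{eq:deltaNoRecursion2} into the $g_1$-recurrence. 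By construction $\omega(\sigma)+\delta_0 = g_0(\sigma)$; the remaining task is to verify $g_0(1-2\delta_0) \le g_0(\sigma)$, which by monotonicity of $g_0$ in $\sigma$ (Lemma~\ref{lem:gMonotonic} for $k=0$) reduces to $1-2\delta_0 = \tfrac{2}{3}\omega(\sigma) - 1 \le \sigma$, i.e., to the inequality $\omega(\sigma) \le \tfrac{3}{2}(1+\sigma)$ supplied by Lemma~\ref{lem:thresh}.

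For the lower bound, the base case $k=0$ is pure algebra: $g_0(\sigma) = 1 + \tfrac{2}{3}\omega(\sigma) \ge \omega(\sigma) + \tfrac{1-\sigma}{2}$ rearranges to the same $\omega(\sigma) \le \tfrac{3}{2}(1+\sigma)$ from Lemma~\ref{lem:thresh}. For the inductive step, suppose for contradiction that $g_{k+1}(\sigma) < \omega(\sigma) + \tfrac{1-\sigma}{2}$, and let $\delta$ achieve the minimum in~\eqref{eq:g_k+1}. Then $\omega(\sigma)+\delta < \omega(\sigma)+\tfrac{1-\sigma}{2}$ forces $\delta < \tfrac{1-\sigma}{2}$, i.e., $1-2\delta > \sigma > \tilde\sigma$, and simultaneously $g_k(1-2\delta) < \omega(\sigma)+\tfrac{1-\sigma}{2}$. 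However, monotonicity of $g_k(\cdot)$ in its argument (Lemma~\ref{lem:gMonotonic}) gives $g_k(1-2\delta) \ge g_k(\sigma)$, and the induction hypothesis gives $g_k(\sigma) \ge \omega(\sigma) + \tfrac{1-\sigma}{2}$, a contradiction.

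The main obstacle is the base case $g_0(\sigma) \ge g_1(\sigma)$: any $\delta$ in the $g_1$-recurrence yields an upper bound on $g_1(\sigma)$, but the only choice that makes the right-hand side collapse to exactly $g_0(\sigma)$ is $\delta_0$, and verifying $g_0(1-2\delta_0) \le g_0(\sigma)$ for that choice is precisely where $\sigma > \tilde\sigma$ is required via Lemma~\ref{lem:thresh}. Beyond this point both inductions run mechanically on the recurrence~\eqref{eq:g_k+1}.
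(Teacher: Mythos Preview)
Your lower-bound argument and the base case $g_0(\sigma)\ge g_1(\sigma)$ are both correct. The gap is in the inductive step for the monotone chain: you argue that if $g_{k-1}(\sigma')\ge g_k(\sigma')$ holds pointwise, then replacing $g_{k-1}(1-2\delta)$ by the smaller $g_k(1-2\delta)$ inside~\eqref{eq:g_k+1} can only shrink the $\min_\delta$. But the lemma, and hence your inductive hypothesis, only asserts $g_{k-1}(\sigma')\ge g_k(\sigma')$ for $\sigma'>\tilde\sigma$, whereas $1-2\delta$ ranges over all of $[0,1]$. The missing inequality is not merely unavailable---it is actually \emph{false} below $\tilde\sigma$: for instance $g_0(0)=\tfrac{7}{3}$, while a direct computation of $g_1(0)=\min_\delta\max\{g_0(1-2\delta),\,2+\delta\}$ shows the optimizer lands just above $\delta=1/3$ and gives $g_1(0)>\tfrac{7}{3}$. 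So the pointwise-substitution step cannot be rescued by strengthening the hypothesis to all of $[0,1]$.

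The fix, which is what the paper does, is to couple the two halves of the induction rather than run them separately: once you have $g_k(\sigma)\ge\omega(\sigma)+\tfrac{1-\sigma}{2}$ at the fixed $\sigma>\tilde\sigma$, plug the single test value $\delta=\tfrac{1-\sigma}{2}$ into~\eqref{eq:g_k+1} to obtain
\[
  g_{k+1}(\sigma)\;\le\;\max\Bigl\{g_k(\sigma),\;\omega(\sigma)+\tfrac{1-\sigma}{2}\Bigr\}\;=\;g_k(\sigma),
\]
which delivers the chain step directly, without ever comparing $g_{k-1}$ and $g_k$ at arguments outside $(\tilde\sigma,1]$.
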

\begin{remark}
See Equation~\eqref{eq:sigmaThresholdTS} in Section~\ref{sec:basicDense} for the definition of $\tilde \sigma$.
\end{remark}
\begin{proof}
The proof follows by induction on $k$.
For $k=0$, we require
\begin{equation}\label{eq:g0LowerBound}
g_0(\sigma) \ge \omega(\sigma) + \frac{1 - \sigma}{2}.
\end{equation}
Recall that $g_0(\sigma) = 1 + \frac{2}{3} \cdot \omega(\sigma)$.
The inequality~\eqref{eq:g0LowerBound} (for $\sigma \ge \tilde \sigma$) now follows from Lemma~\ref{lem:thresh}, i.e., from
\begin{align}\label{eq:omegaSigma}
	\omega(\sigma) \le \frac{3}{2} \cdot (1 + \sigma).
\end{align}

Assume inductively that for some $k$, 
\[ 
	g_0(\sigma) \ge  g_1(\sigma) \ge \ldots \ge g_k(\sigma) \ge \omega(\sigma) + \frac{1-\sigma}{2}.
\]
Recall that
 \begin{equation}\label{eq:gkPlusOne}
	 g_{k + 1} (\sigma) = \min_{\delta} \max \{ g_k( 1- 2\delta),  \omega(\sigma) + \delta \}.
\end{equation} 
Fix $\sigma$, $\tilde \sigma < \sigma \le 1$.
For $0 \le  \delta \le \frac{1-\sigma}{2}$, the function $\omega(\sigma) + \delta$ increases as $\delta$ grows but still we have
 $\omega(\sigma) + \delta \le \omega(\sigma) + \frac{1-\sigma}{2}$. Also, by inductive hypothesis, $ g_k(\sigma) \ge  \omega(\sigma) + \frac{1-\sigma}{2}$.
 Moreover, $g_k(1-2\delta) \ge g_k(\sigma)$, because $1-2\delta \ge \sigma$ and, 
  by Lemma~\ref{lem:gMonotonic}, $g_k(\cdot)$ is monotonically non-decreasing.
 Hence $\max \{\omega(\sigma) + \delta, g_k(1-2\delta)\} = g_k(1-2\delta)$ for any $\delta \le \frac{1-\sigma}{2}$.
 The minimum in this range is achieved by setting $\delta = \frac{1 - \sigma}{2}$, i.e., 
 we have 
 $\min_{\delta \le \frac{1- \sigma}{2}} \max\{g_k(1 - 2\delta), \omega(\sigma) + \delta\} = g_k(\sigma) \ge \omega(\sigma) + \frac{1- \sigma}{2}$.
 At this point we still have $g_k(1 - 2\delta) = g_k(\sigma) \ge \omega(\sigma) + \frac{1 - \sigma}{2}$.
 As we increase $\delta$ further, $\omega(\sigma) + \delta$ increases and $g_k(1-2\delta)$ decreases or stays the same.
 Recall that by Lemma~\ref{lem:gMonotonic}, $g_k(\sigma)$ is a monotonically non-decreasing function in the entire range
 $0 < \sigma < 1$. Suppose first that
 \begin{equation}\label{gKLower}
 	g_k(\sigma) > \omega(\sigma) + \frac{1-\sigma}{2}.
 \end{equation}
 Our analysis splits into two sub-cases.
 In the first sub-case, there exists a value $ \frac{1-\sigma}{2}< \delta \le 1/2$ so that $g_k(1-2\delta) = \omega(\sigma) + \delta$.
 Then for $\epsilon = \delta - \frac{1-\sigma}{2} > 0$, we have $g_k(1-2\delta) = g_k(\sigma - 2\epsilon)= \omega(\sigma) + \frac{1-\sigma}{2} + \epsilon$.
 By Lemma~\ref{lem:gMonotonic} $g_k(1-2\delta)$ is a monotonically non-increasing function of $\delta$, while $\omega(\sigma) + \delta$ is a 
 monotonically increasing function of $\delta$.
 Thus, 
 \begin{align*}
 	\begin{aligned}
		g_{k+1}(\sigma) &= \min_{\delta} \max \{g_k(1 - 2\delta), \omega(\sigma) + \delta \} \\
					  &= g_k(\sigma - 2\epsilon) = \omega(\sigma) + \frac{1-\sigma}{2} + \epsilon\\
					  &> \omega(\sigma) + \frac{1-\sigma}{2}	
	\end{aligned} 
 \end{align*}
 In this sub-case, by Lemma~\ref{lem:gMonotonic} (monotonicity of $g_k(\cdot)$) we also have
 $g_{k + 1}(\sigma) = g_k(\sigma - 2\epsilon) \le g_k(\sigma)$.
 On the other hand, if for all $\delta \le 1/2$, we have $g_k(1-2\delta) > \omega(\sigma) + \delta$, then, in particular, 
 we have $g_k(0) > \omega(\sigma) + 1/2 \ge 2\frac{1}{2}$.
 However, by induction hypothesis, $g_k(0) \le g_0(0) = 1 + \frac{2}{3} \omega(0) = \frac{7}{3}$, contradiction.
 
 The second sub-case is $g_k(\sigma) = \omega(\sigma) + \frac{1- \sigma}{2}$. Then by monotonicity of $g_k(1-2\delta)$
 and $\omega(\sigma) + \delta$ (as functions of $\delta$) we have
 \begin{align*}
 	\begin{aligned}
		\min_{\delta \le \frac{1-\sigma}{2}} \max\{g_k(1- 2\delta), \omega(\sigma) + \delta \} = \omega(\sigma) + \frac{1- \sigma}{2} = g_k(\sigma).
	\end{aligned}
 \end{align*}
 For $\delta > \frac{1-\sigma}{2}$, $\max \{ g_k(1- 2\delta), \omega(\sigma) + \delta\} > \omega(\sigma) + \frac{1-\sigma}{2}$,
 and thus 
 \begin{equation*}
	 g_{k+1}(\sigma) = \min_{\delta} \max\{g_k(1- 2\delta), \omega(\sigma) + \delta \} = \omega(\sigma) + \frac{1-\sigma}{2} = g_k(\sigma).
 \end{equation*}

 \end{proof}
 Recall that the \emph{dual matrix multiplication exponent} $\alpha$ is the maximum (in fact, supremum) value such that $\omega(\alpha) = 2$.

 Next, we argue that in the range $\tilde \sigma \le \sigma \le 1$, the functions $g_k(\sigma)$ ($k \ge 0$) are strictly monotonically increasing. 
 (We have already shown in Lemma~\ref{lem:gMonotonic} that these functions are monotonically non-decreasing in the entire range $0\le \sigma \le1$.)
 \begin{lemma}\label{lem:strict:monoGK}
 For any integer $k\ge 0$, the function $g_k(\sigma)$ is strictly monotonically increasing
 in the range $\alpha < \sigma \le 1$.
 Moreover, in the range $\tilde \sigma < \sigma <1$, we also have
 \[g_0(\sigma) > g_1(\sigma) > \ldots g_k(\sigma) > \omega(\sigma) + \frac{1-\sigma}{2}. \]
 \end{lemma}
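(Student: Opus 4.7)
The plan is to prove the two assertions by a simultaneous induction on $k$, where the strict monotonicity of $g_k$ on $(\alpha,1]$ (Part~1 at level $k$) is what upgrades the non-strict chain from Lemma~\ref{lem:gLowerBound} to the strict chain of Part~2 at the same level. The base case of Part~1 reduces to strict monotonicity of $\omega(\cdot)$ on $(\alpha,1]$: if $\omega$ were constant on some $[\sigma_1,\sigma_2]\subseteq(\alpha,1]$, then convexity plus the non-decreasing property would force $\omega\equiv 2$ on $[\alpha,\sigma_2]$, contradicting the definition of $\alpha$ as the supremum of $\{\sigma:\omega(\sigma)=2\}$. The base case of Part~2 is the strict version of Lemma~\ref{lem:thresh}: the inequality $g_0(\sigma)>\omega(\sigma)+\tfrac{1-\sigma}{2}$ rearranges to $\omega(\sigma)<\tfrac{3}{2}(1+\sigma)$, which is strict for every $\tilde\sigma<\sigma<1$ by the definition of $\tilde\sigma$ and convexity of $\omega$.

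For the inductive step of Part~1, fix $k$, assume $g_k$ is strictly monotone on $(\alpha,1]$, and set $c_k:=g_k(\alpha)$. I first establish by a side-induction the auxiliary bound $c_k\le 2+\tfrac{1-\alpha}{2}$: since $\omega(\sigma)\equiv 2$ on $[0,\alpha]$, the recursion~\eqref{eq:g_k+1} makes $g_k$ constant on $[0,\alpha]$ for every $k$; and plugging $\delta=\tfrac{1-\alpha}{2}$ into the minimax defining $c_{k+1}$ yields $c_{k+1}\le\max\{c_k,\,2+\tfrac{1-\alpha}{2}\}$, preserving the bound (the base $c_0=\tfrac{7}{3}$ satisfies it). Consequently, for every $\sigma>\alpha$,
\[
  \omega(\sigma)+\tfrac{1-\alpha}{2}\;>\;2+\tfrac{1-\alpha}{2}\;\ge\; c_k\;=\;g_k\bigl(1-2\cdot\tfrac{1-\alpha}{2}\bigr),
\]
so the minimizer $\delta^*(\sigma)$ in~\eqref{eq:g_k+1} satisfies $1-2\delta^*(\sigma)>\alpha$. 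Now fix $\alpha<\sigma_1<\sigma_2\le 1$ with minimizers $\delta_i:=\delta^*(\sigma_i)$; the argument already inside the proof of Lemma~\ref{lem:gMonotonic} gives $\delta_2\le\delta_1$. Assuming for contradiction $g_{k+1}(\sigma_1)=g_{k+1}(\sigma_2)$, the relation $g_k(1-2\delta_1)=g_k(1-2\delta_2)$ combined with $1-2\delta_1>\alpha$ and the inductive strict monotonicity of $g_k$ forces $\delta_1=\delta_2$, whence $\omega(\sigma_1)+\delta_1=\omega(\sigma_2)+\delta_2$ forces $\omega(\sigma_1)=\omega(\sigma_2)$, contradicting the base-case strict monotonicity of $\omega$ on $(\alpha,1]$.

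For the inductive step of Part~2, fix $\tilde\sigma<\sigma<1$ and assume $g_k(\sigma)>\omega(\sigma)+\tfrac{1-\sigma}{2}$. Then the first sub-case inside the proof of Lemma~\ref{lem:gLowerBound} applies and produces $\epsilon>0$ with $g_{k+1}(\sigma)=g_k(\sigma-2\epsilon)=\omega(\sigma)+\tfrac{1-\sigma}{2}+\epsilon$, yielding $g_{k+1}(\sigma)>\omega(\sigma)+\tfrac{1-\sigma}{2}$ immediately. Since $\sigma-2\epsilon=1-2\delta^*(\sigma)>\alpha$ (output of Part~1's inductive step) and $\sigma-2\epsilon<\sigma$, the strict monotonicity of $g_k$ on $(\alpha,1]$ yields $g_{k+1}(\sigma)=g_k(\sigma-2\epsilon)<g_k(\sigma)$; chaining with the inductive hypothesis closes the induction. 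The main obstacle is the auxiliary bound $c_k\le 2+\tfrac{1-\alpha}{2}$: it is numerically tight ($c_0=\tfrac{7}{3}$ is only about $0.006$ below $2+\tfrac{1-\alpha}{2}$), and without it the minimizer $\delta^*(\sigma)$ could land in the plateau region $\delta\ge\tfrac{1-\alpha}{2}$ on which $g_k(1-2\delta)\equiv c_k$, which would make $g_{k+1}$ locally constant in $\sigma$ just above $\alpha$ and break the strict-monotonicity induction.
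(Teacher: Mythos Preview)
Your proof is correct and follows the same overall strategy as the paper: locate the minimizer $\delta^*(\sigma)$ in $\bigl(0,\tfrac{1-\alpha}{2}\bigr)$ so that $1-2\delta^*(\sigma)>\alpha$, and then invoke the inductive strict monotonicity of $g_k$ on $(\alpha,1]$. The substantive difference is the auxiliary bound you use to push $\delta^*$ below $\tfrac{1-\alpha}{2}$. The paper asserts $g_k(\alpha)\le g_0(\alpha)=\tfrac{7}{3}$ by appeal to the non-strict chain of Lemma~\ref{lem:gLowerBound}; but that lemma is stated only for $\sigma>\tilde\sigma>\alpha$, and in fact a short computation shows $g_1(\alpha)>\tfrac{7}{3}$ (the crossover $g_0(1-2\delta^*)=2+\delta^*$ forces $\delta^*>\tfrac{1}{3}$, since $g_0(1/3)>\tfrac{7}{3}$). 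Your side-induction bound $c_k\le 2+\tfrac{1-\alpha}{2}$ is the correct invariant on the plateau $[0,\alpha]$ and is exactly strong enough---combined with $\omega(\sigma)>2$ for $\sigma>\alpha$---to force the minimizer off the plateau. You also package the two assertions as a single simultaneous induction, which is cleaner than the paper's presentation of two separate inductions, since your Part~2 at level $k{+}1$ genuinely uses Part~1 at level $k$.
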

 \begin{proof}
 We start with proving the second assertion of the lemma.
 The proof is by induction on $k$.
 The induction base holds as $\omega(\sigma)$ is a monotonically increasing function for $\alpha \le \sigma \le 1$,
 and $g_0(\sigma) = 1 + \frac{2}{3} \omega(\sigma)$.
 Moreover, we have seen that for $\sigma > \tilde \sigma$, we have $g_0(\sigma) > \omega(\sigma) + \frac{1-\sigma}{2}$
 (see Inequalities~\eqref{eq:g0LowerBound} and~\eqref{eq:omegaSigma} and Lemma~\ref{lem:thresh}).
 
 Assume inductively that for some $k \ge 0$, the function $g_k(\sigma)$ is strictly monotonically increasing in the range $\tilde \sigma < \sigma \le 1$,
 and $g_k(\sigma) > \omega(\sigma) + \frac{1-\sigma}{2}$.
 Fix a value $\tilde \sigma  < \sigma \le 1$.
 We have 
 \[ g_{k+1} (\sigma) = \min_{\delta} \max \{g_k(1- 2\delta), \omega(\sigma) + \delta \}. \]
 As by Lemma~\ref{lem:gMonotonic}, we have $g_k(1) > \omega(1) > \omega(\sigma)$ and (by Claim~\ref{clm:gK0less}) $g_k(0) \le 2 \frac{1}{2} \le \omega(\sigma) + 1/2$,
 by continuity of functions $g_k(1-2\delta)$ and $\omega(\sigma) + \delta$ (in terms of $\delta$) in the entire range $0 \le \delta \le 1/2$
 (see Lemma~\ref{lem:gMonotonic}), there exists a value $\delta^*$ such that $\omega(\sigma) + \delta^* = g_k(1-2\delta^*)$.
 By monotonicity of the function $g_k(1-2\delta)$ in the entire range of $\delta$ (by Lemma~\ref{lem:gMonotonic} it is monotonically non-decreasing),
 we also conclude that 
 \[g_{k+1}(\sigma) = \min_{\delta} \max\{g_k(1-2\delta) , \omega(\sigma) + \delta \} = g_k(1-2 \delta^*) = \omega(\sigma) + \delta^*.\]
 For $\delta \le \tiny{\frac{1-\sigma}{2}}$, we have $1 -2\delta \ge \sigma > \tilde \sigma$.
 By induction hypothesis, $g_k(1-2\delta)$ is a strictly monotonically decreasing function of $\delta$ in this range.
 Using induction hypothesis again, we recall that 
 \[g_k(\sigma) > \omega(\sigma) + \frac{1-\sigma}{2}.\]
 Hence for $\delta = \frac{1-\sigma}{2}$ we have 
 \[g_{k+1}(\sigma) \le \max\{g_k(\sigma), \omega(\sigma) + \frac{1-\sigma}{2} \} = g_k(\sigma).\]
 By slightly increasing $\delta$ beyond $\tiny{\frac{1-\sigma}{2}}$ but below $\tiny{\frac{1-\tilde\sigma}{2}}$ (so that $\sigma > 1- 2\delta > \tilde \sigma$ and
 so that $g_k(\sigma) > \omega(\sigma) + \delta$),
 we increase $\omega(\sigma) + \delta$ and decrease $g_k(1-2\delta)$.
 Moreover, we can also have 
 \[g_k(1-2\delta) > \omega(\sigma) + \delta > \omega(\sigma) + \tiny{\frac{1-\sigma}{2}}.\]
 (Recall that by Lemma~\ref{lem:gMonotonic}, $g_k(\cdot)$ is continuous.)
 Let $\delta'$ denote such a value of $\delta$.
 Since $\sigma > 1-2\delta' > \tilde \sigma$ and, by induction hypothesis $g_k(\cdot)$ is monotonically increasing for $\sigma > \tilde \sigma$,
 we have 
 \[g_k(\sigma) > \max \{g_k(1- 2\delta'), \omega(\sigma) + \delta' \} = g_k(1- 2\delta') > \omega(\sigma) + \delta' > \omega(\sigma) +  \tiny{\frac{1-\sigma}{2}}.\]
 It follows that $\delta^* > \delta'$.
 (As for all $\hat \delta \le \delta'$, we have $g_k(1- 2 \hat \delta) > \omega(\sigma) +  \hat \delta$,
 and both functions are monotone and continuous in terms of $\delta$ (by Lemma~\ref{lem:gMonotonic}),
 and $g_k(1- 2\delta^*) = \omega(\sigma) + \delta^*$.)
 Hence 
\[g_{k+1}(\sigma) = \omega(\sigma) + \delta^* > \omega(\sigma) + \delta' > \omega(\sigma) + \frac{1-\sigma}{2}, \]
 and also, $g_{k+1}(\sigma)  = g_k(1-2\delta^*) \le g_k(1-2\delta') < g_k(\sigma)$.
 

 Next, we prove that functions $g_0(\sigma), g_1(\sigma), g_2(\sigma), \ldots$ are all strictly monotonically increasing for
 $\alpha < \sigma$. We have already seen that it is the case for $g_0(\sigma) = 1 + \tiny{\frac{2}{3}} \omega(\sigma)$.
 We assume inductively that it is the case for $g_k(\sigma)$ for some $k = 0,1,2,\ldots,$ and prove it for 
 \[g_{k+1}(\sigma) = \min_{0 \le \delta \le 1/2} \max \{ g_k(1-2\delta), \omega(\sigma) + \delta\}.\]
 Observe that for $\delta = 0$, $g_k(1-2\delta) = g_k(1) > \omega(1) = \omega$, and for $\delta \ge \tiny{\frac{1-\alpha}{2}}$, 
 by monotonicity of $g_k(\cdot)$ we have
 \begin{equation}\label{eq:gk-alpha} 
      g_k(1-2\delta) \le g_k(\alpha) \le g_0(\alpha) = 1 +  \frac{2}{3} \omega(\alpha) = \frac{7}{3}.
 \end{equation}
 On the other hand, for $\delta = 0$, (and $\sigma < 1$), 
 \[\omega(\sigma) + \delta = \omega(\sigma) < \omega < g_k(1 - 2\delta) = g_k(1),\]
 and for $\delta \ge \tiny{\frac{1-\alpha}{2}}$, we have by~\eqref{eq:gk-alpha}
 \[
 	\omega(\sigma) + \delta > \omega(\sigma) + \tiny{\frac{1-\alpha}{2}} > \tiny{\frac{7}{3}} \ge g_k(1-2\delta).
 \]
 By continuity and strict monotonicity of $g_k(1-2\delta)$ for $1-2\delta > \alpha$ (by the induction hypothesis),
 the value $\delta(\sigma)$ for which 
$g_{k+1}(\sigma) = \omega(\sigma) + \delta(\sigma) = g_k(1-2\delta(\sigma))$ satisfies therefore $0 < \delta(\sigma) < \tiny{\frac{1-\alpha}{2}}$.

Consider now two values $\alpha \le \sigma < \sigma' \le 1$.
Then
\begin{equation}\label{eq:strictalpha1}
 \omega(\sigma') + \delta(\sigma) > \omega(\sigma) + \delta(\sigma) = g_k(1- 2 \delta(\sigma)).
\end{equation}
 Consider the value $\delta(\sigma')$ for which 
 \begin{equation}\label{eq:strictalpha2}
 g_{k+1}(\sigma') = \omega(\sigma') + \delta(\sigma') = g_k(1- 2\delta(\sigma')).
 \end{equation}
 If $\delta(\sigma') \ge \delta(\sigma)$ then 
 \[ g_{k+1}(\sigma') = \omega(\sigma') + \delta(\sigma') \ge \omega(\sigma') + \delta(\sigma) > \omega(\sigma) + \delta(\sigma) = g_{k+1}(\sigma), \]
 as desired. Henceforth we assume that
 \[ \delta(\sigma) > \delta(\sigma').\]
 Note also that
 \[ 
 	1 - 2\delta(\sigma), 1 -2 \delta(\sigma') > \alpha,
  \]
  i.e., by the induction hypothesis, the function $g_k(\cdot)$ is strictly monotone between $1-2\delta(\sigma')$ and $1- 2\delta(\sigma)$,
 and 
 \[1-2\delta(\sigma') > 1- 2\delta(\sigma).\]

Hence
  \[
  	g_{k+1}(\sigma') = g_k(1 - 2\delta(\sigma')) > g_k(1- 2\delta(\sigma)) = g_{k+1}(\sigma).
  \]
   \end{proof}
 For any integer $k \ge 0$, let $0 < \sigma_k < 1$ be the value such that $g_k(\sigma_k) = \omega(1) = \omega$.
Observe that by Lemma~\ref{lem:gLowerBound}, we have $\tilde\sigma < \sigma_0 \le \sigma_1 \le \sigma_2 \le \sigma_3 \ldots \le 1$.
Hence the sequence $(\sigma_k)_{k =1}^{\infty}$ converges to a limit $\sigma^* \le 1$, i.e., $\lim_{k \to \infty} \sigma_k = \sigma^*$.
Next, we analyze the value of $\sigmalimit$ and show that it is equal to $1$.
This means that under Assumption~\ref{asmp:pathSameAsSource},  sufficiently many recursive invocations of our algorithm lead 
to an algorithm that improves the state-of-the-art directed reachability for general graphs for all $\tilde \sigma < \sigma <1$.
\begin{lemma}\label{lem:sigmaStar}
\[ \sigmalimit = \lim_{k \to \infty} \sigma_k = 1\]
\end{lemma}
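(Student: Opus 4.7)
The plan is to derive an exact, non-asymptotic recurrence between consecutive threshold values $\sigma_k$ and $\sigma_{k+1}$, then pass to the limit and invoke Claim~\ref{clm:slope} to rule out $\sigma^* < 1$.

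First, I would analyze where the outer minimum in
\[
g_{k+1}(\sigma)\;=\;\min_{\delta}\,\max\{g_k(1-2\delta),\,\omega(\sigma)+\delta\}
\]
is attained at $\sigma=\sigma_{k+1}$. The first argument is non-increasing in $\delta$ and the second is strictly increasing in $\delta$; the analysis preceding Equation~\eqref{eq:parameterD'} together with Lemma~\ref{lem:gMonotonic} shows that the two functions cross at a unique balancing value $\delta^{*}=\delta^{*}(\sigma_{k+1})$ at which the minimum is attained. By the defining property of $\sigma_{k+1}$, this common value equals $\omega$.

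Next, the equation $g_k(1-2\delta^{*})=\omega=g_k(\sigma_k)$, combined with the strict monotonicity of $g_k$ in the range $(\alpha,1]$ guaranteed by Lemma~\ref{lem:strict:monoGK} (applicable because $\sigma_k\ge\sigma_0>0.53>\alpha$), forces $1-2\delta^{*}=\sigma_k$, i.e.\ $\delta^{*}=(1-\sigma_k)/2$. Plugging this into $\omega(\sigma_{k+1})+\delta^{*}=\omega$ yields the clean recurrence
\[
\omega(\sigma_{k+1})\;=\;\omega\;-\;\tfrac{1-\sigma_k}{2}.
\]

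Finally, the bounded non-decreasing sequence $(\sigma_k)$ converges to some $\sigma^{*}\le 1$, and continuity of $\omega(\cdot)$ lets me pass to the limit, producing $\omega(\sigma^{*})=\omega-(1-\sigma^{*})/2$, equivalently $\omega-\omega(\sigma^{*})=(1-\sigma^{*})/2$. If $\sigma^{*}<1$, then since $\sigma^{*}\ge\sigma_0>\tilde\sigma$, Claim~\ref{clm:slope} applies and gives $(\omega-\omega(\sigma^{*}))/(1-\sigma^{*})>1/2$, i.e.\ $\omega-\omega(\sigma^{*})>(1-\sigma^{*})/2$, a direct contradiction. Hence $\sigma^{*}=1$. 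The only genuine subtlety, and what I expect to be the one point that must be handled carefully, is the step $1-2\delta^{*}=\sigma_k$: one has to rule out the possibility that $g_k$ reaches the value $\omega$ on a flat plateau (recall that $g_k$ is constant on $[0,\alpha]$), which is precisely why the strict monotonicity from Lemma~\ref{lem:strict:monoGK} is needed rather than the weaker non-decreasing statement from Lemma~\ref{lem:gMonotonic}.
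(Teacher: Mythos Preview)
Your proof is correct and takes a genuinely different route from the paper's. The paper argues by contradiction through a two-case analysis on the value of $\lim_{k\to\infty} g_k(\sigma^*)$: either this limit equals the lower bound $\omega(\sigma^*)+\tfrac{1-\sigma^*}{2}$ (Case~1), or it exceeds it by some $\epsilon^*>0$ (Case~2, which itself splits further). In each sub-case the paper eventually reaches a contradiction, invoking Claim~\ref{clm:slope} in Case~1 and a separate fixed-gap argument in Case~2. By contrast, you extract the clean closed-form recurrence $\omega(\sigma_{k+1})=\omega-\tfrac{1-\sigma_k}{2}$ directly from the balancing condition and the strict monotonicity of $g_k$, pass to the limit once, and apply Claim~\ref{clm:slope} to finish. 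Your argument is shorter and makes the mechanism transparent: it shows exactly how each recursion level pushes the threshold toward~$1$.

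One small point deserves to be made explicit in your write-up. To conclude $1-2\delta^*=\sigma_k$ from $g_k(1-2\delta^*)=g_k(\sigma_k)=\omega$, you need not only $\sigma_k>\alpha$ but also $1-2\delta^*>\alpha$; otherwise the preimage of $\omega$ could sit on the plateau $[0,\alpha]$ where $g_k$ is constant. You flag this concern, and the verification is routine: since $g_k$ is strictly increasing on $(\alpha,1]$ and $\sigma_k>\alpha$, picking any $\alpha<\sigma<\sigma_k$ gives $g_k(\alpha)\le g_k(\sigma)<g_k(\sigma_k)=\omega$, so $g_k<\omega$ throughout $[0,\alpha]$ and hence $1-2\delta^*>\alpha$. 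Adding this one line would make the argument airtight.
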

\begin{proof}
Applying Lemma~\ref{lem:gLowerBound} to $\sigma^*$, we have
\[g_0(\sigmalimit) \ge g_1(\sigmalimit) \ge \ldots  \ge g_k(\sigmalimit) \ge \ldots \ge \omega(\sigmalimit) + \frac{1- \sigmalimit}{2} .\]
It follows that the sequence $(g_k(\sigma^*))_{k=0}^{\infty}$ converges to a limit greater than or equal to $\omega(\sigma^*) + \frac{1- \sigma^*}{2}$.
Assume for contradiction that $\sigmalimit < 1$.
The following two cases arise:
\begin{case}\label{cas:case1}
\[\lim_{k \to \infty} g_k(\sigmalimit) = \omega(\sigmalimit) + \frac{1 - \sigmalimit}{2}.\]
\end{case}
In this case, for any arbitrarily small $\epsilon > 0$, there exists $k_{\epsilon}$ such that for every $k > k_{\epsilon}$, we have
\[
	\omega(\sigmalimit) + \frac{1 - \sigmalimit}{2} \le g_k(\sigmalimit) < \omega(\sigmalimit) + \frac{1 - \sigmalimit}{2} + \epsilon.
\]
By Claim~\ref{clm:slope}, for any $\tilde \sigma < \sigma < 1$, we also have $\omega(\sigma) + \frac{1 - \sigma}{2} < \omega(1)$. 
We can pick a sufficiently small $\epsilon > 0$ so that

\[
	\omega(\sigmalimit) + \frac{1- \sigmalimit}{2} + \epsilon < \omega(1).
\]
Hence, for $k > k_{\epsilon}$, $g_k(\sigmalimit) < \omega(1)$.

Since $\tilde \sigma < \sigma_0  \le \sigma_k \le \sigmalimit$ 
and $g_k(\cdot)$ is a monotonically non-decreasing function for $\sigma > \tilde \sigma$ (see Lemma~\ref{lem:gLowerBound}), it follows that
\[
	g_k(\sigma_k) \le g_k(\sigmalimit) < \omega(1).
\]
But by definition, we have $g_k(\sigma_k) = \omega(1)$. Hence we get a contradiction in this case.

\begin{case}\label{cas:case2}
There exists some value $\epsilon^* > 0$ such that
 \[
	\lim_{k \to \infty} g_k(\sigmalimit) = \omega(\sigmalimit) + \frac{1 - \sigmalimit}{2} + \epsilon^*.
 \]

\end{case}
If $\epsilon^*$ is such that $ \omega(\sigmalimit) + \frac{1 - \sigmalimit}{2} + \epsilon^* < \omega(1)$, we get a contradiction 
similar to the one obtained in Case~\ref{cas:case1}.
So we assume from this point on that 
\[
\lim_{k \to \infty} g_k(\sigmalimit) = \omega(\sigmalimit) + \frac{1 - \sigmalimit}{2} + \epsilon^* \ge \omega(1).
\]
It is easy to see that $\epsilon^* \le \frac{\sigmalimit}{2}$.
Indeed, otherwise $\lim_{k \to \infty} g_k(\sigmalimit) \ge \omega(\sigmalimit) + 1/2$.
As $\sigmalimit \ge \sigma_9 > 0.98$ (see Table~\ref{tab:tableRecurse1}), it follows that $\lim_{k \to \infty} g_k(\sigmalimit) \ge \omega(0.98) + 1/2 > 2.85$.
On the other hand, for any $k \ge 1$, $g_k(\sigmalimit) \le g_0(\sigmalimit) = 1 + \frac{2}{3} \omega(\sigmalimit) \le 2.59$,
and this is a contradiction.

Hence, for all $k = 0,1,2,\ldots$ and $\delta = \frac{1- \sigmalimit}{2} + \epsilon^*$, by Equation~\eqref{eq:gkPlusOne}, we have
\[
	g_{k+1} (\sigmalimit) \le \max \{ g_k(\sigmalimit - 2\epsilon^*), \omega(\sigmalimit) + \frac{1 - \sigmalimit}{2} + \epsilon^* \}.
\]
Next, we prove that $g_k(\sigmalimit - 2\epsilon^*)$ maximises the right-hand-side.
\begin{claim}\label{clm:gkSigma}
\[
	g_k(\sigmalimit -2\epsilon^*) \ge \omega(\sigmalimit) + \frac{1-\sigmalimit}{2} + \epsilon^*.
\]
\end{claim}
\begin{proof}
 Suppose for contradiction $g_k(\sigmalimit - 2\epsilon^*) < \omega(\sigmalimit) + \frac{1- \sigmalimit}{2} + \epsilon^*$.
Then, since $g_k(\cdot)$ is a continuous function (see Lemma~\ref{lem:gMonotonic}), there exists $0 \le \epsilon' < \epsilon^*$ such that
\[
	g_{k+1}(\sigmalimit) = \omega(\sigmalimit) + \frac{1- \sigma^*}{2} + \epsilon' = g_k(\sigmalimit - 2\epsilon').
\]
But then 
\[
	g_{k+1}(\sigmalimit) < \omega(\sigmalimit) + \frac{1 - \sigmalimit}{2} + \epsilon^* = \lim_{t \to \infty} g_t(\sigmalimit).
\]
But by Lemma~\ref{lem:gLowerBound}, the sequence $( g_t(\sigmalimit))_{t=1}^{\infty}$ is monotonically non-increasing, 
and so $g_{k+1}(\sigmalimit)$ cannot be smaller than the limit of this sequence.
(Recall that we assumed that $\sigmalimit <1$.)
\end{proof}
Therefore, we have $g_k(\sigmalimit - 2\epsilon^*) \ge \omega(\sigmalimit) + \frac{1- \sigmalimit}{2} + \epsilon^* \ge \omega(1) = g_k(\sigma_k)$.
Hence $\sigmalimit - 2\epsilon^* \ge \sigma_k$.
But $\sigma_k$ (weakly) increases as $k$ grows, and it is equal to $\sigmalimit$ in the limit.
On the other hand, there exists a fixed constant $\epsilon^* > 0$ such that (for all $k$) $\sigma_k \le \sigmalimit -2\epsilon^*$.
This is a contradiction.

Under the assumption that $\sigmalimit <1$, we have derived a contradiction in all the cases for which
$\lim_{t \to \infty} g_t(\sigmalimit) = \omega(\sigmalimit) + \frac{1 - \sigmalimit}{2} + \epsilon^*$, for some $\epsilon^* \ge 0$, 
regardless of  whether $\omega(\sigmalimit) + \frac{1 - \sigmalimit}{2} + \epsilon^*$ is less than $\omega(1)$ or at least 
$\omega(1)$.

Therefore we conclude that $\sigmalimit =1$.
In other words, for any $\epsilon > 0$, there exists $k_{\epsilon}$ such that for any $k \ge k_{\epsilon}$,
we have $\sigma_k > 1- \epsilon$, i.e., $g_k(1 - \epsilon) < g_k(\sigma_k) = \omega(1)$.
\end{proof}
It also follows that 
\[ \lim_{k \to \infty} g_k(1) = \lim_{k \to \infty} g_k(\sigma_k) = \lim_{k \to \infty} \omega(1) = \omega  =  \omega(1) + \frac{1-1}{2} ,\]
and also for any $\sigma < \sigma^* = 1$, it holds that
\begin{equation}\label{eq:limitSigma1}
 	\lim_{k \to \infty} g_k(\sigma) \le \lim_{k \to \infty} g_k(1) = \omega(1).
  \end{equation}

\subsection{Analysis for graphs with  \texorpdfstring{$m = \Theta(n^\mu)$}{[Theta(n\textsuperscript{mu})]} edges, for  \texorpdfstring{$\mu <2$} {[mu less than 2]}}
In this section, we discuss the convergence of the sequence 
$g_0^{(\mu)} (\sigma), g_1^{(\mu)} (\sigma), \dots, g_k^{(\mu)} (\sigma)$, for a general value $\mu < 2$.
Recall that for graphs with $m = \Theta(n^{\mu})$ edges, $g_k^{(\mu)}(\sigma)$ is the exponent of $n$
in the number of operations required (under Assumption~\ref{asmp:pathSameAsSource}) for computing $S \times V$-direachability ($|S| = n^{\sigma}$) by our recursive scheme of depth $k$.
By Inequality~\eqref{eq:g0mu} we have
\[g_0^{(\mu)} (\sigma) = \frac{1+\mu}{3} + \frac{2}{3} \omega(\sigma). \]

The following equation expresses the function $g_{k + 1}^{(\mu)} (\cdot)$ in terms of
 the exponent $\delta$ of the desired diameter $D_{k+1}$ and the
 function $g_k^{(\mu)}(\cdot)$.
\[g_{k+1}^{(\mu)} (\sigma) = \min_{\delta} \max \{g_k^{(\mu)} (1-2\delta), \omega(\sigma) + \delta\}. \]

The formula for $g_{k+1}^{(\mu)}(\cdot)$ for $k > 0$ follows, because the running time required to build
a $D$-shortcut is dominated by the time needed to solve the path-direachability problem for $\tilde O(n^{1-2\delta})$ paths
in a graph with $\Theta(n^{\mu})$ edges. 
The latter, by Assumption~\ref{asmp:pathSameAsSource}, can be done in $\tilde O(n^{ g_k^{(\mu)} (1-2\delta) } )$ time. 
(Observe that the recursive step (Line~\ref{lnR:Step2} of Algorithm~\ref{alg:recursiveDireach}) is invoked on the original input graph $G$.
Hence if $G$ has $O(n^{\mu})$ egdes, all recursive invocations are performed on a graph with $O(n^{\mu})$ edges as well.)

The following lemma is an easy lower bound on $g_k^{(\mu)} (\sigma)$.
\begin{lemma}\label{lem: gmuLowerBound}
For any $k = 0,1,2,\ldots,$ and $\sigma$ and $\mu$ that satisfy $1 + \mu > \omega(\sigma)$, we have
\[ g_k^{(\mu)}(\sigma) > \omega(\sigma).\]
\end{lemma}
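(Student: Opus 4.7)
The plan is a straightforward induction on $k$, with the hypothesis $1+\mu > \omega(\sigma)$ held fixed throughout. For the base case $k = 0$, one directly evaluates
\[
  g_0^{(\mu)}(\sigma) - \omega(\sigma) \;=\; \frac{1 + \mu + 2\omega(\sigma)}{3} - \omega(\sigma) \;=\; \frac{(1+\mu) - \omega(\sigma)}{3},
\]
which is strictly positive precisely under the assumption $1+\mu > \omega(\sigma)$. So the base case is immediate.

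For the inductive step, I would first establish (as a side lemma, by an induction on $k$ mirroring the proof of Lemma~\ref{lem:gMonotonic}) that $g_k^{(\mu)}(\cdot)$ is continuous and monotonically non-decreasing on $[0,1]$. The argument is exactly as for $\mu = 2$: monotonicity of $g_{k+1}^{(\mu)}$ follows because in
$\min_{\delta}\max\{g_k^{(\mu)}(1-2\delta),\, \omega(\sigma)+\delta\}$ only the term $\omega(\sigma)+\delta$ depends on $\sigma$, and it is non-decreasing in $\sigma$ for every fixed $\delta$. Continuity passes through the min--max the same way.

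Assuming $g_k^{(\mu)}(\sigma) > \omega(\sigma)$ by the inductive hypothesis and using monotonicity, one gets $g_k^{(\mu)}(1) \ge g_k^{(\mu)}(\sigma) > \omega(\sigma)$. Now consider, as a function of $\delta \in [0,1/2]$, the two curves $h_1(\delta) = g_k^{(\mu)}(1-2\delta)$ and $h_2(\delta) = \omega(\sigma)+\delta$: $h_1$ is non-increasing, $h_2$ is strictly increasing, and at $\delta = 0$ we have $h_1(0) = g_k^{(\mu)}(1) > \omega(\sigma) = h_2(0)$. The minimum of $\max\{h_1,h_2\}$ over $[0,1/2]$ is attained either at a crossing $\delta^\ast$ where $h_1(\delta^\ast) = h_2(\delta^\ast)$, in which case
\[
  g_{k+1}^{(\mu)}(\sigma) \;=\; \omega(\sigma) + \delta^\ast \;>\; \omega(\sigma),
\]
or, if no crossing occurs in $[0,1/2]$ because $h_1$ stays strictly above $h_2$ throughout, at the endpoint $\delta = 1/2$, where
\[
  g_{k+1}^{(\mu)}(\sigma) \;=\; h_1(1/2) \;>\; h_2(1/2) \;=\; \omega(\sigma) + 1/2 \;>\; \omega(\sigma).
\]
Either way, $g_{k+1}^{(\mu)}(\sigma) > \omega(\sigma)$, closing the induction.

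There is no substantial obstacle here; the only thing that requires a bit of care is the side lemma on monotonicity (and continuity) of $g_k^{(\mu)}$ for general $\mu \le 2$, since the paper has so far stated this only for the $\mu = 2$ case. Once that is in place, the strict separation $g_k^{(\mu)}(1) > \omega(\sigma)$ at $\delta = 0$ is exactly what forces the optimal $\delta$ to be positive (or the endpoint value to exceed $\omega(\sigma)$), which is the heart of the argument.
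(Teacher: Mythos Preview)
Your proof is correct and takes essentially the same approach as the paper: induction on $k$, with the base case computed directly and the inductive step relying on monotonicity of $g_k^{(\mu)}(\cdot)$. The paper's inductive step is marginally more streamlined---it splits at $\delta = \tfrac{1-\sigma}{2}$ and shows that for \emph{every} $\delta$ the max exceeds $\omega(\sigma)$ (for $\delta < \tfrac{1-\sigma}{2}$ the $g_k^{(\mu)}$-term does, for $\delta \ge \tfrac{1-\sigma}{2}$ the $\omega(\sigma)+\delta$-term does), rather than locating the minimizer---but the content is the same, and like you the paper also relies on monotonicity of $g_k^{(\mu)}$, which it defers to the subsequent lemma.
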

\begin{remark}
If $\omega(\sigma) \ge 1 + \mu$,
then a naive direachability algorithm that computes all-pairs direachabilities in $O(m \cdot n) = O(n^{1 + \mu})$ time
is at least as fast as any algorithm based on rectangular matrix multiplication.
(The rectangular matrix multiplication requires $\Omega(n^{\omega(\sigma)})$ time.)
\end{remark}
\begin{proof}
The proof follows by induction on $k$.
First, observe that 
\[ g_0^{(\mu)} (\sigma) = \frac{1 + \mu }{3} + \frac{2}{3} \omega(\sigma) > \omega(\sigma),  \]
for $1 + \mu > \omega(\sigma)$.

Assume that for some $k \ge 0$, we have $ g_k^{(\mu)}(\sigma) > \omega(\sigma)$.
By setting $\delta = \frac{1 - \sigma}{2}$, we get
\begin{align}\label{eq:maxsparse}
g_{k+1}^{(\mu)}(\sigma) \le \max\{ \omega(\sigma) + \frac{1- \sigma}{2}, g_k^{(\mu)}(\sigma) \}.
\end{align}
For $\delta < \frac{1- \sigma}{2}$, we have $g_k^{(\mu)}(1 - 2\delta) \ge  g_k^{(\mu)}(\sigma) > \omega(\sigma)$.
The last inequality is by inductive hypothesis.
For $\delta \ge \frac{1-\sigma}{2}$, we have $\omega(\sigma) + \delta > \omega(\sigma)$.
In either case, $\max \{\omega(\sigma) + \delta, g_k^{(\mu)}(1 - 2\delta) \} > \omega(\sigma)$, 
and thus
\[
   g_{k+1}^{(\mu)} (\sigma) = \min_{\delta} \max \{ \omega(\sigma) + \delta, g_{k}^{(\mu)}(1 - 2\delta) \} > \omega(\sigma).    
\]
\end{proof}
We next prove a stronger inequality.
\begin{lemma}\label{lem:sparseStronger}
Assuming $\mu \ge \omega(\sigma) - \frac{3}{2} \sigma + \frac{1}{2}$, and $\sigma \ge \tilde \sigma$, 
for all $k = 0,1,2,\dots$, we have 
\begin{equation}\label{eq:strongerIneqSparse}
	g_k^{(\mu)}(\sigma) \ge \omega(\sigma) + \frac{1-\sigma}{2}.
\end{equation}
We also have 
\begin{equation}\label{eq:strongerIneqSparse2}
	g_0^{(\mu)}(\sigma) \ge g_1^{(\mu)}(\sigma) \ge \ldots \ge g_k^{(\mu)}(\sigma)\ge \ldots
\end{equation}
Strict inequalities in~\eqref{eq:strongerIneqSparse} and~\eqref{eq:strongerIneqSparse2}  hold when $\mu > \omega(\sigma) - \frac{3}{2} \sigma + \frac{1}{2}$.
Moreover, each $ g_{k}^{(\mu)} (\sigma)$ is a
 continuous and monotonically non-decreasing function in the entire range $[0,1]$.
\end{lemma}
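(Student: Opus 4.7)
The plan is to prove all four assertions of Lemma~\ref{lem:sparseStronger} simultaneously by induction on $k$, closely following the templates of Lemma~\ref{lem:gLowerBound} and Lemma~\ref{lem:gMonotonic} from the dense case. The key observation is that the hypothesis $\mu \ge \omega(\sigma) - \tfrac{3}{2}\sigma + \tfrac{1}{2}$ is exactly designed so that the base inequality $g_0^{(\mu)}(\sigma) \ge \omega(\sigma) + \tfrac{1-\sigma}{2}$ holds with equality when the hypothesis is tight.

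For the base case $k=0$, using $g_0^{(\mu)}(\sigma) = \tfrac{1+\mu+2\omega(\sigma)}{3}$, I would check that
\[
    \tfrac{1+\mu+2\omega(\sigma)}{3} \ge \omega(\sigma) + \tfrac{1-\sigma}{2}
\]
is equivalent (after clearing denominators) to $\mu \ge \omega(\sigma) - \tfrac{3}{2}\sigma + \tfrac{1}{2}$, which is exactly the assumed hypothesis; strict inequality on the hypothesis gives strict inequality here. Continuity and monotonicity of $g_0^{(\mu)}(\cdot)$ on $[0,1]$ follow from the corresponding properties of $\omega(\cdot)$ (as used in Lemma~\ref{lem:gMonotonic}).

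For the inductive step, I assume the claims hold for $g_k^{(\mu)}$ and analyse
\[
    g_{k+1}^{(\mu)}(\sigma) = \min_{\delta} \max\bigl\{g_k^{(\mu)}(1-2\delta),\; \omega(\sigma) + \delta \bigr\}.
\]
Evaluating at $\delta = \tfrac{1-\sigma}{2}$, the max becomes $\max\{g_k^{(\mu)}(\sigma), \omega(\sigma) + \tfrac{1-\sigma}{2}\}$; the induction hypothesis makes the first term the larger one, so this gives the monotonicity $g_{k+1}^{(\mu)}(\sigma) \le g_k^{(\mu)}(\sigma)$, establishing~\eqref{eq:strongerIneqSparse2}. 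For the lower bound~\eqref{eq:strongerIneqSparse}, I split on $\delta$: when $\delta \le \tfrac{1-\sigma}{2}$, monotonicity of $g_k^{(\mu)}(\cdot)$ and the induction hypothesis give $g_k^{(\mu)}(1-2\delta) \ge g_k^{(\mu)}(\sigma) \ge \omega(\sigma) + \tfrac{1-\sigma}{2}$; when $\delta > \tfrac{1-\sigma}{2}$, the second term already exceeds $\omega(\sigma) + \tfrac{1-\sigma}{2}$. Hence the max is always at least $\omega(\sigma) + \tfrac{1-\sigma}{2}$, and so is the min.

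For continuity and monotonicity in $\sigma$ of $g_{k+1}^{(\mu)}$, I would replay the argument from Lemma~\ref{lem:gMonotonic}: the existence of a minimising $\delta^\ast(\sigma)$ with $g_k^{(\mu)}(1-2\delta^\ast) = \omega(\sigma)+\delta^\ast$ follows from continuity of both curves and a boundary check (at $\delta=0$, $g_k^{(\mu)}(1) > \omega(1) \ge \omega(\sigma)$; at $\delta = 1/2$, an application of Claim~\ref{clm:gK0less} analogue bounds $g_k^{(\mu)}(0)$ above $\omega(\sigma)+1/2$). Then monotonicity in $\sigma$ follows directly from monotonicity of $\omega(\sigma)+\delta$ in $\sigma$, and continuity follows by the same $\epsilon$-$\xi$ argument as in~\eqref{eq:gDiff}. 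For the strict part: when the hypothesis is strict, the base case gives $g_0^{(\mu)}(\sigma) > \omega(\sigma) + \tfrac{1-\sigma}{2}$; by continuity of $g_k^{(\mu)}(\cdot)$ we can perturb $\delta$ slightly above $\tfrac{1-\sigma}{2}$ so that $g_k^{(\mu)}(1-2\delta) < g_k^{(\mu)}(\sigma)$ while still remaining above $\omega(\sigma)+\delta$, yielding $g_{k+1}^{(\mu)}(\sigma) < g_k^{(\mu)}(\sigma)$ and simultaneously $g_{k+1}^{(\mu)}(\sigma) > \omega(\sigma) + \tfrac{1-\sigma}{2}$.

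The main obstacle is not any single inequality but rather the bookkeeping of carrying continuity, monotonicity in $\sigma$, the lower bound, and the monotonicity in $k$ together through the induction; the proof of Claim~\ref{clm:gK0less} analogue needs to be re-done here with the sparse base $g_0^{(\mu)}(0) = \tfrac{1+\mu+2\omega(0)}{3} = \tfrac{5+\mu}{3}$, which must be shown to stay below $\omega(\sigma) + 1/2$ for the range of interest; this uses $\mu \le 2$ and $\omega(\sigma) \ge 2$.
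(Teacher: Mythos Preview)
Your proposal is correct and follows essentially the same approach as the paper: induction on $k$, with the base case reducing to the algebraic equivalence between the hypothesis on $\mu$ and $g_0^{(\mu)}(\sigma) \ge \omega(\sigma) + \tfrac{1-\sigma}{2}$, the inductive step handled by evaluating at $\delta = \tfrac{1-\sigma}{2}$ and splitting on $\delta$ for the lower bound, and continuity/monotonicity in $\sigma$ deferred to the argument of Lemma~\ref{lem:gMonotonic}. Your explicit identification of the needed Claim~\ref{clm:gK0less} analogue (that $g_k^{(\mu)}(0) \le \tfrac{5+\mu}{3} \le \tfrac{7}{3} \le \omega(\sigma) + \tfrac{1}{2}$) is a detail the paper leaves implicit when it says the argument is ``identical''.
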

\begin{remark}
Recall that the expression $\omega(\sigma) - \frac{3}{2} \sigma + \frac{1}{2}$ is the left boundary of the feasibility interval 
$I_{\sigma}$. This is the interval in which
$g_0^{(\mu)}(\sigma)$ improves over the existing bound provided by naive direachability algorithm (see Inequality~\eqref{eq:muLower}).
\end{remark}
\begin{proof}
The proof is by induction on $k$. The induction base is $k =0$.\\
For $\delta = \frac{1 - \sigma}{2}$, we have $\omega(\sigma) + \delta = \omega(\sigma) + \frac{1-\sigma}{2}$, 
and 
$$\frac{1 + \mu}{3} + \frac{2}{3} \omega(1 - 2\delta) = \frac{1 + \mu}{3} +\frac{2}{3} \omega(\sigma) = g_0^{(\mu)} (\sigma).$$
Hence we need to argue that
$\omega(\sigma) + \frac{1 - \sigma}{2} \le \frac{1 + \mu}{3} + \frac{2}{3} \omega(\sigma)$, i.e., 
\begin{align}\label{eq:omegaSigmaUpper}
\omega (\sigma) \le \mu + \frac{3}{2} \sigma - \frac{1}{2}.
\end{align}
This is exactly the assumption of the lemma.
Note also that strict inequality in~\eqref{eq:omegaSigmaUpper} implies that
$g_0^{(\mu)} (\sigma) > \omega(\sigma) + \frac{1- \sigma}{2}$.
(For $\mu = 2$, the right-hand-side of~\eqref{eq:omegaSigmaUpper} is $\frac{3}{2}(1 + \sigma)$.
The inequality $\omega(\sigma) \le \frac{3}{2} (1 + \sigma)$ holds for $\sigma \ge \tilde \sigma$. 
See Inequality~\eqref{eq:threshold} in Section~\ref{sec:basicSparse}.)
Note also that since $\omega(\cdot)$ is continuous and monotonically non-decreasing in the range $[0,1]$, so is $g_0^{(\mu)}(\sigma) = \frac{1 + \mu}{3} + \frac{2}{3}\omega(\sigma)$ as well.\\
Similarly, since $\omega(\cdot)$ is monotonically increasing in the range $[\tilde \sigma, 1]$, this is the case with $g_0^{(\mu)}(\sigma)$ too.
For the induction step we again use $\delta = \frac{1-\sigma}{2}$. It follows that
\begin{equation}\label{eq:indHyp}
	\begin{aligned}
    	 	g_{k+1}^{(\mu)}(\sigma) &= \min_{\delta} \max \{g_k^{(\mu)}(1- 2\delta), \omega (\sigma) + \delta \} \\
     		&\le \max \{g_k^{(\mu)} (\sigma), \omega(\sigma) + \frac{1- \sigma}{2}\} = g_k^{(\mu)} (\sigma) \\
	\end{aligned}
\end{equation}
(The last equality is by induction hypothesis.)
Hence, 
\begin{equation}\label{eq:sparseSeqMonotone}
  g_0^{(\mu)}(\sigma) \ge g_1^{(\mu)}(\sigma) \ge \dots, g_k^{(\mu)}(\sigma) \ge \ldots,
\end{equation}
and strict inequality holds if strict inequality holds in~\eqref{eq:omegaSigmaUpper}.

Also, for all $\delta < \frac{1- \sigma}{2}$, by induction hypothesis, $g_k^{(\mu)} (\cdot)$ is monotonically increasing.
Thus, we have
\[
	g_k^{(\mu)}(1 - 2\delta) > g_k^{(\mu)}(\sigma) \ge \omega(\sigma) + \frac{1 - \sigma }{2} > \omega(\sigma) + \delta.
\]
If $g_k^{(\mu)} (\sigma) = \omega(\sigma) + \frac{1-\sigma}{2}$, then since $g_k^{(\mu)} (1 - 2\delta)$ is a monotonically
non-increasing function of $\delta$ and $\omega(\sigma) + \delta$ is a monotonically increasing function of $\delta$,
it follows that 
\[g_{k + 1}^{(\mu)} (\sigma) = \min_{\delta} \max \{ \omega(\sigma) + \delta, g_{k }^{(\mu)}(1 - 2\delta)  \} = \omega(\sigma) + \frac{1-\sigma}{2}. \]
Otherwise, if $g_k^{(\mu)}(\sigma) > \omega(\sigma) + \frac{1-\sigma}{2}$, then there exists $\delta^* > \frac{1- \sigma}{2}$ such that
$g_{k + 1}^{(\mu)} (\sigma) = \omega(\sigma) + \delta^* = g_k(1-2\delta^*) > \omega(\sigma) + \frac{1- \sigma}{2}$, proving that
$g_{k + 1}^{(\mu)} (\sigma) \ge \omega(\sigma) + \frac{1 - \sigma}{2}$ in both cases.
Moreover, this argument implies that if inductively, $g_k^{(\mu)}(\sigma) > \omega(\sigma) + \frac{1- \sigma}{2}$, then 
$g_{k + 1}^{(\mu)} (\sigma) > \omega(\sigma) + \frac{1-\sigma}{2}$ as well.

Finally, the inductive argument that shows that $g_{k + 1}^{(\mu)} (\cdot)$ is continuous and monotonically non-decreasing function of 
$\sigma$ in the range $[0,1]$ 
 is identical to the proof that it is the case for $g_{k+1}(\sigma) = g_{k+1}^{(\mu =2)}(\sigma)$, given in the proof
of Lemma~\ref{lem:gMonotonic}.
\end{proof}


We have seen that for any $\tilde \sigma < \sigma < 1$, there exists a non-empty interval $I^{(0)}_\sigma$ of values of $\mu$ such that for all $\mu \in I^{(0)}_\sigma$ we have
\[
     g_0^{(\mu)} (\sigma) < \min \{ \mu + \sigma, \omega(1) \}.
 \]
Since, for any $\sigma$ in this range we have 
\[
g_0^{(\mu)} (\sigma ) \ge g_1^{(\mu)} (\sigma) \ge \ldots > g_k^{(\mu)} (\sigma) > \ldots,
\]
it follows that these intervals for $g_k^{(\mu)}(\sigma)$ become only wider as $k$ grows, i.e., 

\[
     I^{(0)}_\sigma \subseteq  I^{(1)}_\sigma \subseteq \ldots  I^{(k)}_\sigma \subseteq \ldots.
\]
In fact, the left boundaries of all these intervals all agree. as they correspond to the same inequality 
\begin{align}\label{eq: muLowerSparse}
\mu > \frac{1}{2} + \omega(\sigma) - \frac{3}{2} \sigma.
\end{align}
(To see it, observe that assuming inductively that $g_k^{(\mu)}(\sigma) < \mu + \sigma$ for $\mu > \omega(\sigma) + \frac{1}{2} - \frac{3}{2}\sigma$, implies that $g_{k+1}^{(\mu)}(\sigma) < g_k^{(\mu)}(\sigma) < \mu + \sigma$ under the same condition as well. See Lemma~\ref{lem:sparseStronger} and Inequality~\eqref{eq:sparseSeqMonotone}.)
Recall that both inequalities $g_0^{(\mu)} (\sigma) < \mu + \sigma$ and $\omega(\sigma) + \frac{1 - \sigma}{2} < \mu + \sigma$ both lead to Inequality~\eqref{eq: muLowerSparse}.
However, as $k$ grows, the right boundaries of these intervals tend to $1$.

Recall that the right boundaries of these intervals are determined by the inequalities $g_k^{(\mu)} (\sigma) < \omega(1)$.
These inequalities hold for all $\tilde \sigma < \sigma < \sigma_k$, and by Lemma~\ref{lem:sigmaStar}, $\lim_{t \to \infty} \sigma_t = 1$.

Note also that for $\mu = \omega(\sigma) - \frac{3}{2} \sigma + \frac{1}{2}$, all these functions agree.
\begin{lemma}\label{recursionSparseLast}
For $\mu = \omega(\sigma) - \tiny{\frac{3}{2}} \sigma + \tiny{\frac{1}{2}}$, we have
	\begin{equation}
    		 g_0^{(\mu)} (\sigma) = g_1^{(\mu)} (\sigma) = \ldots = g_k^{(\mu)} (\sigma) = \ldots = \omega(\sigma) + \frac{1 - \sigma}{2}.
     	\end{equation}
\end{lemma}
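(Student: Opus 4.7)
The plan is to sandwich $g_k^{(\mu)}(\sigma)$ between $\omega(\sigma)+\tfrac{1-\sigma}{2}$ and $g_0^{(\mu)}(\sigma)$, and then show these two endpoints coincide for our particular choice of $\mu$. The two ingredients needed are (i) a direct algebraic evaluation of $g_0^{(\mu)}(\sigma)$ at $\mu=\omega(\sigma)-\tfrac{3}{2}\sigma+\tfrac{1}{2}$, and (ii) Lemma~\ref{lem:sparseStronger}, which already gives both the monotone nonincrease of the sequence $(g_k^{(\mu)}(\sigma))_{k\ge 0}$ and the uniform lower bound $g_k^{(\mu)}(\sigma)\ge \omega(\sigma)+\tfrac{1-\sigma}{2}$ under the hypothesis $\mu\ge \omega(\sigma)-\tfrac{3}{2}\sigma+\tfrac{1}{2}$ (which is met with equality here).

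First I would carry out the base-case computation. Plugging $\mu=\omega(\sigma)-\tfrac{3}{2}\sigma+\tfrac{1}{2}$ into $g_0^{(\mu)}(\sigma)=\tfrac{1+\mu}{3}+\tfrac{2}{3}\omega(\sigma)$ yields $\tfrac{1}{3}\bigl(\tfrac{3}{2}+\omega(\sigma)-\tfrac{3}{2}\sigma\bigr)+\tfrac{2}{3}\omega(\sigma)=\omega(\sigma)+\tfrac{1-\sigma}{2}$, so the equality holds at $k=0$.

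Next I would invoke Lemma~\ref{lem:sparseStronger}. Since our $\mu$ satisfies the (non-strict) hypothesis of that lemma, we obtain the chain $g_0^{(\mu)}(\sigma)\ge g_1^{(\mu)}(\sigma)\ge \cdots \ge g_k^{(\mu)}(\sigma)\ge \cdots$ together with the uniform lower bound $g_k^{(\mu)}(\sigma)\ge \omega(\sigma)+\tfrac{1-\sigma}{2}$ for every $k\ge 0$. Combining with the base-case computation gives
\[
\omega(\sigma)+\frac{1-\sigma}{2}\;=\;g_0^{(\mu)}(\sigma)\;\ge\;g_k^{(\mu)}(\sigma)\;\ge\;\omega(\sigma)+\frac{1-\sigma}{2},
\]
so equality holds throughout, as claimed.

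There is essentially no obstacle: the base-case identity is a one-line algebraic check and the collapse of the sequence to a common value is an immediate consequence of a previously proven monotonicity-plus-lower-bound result applied at the boundary of its hypothesis. The only small thing to note is that the strict-inequality clause in Lemma~\ref{lem:sparseStronger} does not apply here (since $\mu$ meets the hypothesis only with equality), so the monotonicity statement used is the weak one, which is exactly what the sandwich requires.
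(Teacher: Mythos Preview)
Your proof is correct and essentially the same as the paper's: both compute $g_0^{(\mu)}(\sigma)=\omega(\sigma)+\tfrac{1-\sigma}{2}$ directly and then invoke Lemma~\ref{lem:sparseStronger} for the lower bound. The only difference is that the paper re-derives the upper bound on $g_{k+1}^{(\mu)}(\sigma)$ by an explicit inductive step (plugging $\delta=\tfrac{1-\sigma}{2}$ into the recursion), whereas you simply quote the monotone chain $g_0^{(\mu)}(\sigma)\ge g_1^{(\mu)}(\sigma)\ge\cdots$ already provided by Lemma~\ref{lem:sparseStronger}; your route is a slight streamlining of the same argument.
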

\begin{proof}

\begin{align*}
	\begin{aligned}
   g_0^{(\mu)} (\sigma) & = \frac{1 + \mu}{3} + \frac{2}{3} \omega(\sigma) \\
                                     & = \frac{ 1 + (\omega(\sigma) - 3/2 \cdot \sigma + 1/2)}{3} + \frac{2}{3} \omega(\sigma) \\
                                    & = \omega(\sigma) + \frac{1 - \sigma}{2}.
   \end{aligned}
\end{align*}

Also, assuming inductively that $g_k^{(\mu)}(\sigma) = \omega(\sigma) + \frac{1 - \sigma}{2}$, we derive that
\begin{align*}
	\begin{aligned}
   g_{k + 1}^{(\mu)} (\sigma) &= \min_{\delta} \max \{\omega(\sigma) + \delta, g_k^{(\mu)} (1 - 2\delta) \} \\
                                             & \le \max \{\omega(\sigma) + \frac{1 - \sigma}{2}, g_{k}^{(\mu)} (\sigma)  \} \\
                                             & =  \omega(\sigma) + \frac{1 - \sigma}{2}.   
   \end{aligned}
\end{align*}
Also, by Lemma~\ref{lem:sparseStronger} for $\sigma \ge \tilde \sigma$, we have $ g_{k + 1}^{(\mu)} (\sigma) \ge \omega(\sigma) + \frac{1 - \sigma}{2}$, and thus
$g_{k + 1}^{(\mu)} (\sigma) = \omega(\sigma) + \frac{1-\sigma}{2}$.
\end{proof}

\bibliographystyle{alphaurl}
\newcommand{\etalchar}[1]{$^{#1}$}

\end{document}